\newcommand\tcapfig[1]{\captionsetup{position=top, font=normalsize, labelfont=bf, textfont=normalfont, justification=centering, margin=0mm, aboveskip=2mm, belowskip=0mm, labelsep=colon, singlelinecheck=false}\caption{#1}}
\newcommand\bnotefig[1]{\captionsetup{position=bottom, font=footnotesize,  textfont=normalfont, margin=1mm, skip=2mm, justification=justified, singlelinecheck=false}\caption*{#1}}
\setlist{noitemsep} 
\def\munderbar#1{\underline{\sbox\tw@{$#1$}\dp\tw@\z@\box\tw@}}
\newtheorem{theorem}{Theorem}[section]
\newtheorem{definition}[theorem]{Definition}
\newtheorem{lemma}[theorem]{Lemma}
\newtheorem{proposition}[theorem]{Proposition}
\newtheorem{assumption}[theorem]{Assumption}
\newtheorem{example}[theorem]{Example}
\newcommand{\be}{\begin{equation}}
\newcommand{\ee}{\end{equation}}
\newcommand{\bea}{\begin{equation*}\begin{aligned}}
\newcommand{\eea}{\end{aligned}\end{equation*}}
\newcommand{\ds}{\displaystyle}
\newcommand{\R}{\mathbb{R}}
\newcommand{\Min}{\min\limits_}
\newcommand{\Sup}{\sup\limits_}
\newcommand{\Tr}[1]{\Trace \big[ #1 \big]}
\newcommand{\wh}{\widehat}
\newcommand{\mc}{\mathcal}
\newcommand{\mbb}{\mathbb}
\newcommand{\PP}{\mbb P}
\DeclareMathOperator{\Trace}{Tr}
\DeclareMathOperator{\diag}{diag}
\DeclareMathOperator{\st}{s.t.}
\newcommand{\PD}{\mathbb{S}_{++}} 
\newcommand{\Let}{\triangleq}
\newcommand{\opt}{^\star}
\newcommand{\eps}{\varepsilon}
\newcommand{\Wass}{\mathds{W}}
\newcommand{\EE}{\mathds{E}}
\newcommand{\half}{\frac{1}{2}}
\newcommand{\cov}{\Sigma}
\newcommand{\covsa}{\wh \cov}
\newcommand{\m}{\mu}
\newcommand{\msa}{\wh \m}
\newcommand{\KL}{\mathrm{KL}}
\newcommand{\N}{\mc N}
\newcommand{\indep}{\perp \!\!\! \perp}
\begin{document}

\title{Bayesian Imputation with Optimal Look-Ahead-Bias and Variance Tradeoff\thanks{\scriptsize We gratefully acknowledge the financial support and feedback of MSCI and its data science team. Material in this paper is based upon work supported by the Air Force Office of Scientific Research under award number FA9550-20-1-0397. Additional support is gratefully acknowledged from NSF grants 1915967, 1820942, 1838676, and also from the China Merchant Bank.}}
\date{\today}
\author{Jose Blanchet \thanks{\scriptsize Stanford University, Department of Management Science \& Engineering, Email: jose.blanchet@stanford.edu.}
\and
Fernando Hernandez  \thanks{\scriptsize Stanford University, Department of Management Science \& Engineering, Email: fhernands01@gmail.com}
\and
Viet Anh Nguyen   \thanks{\scriptsize Chinese University of Hong Kong, Email: nguyen@se.cuhk.edu.hk}
\and
Markus Pelger\thanks{\scriptsize Stanford University, Department of Management Science \& Engineering, Email: mpelger@stanford.edu.}
\and
Xuhui Zhang\thanks{\scriptsize Stanford University, Department of Management Science \& Engineering, Email: xuhui.zhang@stanford.edu.}}

\onehalfspacing

\begin{titlepage}
\maketitle
\thispagestyle{empty}

\begin{abstract}
Missing time-series data is a prevalent problem in many prescriptive analytics models in operations management, healthcare and finance. Imputation methods for time-series data are usually applied to the full panel data with the purpose of training a prescriptive model for a downstream out-of-sample task. For example, the imputation of missing asset returns may be applied before estimating an optimal portfolio allocation. However, this practice can result in a look-ahead-bias in the future performance of the downstream task, and there is an inherent trade-off between the look-ahead-bias of using the \textit{entire} data set for imputation and the larger variance of using only the \textit{training} portion of the data set for imputation. By connecting layers of information revealed in time, we propose a Bayesian consensus posterior that fuses an arbitrary number of posteriors to optimize the variance and look-ahead-bias trade-off in the imputation. We derive tractable two-step optimization procedures for finding the optimal consensus posterior, with Kullback-Leibler divergence and Wasserstein distance as the dissimilarity measure between posterior distributions. We demonstrate in simulations and in an empirical study the benefit of our imputation mechanism for portfolio allocation with missing returns.
\end{abstract}

\vspace{1cm}

\noindent\textbf{Keywords:} Imputation, missing data, look-ahead-bias, Bayesian consensus posterior, Kullback-Leibler divergence, Wasserstein distance, portfolio optimization

\noindent\textbf{JEL classification:} C02, C11, C22, G11
\end{titlepage}

\section{Introduction}
\label{sec:intro}

Missing time-series data is a prevalent problem in finance, healthcare and operations management. This problem is increasingly critical in the big-data era due to the vast amount of data collected nowadays and the reliance on data-driven analytics. Missing values in financial data may arise for a variety of reasons, including, for example, infrequent trading due to illiquidity, mixed frequency reporting, or improper data collection and storage. Similarly, time-series data in operations management may contain missing values due to sensor failures, human errors or lack of information. A standard and widely-adopted approach is to impute missing data, which has sparked a growing body of literature \citep{ref:little2002statistical,ref:van2012flexible} with over 150 practical implementations available \citep{ref:mayer2019rmisstastic}. How a decision-maker implements a missing value imputation may substantially impact any downstream task, most importantly, in risk management. Nevertheless, these implications of imputation for a downstream out-of-sample task have been neglected so far in the literature. 

To demonstrate the aforementioned implications, we use a leading example of the imputation of missing asset returns prior to solving a portfolio optimization model. Typically, a machine learning or portfolio optimization downstream task splits the given panel data into a train-validation-test set according to the time-ordering of underlying observations. Training corresponds to the left-most block portion of the panel, validation corresponds to a block in the middle of the panel, and testing is conducted on the right-end block of the panel. The ``globally observed data'' is the entire available panel data set representing all the data collected during the entire time horizon. Meanwhile, ``locally observed data'' refers to data available up to a specific time in the past (e.g.,~the timestamp corresponding to the last observation of the training section). Standard imputation procedures in practice usually apply a chosen imputation tool using the globally observed data. While being natural for reducing error in the imputation, ignoring the time order of a future downstream task poses a serious problem for portfolio benchmarking. In fact, a global imputation of the missing values is likely to create a spurious signal due to the temporal correlation between the training and the testing portion of the data. As a consequence, the imputed training portion can (and often will) pick up the spurious signal created by the imputation procedure. As portfolio selection procedures are carefully designed to exploit subtle signals that can lead to favorable future investment outcomes, using an globally imputation method can give a false sense of good performance for the portfolio, simply because it has exploited the unwanted spurious correlation. For example, if the mean return of an asset is large on the testing portion, a global imputation might increase the estimated mean on the training data and bias the stock selection. This is an example of an overfitting phenomenon created by ``looking into the future'' which investors have to mitigate. In summary, even if the portfolio analysis is done ``properly out-of-sample'' on the full panel of globally imputed values, the look-ahead-bias from the imputation can render the results invalid.

Our paper provides a new conceptual perspective by connecting the imputation with a downstream out-of-sample task. There exists an inherent trade-off between the look-ahead-bias of using the full data set for imputation and the larger variance in using only the training portion of the data. As far as we know, this paper is the first to introduce a systematic approach to studying this trade-off. Notice that the look-ahead-bias increases under imputation using globally observed data, and the variance increases under imputation using only locally observed data. To optimize this trade-off, we propose a novel Bayesian solution that fuses the use of globally observed data (i.e., data collected during the entire time horizon) and locally observed data (i.e., data collected up to a fixed time in the past). More specifically, we connect layers of information revealed in time, through a Bayesian consensus posterior that fuses an arbitrary number of posteriors to optimally control the variance and look-ahead-bias trade-off in the imputation.

Statistical and machine learning methods in finance and operations management have usually been very careful in avoiding a look-ahead-bias in the downstream task for a given panel data without missing values. For example, \cite{ref:Bertsimas2022}, \cite{Gu_etal2019ML} and \cite{Chen_etal2019DL} train their machine learning methods on the training data in the first part of the sample to estimate optimal policies and evaluate their performance on the out-of-sample test data at the end of the sample. They follow one common approach in the literature: they use only information from the training data to impute missing values. Their results could be improved by better-imputed values with a small look-ahead-bias. In a financial setting, \cite{Giglio2021} impute a panel of asset returns with matrix completion methods applied to the full panel of asset returns. Hence, their subsequent out-of-sample analysis of abnormal asset returns has an implicit in-sample component as future returns have been used in the imputation. These two cases of conservative imputation (without look-ahead-bias and with large variance) and  full data imputation (with look-ahead-bias and small variance) are both special cases of our Bayesian approach. Each of them corresponds to a different posterior. Our framework allows us to obtain a consensus posterior that is optimal in the sense of this trade-off. Hence, using at least some future information can be optimal if it has a negligible effect on the look-ahead-bias. This becomes particularly relevant if the amount of missing values is large.

Any reasonable formulation to optimize the trade-off between look-ahead-bias and variance should consider at least three ingredients. First, a representative downstream task that captures stylized features of broad interest. Second, a reasonable method to fuse or combine imputation estimators that are obtained using increasing layers of time information. Finally, a link to connect the first and second ingredients to optimally attain the desired bias-variance trade-off. The ultimate objective of this paper is to systematically produce these three ingredients in a meaningful, tractable and flexible way.

To this end, we adopt a Bayesian imputation framework: this framework allows us to precisely define look-ahead-bias as deviation from a specific performance related to a reasonable downstream task, which can be considered as optimizing the expected reward, or returns. The expected reward is computed with the baseline posterior distribution obtained using only locally observed information. As it is standard in Bayesian imputation ~\citep{ref:little2002statistical,ref:rubin1987multiple}, we impute by sampling (multiple times) from the posterior distribution given the observed entries by creating multiple versions of the imputed data set which can be used to estimate the optimal posterior return. Our theoretical results focus on the fundamental case of estimating mean returns in order to clearly illustrate the novel conceptual ideas, but can be generalized to more general risk management objectives. 

The second ingredient involves fusing inferences arising from different layers of temporal information. Our Bayesian approach is convenient because we can optimally combine posterior distributions using a variational approach, namely, the weighted barycenter distribution. A barycenter is the result of an infinite dimensional optimization problem that finds a consensus distribution with minimal combined discrepancy among a non-parametric family of distributions. We use the (forward and backward) Kullback-Leibler divergences and the Wasserstein distance as discrepancy criteria to obtain three different ways to fuse posteriors, each posterior reflect a different layer of available information. These optimally fused distributions are indexed by weights corresponding to the relative importance given to each posterior to be combined. It is important to note that different fusing methods could be combined seamlessly with different downstream tasks, and the choice of fusing method is not tied to any particular choice of the downstream problem.

Finally, the third ingredient in our Bayesian method minimizes the variance of the optimal decision policy subjecting to an expected return criterion. The optimization is performed over all possible fusing weights under a given look-ahead-bias constraint. The constraint parameter can be chosen via cross-validation or by minimizing the estimated mean squared error over all consensus posteriors as a function of the constraint parameter.

In sum, the overall procedure gives the optimal utilization of different information layers in time, based on a user-defined bias-variance trade-off relative to the chosen downstream task, and uses an optimal consensus mechanism to fuse the different layers of information.

We illustrate our novel conceptual framework with the data generating distribution $Z_t=\theta+\epsilon_t$, where $\theta$ is a mean parameter and $\epsilon_t$ are the noise terms. This setup allows us to study a core problem in time-series data, namely the estimation of mean vectors for relatively short time horizons.\footnote{In application domain such as finance, assuming time-stationarity (in particular a constant mean) over moderate time horizons is a common practice, see~\citep{ref:tsay2010analysis}.} As all our empirics are pursued on short local windows, the time variation in means and variances are of second order. More generally, with higher frequency data the estimation error of the mean dominates the estimation error of the variance. Hence, the focus is on the accurate mean estimation. The noise $\epsilon_t$ can in general follow an $AR(p)$ process. A suitable joint prior on $\theta$ is imposed, after which the joint posterior conditional on observed entries can in principle be calculated. We aggregate the different posteriors corresponding to different layers of information using the variational approach prescribed by a set of weights. This set of weights is further optimized to find the optimal consensus posterior according to a bias-variance criterion. This additional step of optimization over importance weights differentiates our work from the usual Bayesian fusion setup, as most of the literature only considers a given choice of the weights~\citep{ref:dai2021bayesian,ref:claici2020model}.

 To make the two-step optimization procedure tractable, we assume that the noise $\epsilon_t$ are i.i.d.~Gaussian with a known covariance matrix and use conjugate prior distributions. This assumption makes the posterior of the missing entries a conditional Gaussian with mean sampled from the posterior of $\theta$, and thus we consider only aggregating the different posteriors of $\theta$. 
 
 We provide tractable reformulations of the two-step optimization procedure using the forward Kullback-Leibler (KL) divergence and Wasserstein distance, when imposing an additional simultaneous diagonalizability condition on the covariance matrices of the individual posteriors. While our reformulation using backward KL divergence does not require simultaneous diagonalizability, the resulting optimization problem is non-convex and may not be scalable. We thus advocate using the forward KL divergence and Wasserstein distance over the backward KL divergence, unless the simultaneous diagonalizability condition is seriously violated for the type of data at hand. 

The contributions of this paper can be summarized as follows:
\begin{enumerate}
    \item We identify and formalize a novel problem that has not been studied yet, namely, controlling the look-ahead-bias from imputation for a downstream task. We propose a novel conceptual framework to formalize and quantify the look-ahead-bias in the context of panel data imputation. We combine different layers of additional future information used in the imputation via optimal non-parametric consensus mechanisms among Bayesian posterior distributions to provide an optimal trade-off between the look-ahead-bias and the variance. 
    \item We illustrate the conceptual framework with a data-generating distribution with a constant mean vector over time. We propose a two-step optimization procedure for finding the optimal consensus posterior. The first step of optimization deals with finding the best possible (non-parametric) distribution that summarizes the different posterior models given a set of importance weights. The second step of optimization finds the optimal weights based on a bias-variance criterion that minimizes the variance while explicitly controlling the degree of the look-ahead-bias.
    \item To make the two-step procedures tractable, we further assume that the noise is i.i.d.~multivariate Gaussian with a known covariance matrix and use a conjugate prior distribution. Moreover, we discuss using the (forward and backward) KL-divergence and Wasserstein distance as the measure of dissimilarity between posterior distributions. We advise against the backward KL divergence unless in situations where the simultaneous diagonalizability condition on the covariance matrices of the individual posteriors is drastically violated. We use the simple model setup to provide an analytical and intuitive solution. We show in our empirical analysis that our solution can be easily applied and leads to improvements even if the underlying assumptions are violated. 
\end{enumerate}

 Finally, we empirically validate the benefits of our methodology with the downstream task of portfolio optimization. We show that the use of an optimal consensus posterior results in better performance of the regret of out-of-sample testing returns measured by what we call ``expected conditional mean squared error'', compared to the use of naive extreme posteriors (corresponding to focusing either using only the training part of the data for imputation or the full panel data for imputation). We demonstrate the power of our method on both simulated and real financial data, and under different practically relevant missing patterns including missing at random, dependent block missing, and missing by values.  
 
 The distinction between globally and locally observed data is important in many time-series data, for example in electronic health records and in demand data~\citep{ref:vulcano2012,ref:musalem2010}. More importantly, it generalizes beyond the time-series setting to data sets with a network structure, where each sample corresponds to a vertex in the network. The globally observed data is the entire collection of vertices in the network, while the locally observed data is a subset of the vertices defined by a neighborhood in this network. One may use the locally observed data as the training data set for a relevant downstream task while using the globally observed data for testing. The vertices that are inside of the neighborhood could potentially have a causal effect (with respect to a downstream task) on vertices outside, but the causality in the other direction is not permitted. Such a framework can be used as an approximate model for, e.g, the prediction of crime rates in different parts of a state, of the radiation level of surrounding cities after a damage to a nuclear power plant, or of the earthquake magnitude in a geographic region. If missing data is imputed globally using all information, an out-of-sample performance of the downstream task will inherently suffer from overfitting as the imputation extracts undesirable information. The imputation using only locally observed data again fails to utilize all valuable information to improve the variance.

This paper is organized as follows. In Section~\ref{sec:literature} we discuss the related literature on imputation methods. In Section~\ref{sec:bumi} we extend the bias-variance multiple imputation framework to more than two individual posteriors along with two basic modules of the framework: the posterior generator and the multiple imputation sampler. Section~\ref{sec:consensus} discusses the consensus mechanism module which complements our framework. Subsequently, in Section~\ref{sec:forward} and~\ref{sec:backward}, we use the forward and the backward KL divergence as a measure of dissimilarity between distributions and derive a reformulation of the consensus mechanism, while in Section~\ref{sec:Wass} we use the Wasserstein distance. Section~\ref{sec:numerics} reports the simulation and empirical results. The Appendix collects the proofs and additional empirical results.

\section{Related Literature}
\label{sec:literature}

Methods for missing data imputation can be classified into two categories: single and multiple imputation. Single imputation refers to the generation and usage of a specific number (i.e., a best guess) in place of each missing value. Standard non-parametric methods for single imputation include $k$-nearest neighbor, mean/median imputation, smoothing, interpolation, and splines \citep{ref:kreindler2006effect}. Alternatively, single imputation can also be achieved using a parametric approach such as the popular EM algorithm of \cite{ref:dempster1977em}, Bayesian networks \citep{ref:lan2020multivariable} or $t$-copulas and transformed $t$-mixture models \citep{ref:craig2012missing}. Single imputation also includes the use of iterative methods in multivariate statistics literature to deal with missing values encountered in classical tasks such
as correspondence analysis (\cite{ref:leeuw1988correspondence}), multiple correspondence analysis, and multivariate analysis of mixed data sets \citep{ref:audigier2016principal}. Recently, there is a surge in the application of recurrent neural networks for imputation, as well as generative adversarial networks.\footnote{Examples of recurrent networks for imputation include \cite{ref:Che2018recurrent,ref:choi2016doctor,ref:zachary2016directly,ref:cao2018brits}, while \cite{ref:Yoon2018GAIN} is an example of the GAN approach.}

A closely related field of single imputation of panel data is matrix completion and matrix factorization.\footnote{This includes the work of \cite{ref:Cai2010singular,ref:candes2009eaxct,ref:candes2010power,ref:mazumder2010spectral,ref:keshavan2010matrix}.} \cite{ref:Chen2019inference} propose a de-biased estimator and an optimal construction of confidence intervals/regions for the missing entries and the low-rank factors. \cite{ref:xiong2019large} impute missing data with a latent factor model by applying principal component analysis to an adjusted covariance matrix estimated from partially observed panel data. They provide an asymptotic inferential theory and allow for general missing patterns, which can also depend on the underlying factor structure. This is important as \cite{Bryzgalova_etal2022} show in their comprehensive empirical analysis, that missing patterns of firm-specific and asset pricing information are impacted by the cross- and time-series dependency structure in the data. \cite{ref:bai2020matrix} suggest an imputation procedure that uses factors estimated from a tall block along with the re-rotated loadings estimated from a wide block. 


Multiple imputation generalizes the single imputation procedures in that the missing entries are filled-in with multiple guesses instead of one single guess or estimate, accounting for the uncertainty involved in imputation. The use of multiple imputation results in $m$ ($m>2$) completed data sets, each is then processed and analyzed with whatever statistical method can be used as if there is no missing data. Rubin's rule~\citep{ref:rubin1987multiple} is often used to combine the quantity of interest computed from each completed data set, which averages over the point estimates and uses a slightly more involved expression for the standard errors.




The central idea underpinning the multiple imputation is the Bayesian framework. This framework samples the missing values from a posterior predictive distribution of the missing entries given the observed data. Multiple imputation is first developed for non-responses in survey sampling~\citep{ref:rubin1987multiple}. Since then, it has been extended to time-series data~\citep{ref:Honaker2010what,ref:brendan2016multiple}, where a key new element is to preserve longitudinal consistency in imputation. More recently,~\cite{ref:Honaker2010what} uses smooth basis functions to increase the smoothness of the imputation values while~\cite{ref:brendan2016multiple} uses a gap-filling sequence of imputation for categorical time-series data and produces smooth patterns of transitions.\footnote{The flexibility and predictive performance of multiple imputation method have been successfully demonstrated in a variety of data sets, such as Industry and Occupation Codes~\citep{ref:clifford1991multiple}, GDP in African nation~\citep{ref:Honaker2010what}, and concentrations of pollutants in the arctic~\citep{ref:hopke2001multiple}. }

This paper differs from standard multiple imputation literature in that we are interested in achieving an optimal trade-off between look-ahead-bias and variance by unifying the posteriors corresponding to different layers of information into a single ``consensus'' posterior, while most of the literature in multiple imputation would advocate using the full panel data. The unification of different posteriors into a single coherent inference, or (Bayesian) ``fusion'', has recently gained considerable interest in the research community~\citep{ref:dai2021bayesian,ref:claici2020model}, but often the different posteriors arise from the context of using paralleled computing machines~\citep{ref:steven2016bayes}, multiple sensors in multitarget tracking~\citep{ref:li2019cardinality,ref:da2020kullback}, or in model selection~\citep{ref:buchholz2020distributed}. In these settings, typically all of the distributions are given equal weight. Our approach contributes to the fusion literature by considering a downstream criterion that can be used to optimize the importance (or weight) of each distribution.

\section{The Bias-Variance Multiple Imputation Framework}
\label{sec:bumi}

\textbf{Notations.} For any integers $K\leq L$, we use $[K]$ to denote the set $\{1, \ldots, K\}$ and $[K,L]$ to denote the set $\{K,
\ldots,L\}$. We use $\mc M$ to denote the space of all probability measures supported on $\R^n$. The set of all $n$-dimensional Gaussian measures is denoted by $\mc N_n$, and we use $\PD^n$ to represent the set of positive-definite matrices of dimension $n\times n$.

To facilitate the discussion, we borrow an exemplary application of portfolio allocation with missing asset returns. We assume a universe of $n$ assets over $T$ periods. The joint return of these $n$ assets at any specific time $t$ is represented by a random vector $Z_t \in \R^n$. The random vector $M_t \in \{0, 1\}^n$ indicates the pattern of the missing values at time $t$, that is,
\[
    \forall i \in [n]: \qquad (M_t)_i = \begin{cases}
        1 & \text{if the $i$-th component of $Z_t$ is missing},\\ 
        0 & \text{if the $i$-th component of $Z_t$ is observed}.
    \end{cases}
\]
Coupled with the indicator random variables $M_t$, we define the following projection operator $\mc P_{M_t}$ that projects\footnote{\label{fn:PMt} The projection operator $\mc P_{M_t}$ can be defined using a matrix-vector multiplication as follows. Upon observing $M_t$, let $\mc I = \{i \in [n]: (M_t)_i = 1\}$ be the set of indices of the missing entries. We assume without any loss of generality that $\mc I$ is ordered and has $d$ elements,  and that the $j$-th element of $\mc I$ can be assessed using $\mc I_j$. Define the matrix $P_{M_t} \in \{0, 1\}^{d \times n}$ with its $(j, \mc I_j)$ element being 1 for $j \in [d]$, and the remaining elements are zeros, then $\mc P_{M_t}(Z_t) = P_{M_t} Z_t$. The operator $\mc P_{M_t}^\perp$ can be defined with a matrix $P_{M_t}^\perp$ in a similar manner using the set $\mc I^\perp = \{ i \in [n]: (M_t)_i = 0\}$.} the latent variables $Z_t$ to its \textit{un}observed components $Y_t$
\[
    \mc P_{M_t}(Z_t) = Y_t.
\]
Consequentially, the orthogonal projection $\mc P_{M_t}^\perp$ projects $Z_t$ to the observed components $X_t$
\[
    \mc P_{M_t}^\perp(Z_t) = X_t.
\]

At the fundamental level, we assume the following generative model
\be \label{eq:generative}
    \forall t: \quad 
    \left\{
        \begin{array}{l}
        Z_t = \theta + \eps_t \\
        X_t = \mc P_{M_t}^\perp(Z_t),~ Y_t = \mc P_{M_t}(Z_t),
        \end{array}
    \right.
    \qquad 
    \text{where}
    \qquad 
    \eps_t \in \R^n,~M_t \in \{0, 1\}^n. 
\ee

When the generative model~\eqref{eq:generative} is executed over the $T$ periods, the observed variables $(X_t, M_t)$ are accumulated to produce a data set $\mathcal{D}$ of $n$ rows and $T$ columns. This data set is separated into a \textit{training} set consisting of  $T^{train}$ periods, and the remaining $T^{test} = T - T^{train}$ periods are used as the \textit{testing} set. We will be primarily concerned with the imputation of the training set.

\begin{figure}[h!]
\tcapfig{Illustration of a panel data with missing entries. The missing entries  are colored black. The observed components of $Z_t$ are denoted by $X_t$, and the unobserved components of $Z_t$ are denoted by $Y_t$.}
    \centering
    \includegraphics[width=0.55\columnwidth]{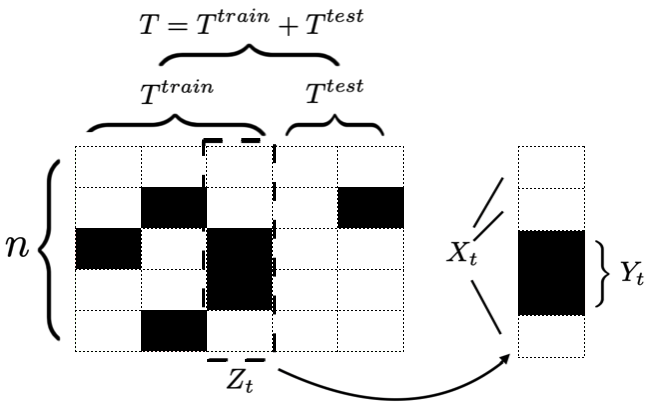}
    \label{fig:panelrep}
\end{figure}

In this paper, we pursue the Bayesian approach: we assume that the vector $\theta \in \R^n$ is unknown and is treated as a Bayesian parameter with a prior distribution $\pi_0$; we also treat $(Y_t)_{t\in[T]}$ as Bayesian parameters with appropriate priors. Based on the data set with missing entries $\mc D$, our strategy relies on computing the distribution of $(Y_t)_{t\in[T]}$ conditional on the data set $\mc D$, and then generating multiple imputations of $(Y_t)_{t\in[T]}$ by sampling from this distribution. To compute the posterior of $(Y_t)_{t\in[T]} | \mc D$, we first need to calculate a posterior distribution of the unknown mean vector $\theta$ conditional on $\mc D$, and then the distribution of $(Y_t)_{t\in[T]} | \mc D$ is obtained by marginalizing out $\theta$ from the distribution of $(Y_t)_{t\in[T]} | (\theta, \mc D)$.

We endeavor in this paper to explore a novel approach to generate multiply-imputed version of the data set. We make the following assumptions. 
\begin{assumption}[Observability]
\label{a:1cell}
We observe at least one value for each row in the training section of $\mc D$.
\end{assumption}
\begin{assumption}[Ignorability]
    \label{a:ignorable}
The missing pattern satisfies the ignorability assumption~\citep{ref:little2002statistical}, namely, the probability of $(M_t)_i=1~\forall i\in[n]~\forall t\in[T]$ does not depend on $(Y_t)_{t\in[T]}$ or $\theta$, conditional on $(X_t)_{t\in[T]}$.
\end{assumption} 

\begin{assumption}[Bayesian conjugacy] \label{a:bayes}
     The noise $\eps_t~\forall t\in[T]$ in the latent generative model~\eqref{eq:generative} is independently and identically distributed as $\mc N_n(0, \Omega)$, where $\Omega \in \PD^n$ is a known covariance matrix. The prior $\pi_0$ is either a non-informative flat prior or Gaussian with mean vector $\m_0 \in \R^n$ and covariance matrix $\cov_0 \in \PD^n$. The priors on $Y_t$ conditional on $\theta$ are independent across $t$, and the conditional distribution is Gaussian $\mc N_{\mathrm{dim}(Y_t)}(\theta_{Y_t},\Omega_{Y_t})$, where  $\theta_{Y_t} = \mc P_{M_t}(\theta)$, and $\Omega_{Y_t}$ is obtained by applying the projection operator $\mc P_{M_t}$ on $\Omega$.\footnote{Using the construction of $P_{M_t}$ in Footnote~\ref{fn:PMt}, we have $\Omega_{Y_t} = P_{M_t} \Omega P_{M_t}^\top$. Likewise for $\Omega_{X_t}$. The latent generative model~\eqref{eq:generative} implies that the likelihood of $X_t$ conditional on $(\theta,Y_t)$ is Gaussian $N_{\mathrm{dim}(X_t)} \big(\theta_{X_t} + \Omega_{X_t Y_t} \Omega_{Y_t}^{-1} (Y_t - \theta_{Y_t}), \Omega_{X_t} -  \Omega_{X_t Y_t} \Omega_{Y_t}^{-1} \Omega_{Y_t X_t} \big)$, where $\theta_{X_t} = \mc P_{M_t}^\perp(\theta)$, the matrix $\Omega_{Y_t X_t} \in \R^{\mathrm{dim}(Y_t) \times \mathrm{dim}(X_t)}$ is the covariance matrix between $Y_t$ and $X_t$ induced by $\Omega$, similarly for $\Omega_{X_tY_t}$.}
\end{assumption}

Following our discussion on the look-ahead-bias, we can observe that the distribution of $\theta|\mc D$ may carry look-ahead-bias originating from the testing portion of $\mc D$, and this look-ahead-bias will be internalized into the look-ahead-bias of $(Y_t)_{t\in[T^{train}]}| \mc D$. To mitigate this negative effect, it is crucial to re-calibrate the distribution of $\theta | \mc D$ to reduce the look-ahead-bias on the posterior distribution of the mean parameter $\theta$, with the hope that this mitigation will be spilled-over to similar mitigation of the look-ahead-bias on the imputation of $(Y_t)_{t\in[T^{train}]}$. Our BVMI framework pictured in Figure~\ref{fig:workflow} is designed to mitigate the look-ahead-bias by implementing three modules:
\begin{itemize}
    \item [(G)] The posterior \textit{G}enerator 
    \item [(C)] The \textit{C}onsensus mechanism
    \item [(S)] The multiple imputation \textit{S}ampler
\end{itemize}

\begin{figure}
\tcapfig{Illustration of workflow}
    \centering
    \includegraphics[width=\columnwidth]{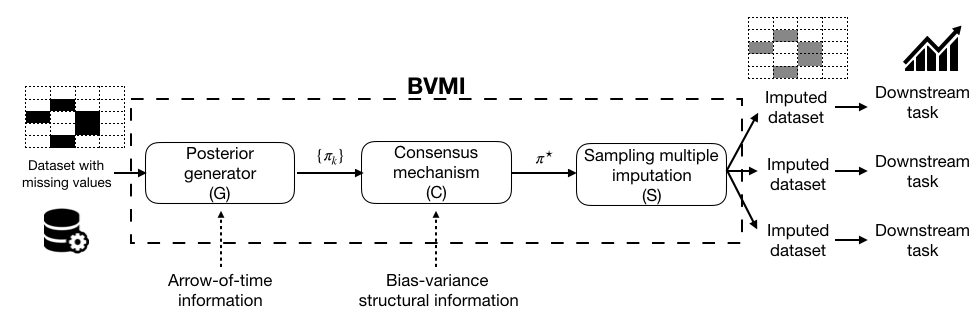}
    \bnotefig{This figure illustrates the workflow of our framework. The Bias-Variance Multiple Imputation (BVMI) framework (contained in the dashed box) receives a data set with missing values as an input, and generates multiple data sets as outputs. The posterior generator (G) can exploit time dependency, while the consensus mechanism (C) is bias-variance targeted.}
    \label{fig:workflow}
\end{figure}

For the remainder of this section, we will elaborate on the generator (G) in Subsection~\ref{sec:generator}, and we detail the sampler (S) in Subsection~\ref{sec:sampler}. The construction of the consensus mechanism (C) is more technical, and thus the full details on (C) will be provided in subsequent sections.

\subsection{Posterior Generator}
\label{sec:generator}

We now discuss in more detail how to generate multiple posterior distributions by considering the time-dimension of the data set. To this end, we fix the desired number of posterior distributions to $K$, ($2 \le K \le T^{test}+1)$. Moreover, we assign the integer values $T_1, \ldots, T_K$ satisfying
\[
    T^{train} = T_1 < T_2 < \ldots < T_K = T
\]
to indicate the amount of data that is synthesized in each posterior. The data set truncated to time $T_k$ is denoted as $\mc D_k = \{ (X_t, M_t), t \in [T_k]\}$, note that $\mc D_1$ coincides with the training data set while $\mc D_K$ coincides with the whole data set $\mc D$. By construction, the sets $\mc D_k$ inherit a hierarchy in terms of time: for any $k < k'$, the data set $\mc D_{k}$ is part of $\mc D_{k'}$, and thus $\mc D_{k'}$ contains at least as much of information as $\mc D_{k}$.  The $k$-th posterior for $\theta$ is formulated conditional on $\mc D_k$. Under standard regularity conditions, the posterior $\pi_k$ can be computed from the prior distribution $\pi_0$ using Bayes' theorem~\cite[Theorem~1.31]{ref:schervish1995theory} as
\be \label{eq:bayes1}
    \frac{\mathrm{d} \pi_k}{\mathrm{d} \pi_0}(\theta | \mathcal{D}_k) = \frac{f(\mathcal{D}_k | \theta)}{\int_{\R^n} f(\mathcal{D}_k | \vartheta) \pi_0(\mathrm{d} \vartheta) },
\ee
where $\mathrm{d} \pi_k/ \mathrm{d} \pi_0$ represents the Radon-Nikodym derivative of $\pi_k$ with respect to $\pi_0$, and $f(\mc D_k | \theta)$ is the conditional density representing the likelihood of observing $\mc D_k$ given the parameter $\theta$. Under the generative model~\eqref{eq:generative}, and Assumption~\ref{a:ignorable}, we have
\[
    f(\mc D_k | \theta) = f_X\left( (X_t)_{t \in [T_k]} | \theta\right) \times f_M \left( (M_t)_{t \in [T_k]}|(X_t)_{t\in[T_k]},\theta\right) = f_X\left( (X_t)_{t \in [T_k]} | \theta\right) \times \tilde f_M \left( (M_t)_{t \in [T_k]}|(X_t)_{t\in[T_k]}\right)
\]
for some appropriate (conditional) likelihood functions~$f_X,f_M$ and $\tilde f_M$. In this case, the terms involving $(M_t)_{t \in [T_k]}$ in both the numerator and denominator of equation~\eqref{eq:bayes1} cancel each other out, hence \eqref{eq:bayes1} can be simplified to
\[
    \frac{\mathrm{d} \pi_k}{\mathrm{d} \pi_0}(\theta | \mathcal{D}_k) = \frac{f_X \left((X_t)_{t \in [T_k]} | \theta\right)}{\int_{\R^n} f_X\left((X_t)_{t \in [T_k]} | \vartheta\right) \pi_0(\mathrm{d} \vartheta) } ,
\]

Under Assumption~\ref{a:bayes}, the posterior distribution~$\pi_k$ can be computed more conveniently. Assumption~\ref{a:bayes} implies that conditioning on $\theta$, the latent variables $Z_t$ and $Z_{t'}$ are i.i.d.~following a Gaussian distribution with mean $\theta$ and covariance matrix $\Omega$ for any $t$, $t'$. Because the operator $\mc P_{M_t}^\perp$ is an affine transformation (see Footnote~\ref{fn:PMt}), Theorem~2.16 in~\cite{ref:fang1990symmetric} asserts that $X_t|\theta$ is Gaussian with mean $\theta_{X_t}$ and covariance matrix $\Omega_{X_t}$. In this case, the posterior $\pi_k$ can be computed as
\[
\frac{\mathrm{d} \pi_k}{\mathrm{d} \pi_0}(\theta | \mathcal{D}_k)= \frac{ \prod_{t \in [T_k]}  \mc N_{\mathrm{dim}(X_t)}(X_t | \theta_{X_t}, \Omega_{X_t})}{\int_{\R^n} \prod_{t \in [T_k]}  \mc N_{\mathrm{dim}(X_t)}(X_t | \vartheta_{X_t}, \Omega_{X_t}) \pi_0(\mathrm{d} \vartheta) },
\]
where $\mc N_{\mathrm{dim}(X_t)}(\cdot | \theta_{X_t}, \Omega_{X_t})$ is the probability density function of the Gaussian distribution $\mc N_{\mathrm{dim}(X_t)}(\theta_{X_t}, \Omega_{X_t})$. Because $\pi_0$ is either flat or Gaussian by Assumption~\ref{a:bayes}, Bayes' theorem asserts that the posterior $\pi_k$ also admits a density (with respect to the Lebesgue measure on $\R^n$). This density (we first take the case that $\pi_0$ is Gaussian for illustration) evaluated at any point $\theta \in \R^n$ is proportional to 
\begin{align*}
    &\mc N_{n}(\theta | \m_0, \cov_0) \prod_{t \in [T_k]}  \mc N_{\mathrm{dim}(X_t)}(X_t | \theta_{X_t}, \Omega_{X_t}) \\
    \propto & \exp\left(-\frac{1}{2}(\theta-\mu_0)^\top\Sigma_0^{-1}(\theta-\mu_0)\right) \prod_{t \in [T_k] }\exp\left(-\frac{1}{2}(\theta_{X_t}-X_t)^\top\Omega_{X_t}^{-1}(\theta_{X_t}-X_t)\right).
\end{align*}
By expanding the exponential terms and completing the squares, the posterior distribution $\theta |\mc D_k$ is then governed by $\pi_k \sim \mc N_n(\m_k, \cov_k)$ with 
\begin{equation}\label{eq:infopostcov}
    \cov_k = \left(\cov_0^{-1} +\sum_{t \in [T_k]} (\mc P_{M_t}^\perp)^{-1}(\Omega_{X_t}^{-1})\right)^{-1}, \quad \m_k = \cov_k \left(\cov_0^{-1}\m_0+\sum_{t\in [T_k]} (\mc P_{M_t}^\perp)^{-1}(\Omega_{X_t}^{-1}) \ (\mc P_{M_t}^\perp)^{-1}(X_t) \right),
\end{equation}
where $(\mc P_{M_t}^\perp)^{-1}$ is the inverse map of $\mc P_{M_t}^\perp$ obtained by filling entries with zeros\footnote{This inverse map can be defined as $(\mc P_{M_t}^\perp)^{-1}(\Omega_{X_t}^{-1}) = \arg\min\{\|\tilde \Omega\|_0: \tilde \Omega \in \R^{n \times n},~\mc P_{M_t}^\perp(\tilde \Omega) = \Omega_{X_t}^{-1}\}$, where $\|\cdot\|_0$ is a sparsity inducing norm. Similarly for $(\mc P_{M_t}^\perp)^{-1}(X_t)$.}. An elementary linear algebra argument asserts that $\Omega_{X_t}^{-1}$ is positive definite, and hence $(\mc P_{M_t}^\perp)^{-1}(\Omega_{X_t}^{-1})$ is positive semidefinite. The matrix inversion that defines $\cov_k$ is thus well-defined thanks to the positive definiteness of $\cov_0$ in Assumption~\ref{a:bayes}. Alternatively, in Assumption~\ref{a:bayes} if we impose a flat uninformative prior on $\theta$, then the posterior distribution of $\theta|\mc D_k$ is still Gaussian with $\pi_k \sim \mc N_n(\m_k, \cov_k)$ where 
\begin{equation}\label{eq:flatpostcov}
    \cov_k = \left(\sum_{t \in [T_k]} (\mc P_{M_t}^\perp)^{-1}(\Omega_{X_t}^{-1})\right)^{-1}, \quad \m_k = \cov_k \left(\sum_{t\in [T_k]} (\mc P_{M_t}^\perp)^{-1}(\Omega_{X_t}^{-1}) \ (\mc P_{M_t}^\perp)^{-1}(X_t) \right).
\end{equation}
The following lemma ensures us that the covariance $\Sigma_k$ is positive definite.
\begin{lemma}\label{lemma:pdcov}
    Under Assumption~\ref{a:1cell}, the posterior covariance \eqref{eq:flatpostcov} is positive definite.
\end{lemma}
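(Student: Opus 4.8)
The plan is to reduce the claim to a single positive-definiteness statement. Writing
\[
    S_k \Let \sum_{t \in [T_k]} (\mc P_{M_t}^\perp)^{-1}(\Omega_{X_t}^{-1}),
\]
equation \eqref{eq:flatpostcov} defines $\cov_k = S_k^{-1}$, so it suffices to prove $S_k \succ 0$: this simultaneously shows that the inverse defining $\cov_k$ is well-defined and that $\cov_k$ is positive definite.

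First I would make each summand concrete through the matrix representation of Footnote~\ref{fn:PMt}. Letting $P_{M_t}^\perp \in \{0,1\}^{\mathrm{dim}(X_t) \times n}$ be the selection matrix that extracts the observed coordinates at time $t$, the zero-filling inverse map is exactly $(\mc P_{M_t}^\perp)^{-1}(\Omega_{X_t}^{-1}) = (P_{M_t}^\perp)^\top \Omega_{X_t}^{-1} P_{M_t}^\perp$. Since $\Omega_{X_t} = P_{M_t}^\perp \Omega (P_{M_t}^\perp)^\top$ is a principal submatrix of the positive-definite matrix $\Omega$, it is itself positive definite, so $\Omega_{X_t}^{-1} \succ 0$ and each summand is positive semidefinite; consequently $S_k \succeq 0$.

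The crux is upgrading this to strict positivity, and this is precisely where Assumption~\ref{a:1cell} enters. For any $v \in \R^n$,
\[
    v^\top S_k v = \sum_{t \in [T_k]} (P_{M_t}^\perp v)^\top \Omega_{X_t}^{-1} (P_{M_t}^\perp v),
\]
a sum of nonnegative terms because $\Omega_{X_t}^{-1} \succ 0$. Hence $v^\top S_k v = 0$ forces $P_{M_t}^\perp v = 0$ for every $t \in [T_k]$, i.e.\ $v_i = 0$ for every coordinate $i$ observed at some time $t \in [T_k]$. Because $T_k \ge T_1 = T^{train}$, the window $[T_k]$ contains the entire training section, and Assumption~\ref{a:1cell} guarantees that every row $i \in [n]$ is observed there at least once; therefore every coordinate of $v$ is pinned to zero, so $v = 0$. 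This establishes $S_k \succ 0$, completing the argument.

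The step I expect to require the most care is not conceptually deep but bookkeeping-heavy: correctly translating the abstract zero-filling inverse map $(\mc P_{M_t}^\perp)^{-1}$ into the quadratic-form identity above, and verifying that the vanishing of each term $(P_{M_t}^\perp v)^\top \Omega_{X_t}^{-1} (P_{M_t}^\perp v)$ is equivalent to $P_{M_t}^\perp v = 0$ (which relies on $\Omega_{X_t}^{-1} \succ 0$), so that the observability hypothesis can then be invoked coordinatewise.
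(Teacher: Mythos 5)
Your proposal is correct and follows essentially the same route as the paper: both show each summand $(\mc P_{M_t}^\perp)^{-1}(\Omega_{X_t}^{-1})$ is positive semidefinite because $\Omega_{X_t}$ is a positive-definite principal submatrix of $\Omega$, and both invoke Assumption~\ref{a:1cell} to guarantee that for a nonzero vector some projection $\mc P_{M_t}^\perp(v)$ is nonzero, which makes the quadratic form strictly positive. The only difference is presentational — you argue by the contrapositive ($v^\top S_k v = 0 \Rightarrow v = 0$) and spell out the selection-matrix identity, whereas the paper directly exhibits a time $t$ at which the projected vector is nonzero.
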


The product of the posterior generator module $(G)$ is $K$ elementary posteriors $\{\pi_k\}_{k \in [K]}$, where $\pi_k$ is the posterior of $\theta | \mc D_k$. For any $k < k'$, the definition of the set $\mc D_k$ and $\mc D_{k'}$ implies that the posterior $\pi_{k'}$ carries at least as much of look-ahead-bias as the posterior $\pi_{k}$. The collection of posteriors $\{\pi_k\}$ is transmitted to the consensus mechanism $(C)$ to form an aggregated posterior $\pi\opt$ that strikes a balance between the variance and the look-ahead-bias. The aggregated posterior $\pi\opt$ is injected into the sampler module (S), which we now study.

\subsection{Multiple Imputation Sampler}
\label{sec:sampler}

A natural strategy to impute the missing values depends on recovering the (joint) posterior distribution of $(Y_t)_{t \in [T]}$ conditional on the observed data $\mc D$. Fortunately, if the noise $\epsilon_t$ are mutually independent, and the priors on $Y_t$ are mutually independent given $\theta$, it suffices to consider the posterior distribution of $Y_t$ for any given $t$. Using the law of conditional probability and leveraging the availability of the aggregated posterior $\pi\opt$, we have for any measurable set $\mc Y \subseteq \R^n$
\begin{subequations}
\begin{align}
    \PP( Y_t \in \mc Y | \mc D) &= \int_{\R^n} \PP(Y_t \in \mc Y | \theta, \mc D) \pi\opt(\mathrm{d} \theta) \notag \\
    &= \int_{\R^n} \PP(Y_t \in \mc Y | \theta, X_t, M_t) \pi\opt(\mathrm{d} \theta), \label{eq:Y_posterior0}
\end{align}
where the last equality holds thanks to the independence between $Y_t$ and any $(X_{t'}, M_{t'})$ for $t' \neq t$ conditional on $(\theta,X_t,M_t)$. Evaluating the posterior distribution for $Y_t | \mc D$ thus entails a multi-dimensional integration, which is, in general, computationally intractable.

Under Assumption~\ref{a:bayes}, $Z_t$ conditional on $\theta$ is Gaussian $\mc N_n(\theta, \Omega)$. Because $Z_t$ is decomposed into $X_t$ and $Y_t$, the distribution of $Y_t$ conditional on $\theta$ and $(X_t, M_t)$ is thus Gaussian~\cite[Corollary~5]{ref:cambanis1981theory}. More specifically, 
\be \label{eq:Y_posterior1}
    Y_t | \theta, X_t, M_t \sim \mc N_{\mathrm{dim}(Y_t)} \left(\theta_{Y_t} + \Omega_{Y_t X_t} \Omega_{X_t}^{-1} (X_t - \theta_{X_t}), \Omega_{Y_t} -  \Omega_{Y_t X_t} \Omega_{X_t}^{-1} \Omega_{X_t Y_t} \right),
\ee
\end{subequations}
where $\theta_{X_t} = \mc P_{M_t}(\theta)$, $\theta_{Y_t} = \mc P_{M_t}^\perp(\theta)$, while $\Omega_{X_t}$ and $\Omega_{Y_t}$ are obtained by applying the projection operators $\mc P_{M_t}$ and $\mc P_{M_t}^\perp$ on $\Omega$. The matrix $\Omega_{Y_t X_t} \in \R^{\mathrm{dim}(Y_t) \times \mathrm{dim}(X_t)}$ is the covariance matrix between $Y_t$ and $X_t$ induced by $\Omega$. Notice that by the definitions of the projection operators, $\theta_{X_t}$ and $\theta_{Y_t}$ are affine transformations of $\theta$. Consequently, the mean of $Y|\theta, X_t, M_t$ is an affine function of $\theta$, while its covariance matrix is independent of $\theta$. The posterior formula~\eqref{eq:Y_posterior0} coupled with the structural form~\eqref{eq:Y_posterior1} indicate that the distribution of $Y_t | \mc D$ coincide with the distribution of the random vector $\xi_t \in \R^{\mathrm{dim}(Y_t)}$ dictated by
\be \label{eq:Y_generative}
    \xi_t = A_t \theta + b_t + \eta_t, \qquad \theta \sim \pi\opt, \qquad \eta_t \sim \mc N_{\mathrm{dim}(Y_t)}(0, S_t), \qquad \theta \indep \eta_t,
\ee
for some appropriate parameters $A_t$, $b_t$, and $S_t$ that are designed to match the mean vector and the covariance matrix of the Gaussian distribution in~\eqref{eq:Y_posterior1}. More specifically, we have $A_t\theta= \theta_{Y_t} - \Omega_{Y_tX_t}\Omega_{X_t}^{-1}\theta_{X_t}, b_t = \Omega_{Y_tX_t}\Omega_{X_t}^{-1}X_t,S_t = \Omega_{Y_t} -  \Omega_{Y_t X_t} \Omega_{X_t}^{-1} \Omega_{X_t Y_t}$. Algorithm~\ref{algo:full_Bayes} details the procedure for missing value imputation using the model~\eqref{eq:Y_generative}.

The variability in the sampling of $\xi_t$ using~\eqref{eq:Y_generative} comes from two sources: that of sampling $\theta$ from $\pi\opt$ and that of sampling $\eta_t$ from $\mc N_{\mathrm{dim}(Y_t)}(0,S_t)$. As we have seen from Section~\ref{sec:generator} (e.g., equations~\eqref{eq:infopostcov} and \eqref{eq:flatpostcov}), the posterior covariance of $\theta$ is inverse-proportional to the time-dimension of the data set, and therefore the variability due to $\pi\opt$ is likely to be overwhelmed by the variability due to $\eta_t$ unless $T$ is small. To fully exhibit the potential of the aggregated posterior $\pi\opt$ for bias and variance trade-off, especially when validating our proposed method in numerical experiments, we choose to eliminate the idiosyncratic noise $\eta_t$, namely, the missing values are imputed by its conditional expectation given $\theta$, which is depicted in Algorithm~\ref{algo:conditional_Bayes}.

\begin{table}[th]
	\begin{minipage}{0.83\columnwidth}
		\begin{algorithm}[H]
			\caption{Fully Bayesian Imputation}
			\label{algo:full_Bayes}
			\begin{algorithmic}
				\REQUIRE aggregated posterior $\pi\opt$, data set $\mc D = \{(X_t, M_t): t \in [T]\}$, covariance matrix $\Omega$
				\\[0.5ex]
				\STATE Sample $\theta$ from $\pi\opt$
				\FOR {$t = 1, \ldots, T$}
				\STATE Compute $A_t$, $b_t$, $S_t$ for model~\eqref{eq:Y_generative} dependent on $(X_t, M_t, \Omega)$
				\STATE Sample $\eta_t$ from $\mc N_{\mathrm{dim}(Y_t)}(0, S_t)$
				\STATE Impute $Y_t$ by $A_t \theta + b_t + \eta_t$
				\ENDFOR
				\ENSURE An imputed data set
			\end{algorithmic}
		\end{algorithm}
	\end{minipage}
\end{table}

\begin{table}[th]
	\begin{minipage}{0.83\columnwidth}
		\begin{algorithm}[H]
			\caption{Conditional Expectation Bayesian Imputation}
			\label{algo:conditional_Bayes}
			\begin{algorithmic}
				\REQUIRE aggregated posterior $\pi\opt$, data set $\mc D = \{(X_t, M_t): t \in [T]\}$, covariance matrix $\Omega$
				\\[0.5ex]
				\STATE Sample $\theta$ from $\pi\opt$
				\FOR {$t = 1, \ldots, T$}
				\STATE Compute $A_t$, $b_t$ for model~\eqref{eq:Y_generative} dependent on $(X_t, M_t, \Omega)$
				\STATE Impute $Y_t$ by $A_t \theta + b_t$
				\ENDFOR
				\ENSURE An imputed data set
			\end{algorithmic}
		\end{algorithm}
	\end{minipage}
\end{table}

\section{Bias-Variance Targeted Consensus Mechanism}
\label{sec:consensus}

We devote this section to providing the general technical details on the construction of the consensus mechanism module (C) in our BVMI framework. Conceptually, the consensus mechanism module receives $K$ posteriors $\{\pi_k\}_{k \in [K]}$ from the posterior generator module (G), then synthesizes a unique posterior $\pi\opt$ and transmits $\pi\opt$ to the sampler module (S). We first provide a formal definition of a consensus mechanism.

\begin{definition}[Consensus mechanism] \label{def:consensus}
    Fix an integer $K \ge 2$. A consensus mechanism $\mc C_\lambda: \mc M^K \to \mc M$ parametrized by $\lambda$ outputs a unifying posterior $\pi\opt$ for any collection of $K$ elementary posteriors $\{\pi_k\}$.
\end{definition}

We assume that the parameter $\lambda$ is finite-dimensional, and the set of all feasible parameters is denoted by $\Delta_K$, which is a nonempty finite-dimensional set. The space of all possible consensus mechanisms induced by $\Delta_K$ is represented by $\mbb C \Let \{ \mc C_\lambda: \lambda \in \Delta_K\}$. For a given set of posteriors $\{\pi_k\}$, we need to identify the optimal consensus mechanism $\mc C_{\lambda\opt} \in \mbb C$ in the sense that the aggregated posterior $\pi\opt = \mc C_{\lambda\opt}(\{\pi_k\})$ achieves certain optimality criteria. In this paper, the optimality criteria of $\pi\opt$ are determined based on the notion of bias-variance trade-off, which is ubiquitously used in (Bayesian) statistics. The first step towards this goal entails quantifying the bias of an aggregated posterior~$\pi\opt$. 




\begin{definition}[$(\ell, \m, \delta)$-bias tolerable posterior] \label{def:bias} Fix a penalization function $\ell: \R^n \to \R$, an anchored mean $ \m \in \R^n$ and a tolerance $\delta \in \R_+$. A posterior $\pi\opt$ is said to be $(\ell,  \m, \delta)$-bias tolerable  if $\ell(\EE_{\pi\opt}[\theta] - \m) \le \delta$.
\end{definition}

If $\ell$ is the Euclidean norm and the anchored mean is set to $\theta^{true}$, then we recover the standard measure of bias in the literature with the (subjective) posterior distribution in place of the (objective) frequentist population distribution, and a posterior $\pi\opt$ is unbiased if it is $(\|\cdot\|_2, \theta^{true}, 0)$-bias tolerable. 

Definition~\ref{def:bias} provides a broad spectrum of options to measure the bias of the aggregated posterior. Most importantly, we can flexibly choose a suitable value for the anchored mean $\mu$ to mitigate the look-ahead-bias.
The tolerance parameter $\delta$ indicates how much look-ahead-bias is accepted in the aggregated posterior $\pi\opt$. If $\delta = 0$, then we aim for an aggregated posterior distribution with the exact mean vector $\mu$. The generality of Definition~\ref{def:bias} also allows us to choose the penalization function. Throughout this paper, we will focus on the penalization function of the form
\be \label{eq:ell-Z}
    \ell_{\mc Z}(\mu) = \Sup{z \in \mc Z}~ z^\top \m
\ee
parametrized by some nonempty, compact set $\mc Z \subseteq \R^n$. The functional form~\eqref{eq:ell-Z} enables us to model diverse bias penalization effects as demonstrated in the following example.


\begin{example}[Forms of $\ell_{\mathcal Z}$] \label{ex:penalty}

    \begin{enumerate}[label=(\roman*),leftmargin=*]
         \item If $\mc Z = \{ z \in \R^n: \| z\|_2 \le 1\}$ is the Euclidean ball, then $\ell_{\mc Z}(\mu) = \| \mu \|_2$. This 2-norm penalization coincides with the canonical bias measurement in the literature. 
        \item If $\mc Z$ is a convex polyhedron $\mc Z = \{ z \in \R^n: Az \le b\}$ for some matrix $A$ and vector $b$ of appropriate dimensions, then $\ell_{\mc Z}$ coincides with the support function of $\mc Z$.
        \item If $\mc Z = \{ \mathbbm 1_n\}$, then $\ell_{\mc Z}$ is an upward bias penalization under the $1/n$-uniform portfolio. The $1/n$-uniform portfolio is known to be robust~\citep{ref:DeMigueloptimal}.
    \end{enumerate}
\end{example}

Next, we discuss the variance properties of the aggregated posterior. Let $\mathrm{Cov}_{\mc C_{\lambda}(\{ \pi_k\})}[\theta] \in \R^{n\times n}$ be the covariance matrix of $\theta$ under the posterior distribution $\mc C_{\lambda}(\{ \pi_k\})$. 
We consider the following minimal variance criterion computed based on taking the sum of the diagonal entries.
\begin{definition}[Minimal variance posterior]
    For a fixed input $\{\pi_k\}$, the aggregated posterior $\pi\opt$ obtained by $\pi\opt = \mc C_{\lambda\opt}(\{\pi_k\})$ for an $\mc C_{\lambda\opt} \in \mbb C$ has minimal variance if
    \[
        \mathrm{Tr}(\mathrm{Cov}_{\pi\opt}[\theta]) \le \mathrm{Tr}(\mathrm{Cov}_{\mc C_{\lambda}(\{ \pi_k\})}[\theta]) \quad \forall \mc C_{\lambda} \in \mbb C.
    \]
\end{definition}

We are now ready to define our notion of optimal consensus mechanism.

\begin{definition}[Optimal consensus mechanism] \label{def:optimal-consensus}
Fix an input $\{\pi_k\}$. The consensus mechanism $\mc C_{\lambda\opt} \in \mbb C$ is optimal for $\{\pi_k\}$ if the aggregated posterior $\pi\opt = \mc C_{\lambda\opt}(\{\pi_k\})$ is $(\ell, \mu, \delta)$-bias tolerable and has minimal variance.
\end{definition}

Intuitively, the optimal consensus mechanism for the set of elementary posteriors $\{\pi_k\}$ is chosen so as to generate an aggregated posterior $\pi\opt$ that has low variance and an acceptable level of bias. Moreover, because the ultimate goal of the consensus mechanism is to mitigate the look-ahead-bias, it is imperative to choose the anchored mean $\mu$ in an appropriate manner. As the posterior $\pi_1$ obtained by conditioning on the training set $\mc D_1 = \{(X_t, M_t): t \in T^{train}\}$ carries the least amount of look-ahead-bias, it is reasonable to consider the discrepancy between the mean of the aggregated posterior $\pi\opt$ and that of $\pi_1$ as a proxy for the true bias. The optimal consensus mechanism can be constructed from $\lambda\opt \in \Delta_K$ that solves the following consensus optimization problem
\be \label{eq:prob}
    \begin{array}{cl}
        \min & \mathrm{Tr}(\mathrm{Cov}_{\mc C_{\lambda}(\{ \pi_k\})}[\theta]) \\
        \st & \lambda \in \Delta_K \\
        &  \ell_{\mc Z}(\EE_{\mc C_\lambda(\{ \pi_k\})}[\theta]- \m_1) \le \delta
    \end{array}
\ee
for some set $\mc Z$. Notice that problem~\eqref{eq:prob} shares a significant resemblance with the Markowitz problem: the objective function of~\eqref{eq:prob} minimizes a variance proxy, while the constraint of~\eqref{eq:prob} involves the expectation parameters.

So far, we have not yet specified how the mechanism $\mc C_{\lambda}$ depends on the parameters $\lambda$. In its most general form, $\mc C_{\lambda}$ may depend nonlinearly on $\lambda$, and in this case, problem~\eqref{eq:prob} can be intractable. It behooves us to simplify this dependence by imposing certain assumptions on $\Delta_K$ and on the relationship between $\lambda$ and $\mc C_\lambda$. Towards this end, we assume for the rest of this paper that $\Delta_K = \{ \lambda \in \R_+^K: \sum_{k = 1}^K \lambda_k = 1 \}$, a $(K-1)$-dimensional simplex. Furthermore, $\mc C_{\lambda}$ depends explicitly on $\lambda$ through the following variational form
\be \label{eq:C-def}
    \mc C_\lambda(\{ \pi_k\}) = \arg \Min{\pi\in\mc M} \sum_{k \in [K]} \lambda_k \varphi(\pi, \pi_k),
\ee
where $\varphi$ represents a measure of dissimilarity in the space of probability measures. If $\lambda$ is a vector of $1/K$, then $\mc C_{1/K \mathbbm{1}}(\{\pi_k\})$ is the (potentially asymmetric) barycenter of $\{\pi_k\}$ under $\varphi$.
The parameter vector $\lambda$ can now be thought of as the vector of weights corresponding to the set of posteriors $\{\pi_k\}$, and as a consequence, the mechanism $\mc C_{\lambda}$ outputs the \textit{weighted} barycenter of $\{\pi_k\}$. 

It is important to emphasize the sharp distinction between our consensus optimization problem~\eqref{eq:prob} and the existing literature on barycenters. While the existing literature typically focuses on studying the barycenter for a fixed parameter $\lambda$, problem~\eqref{eq:prob} searches for the best value of $\lambda$ that meets our optimality criteria. Put differently, one can also view problem~\eqref{eq:prob} as a search problem for the \textit{best reweighting scheme} of $\{\pi_k\}$ so that their barycenter has minimal variance while being bias tolerable. 

The properties of the aggregation mechanism depend fundamentally on the choice of $\varphi$, and in what follows we explore three choices of $\varphi$ that are analytically tractable. In Section~\ref{sec:forward}, we explore the mechanism using the forward Kullback-Leibler divergence, and in Section~\ref{sec:Wass}, we examine the mechanism under the type-2 Wasserstein distance. We will show that in these two cases, the optimal consensus problem~\eqref{eq:prob} is a convex optimization problem under specific assumptions. In Section~\ref{sec:backward}, we study the mechanism under the backward Kullback-Leibler divergence, and reveal that the resulting consensus problem is non-convex.

\section{Optimal Forward Kullback-Leibler Consensus Mechanism}
\label{sec:forward}

In this section, we will measure the dissimilarity between distributions using the Kullback-Leibler (KL) divergence. The KL divergence is formally defined as follows.

\begin{definition}[Kullback-Leibler divergence] \label{def:KL}
	Given two probability measures $\nu_1$ and $\nu_2$ with $\nu_1$ being absolutely continuous with respect to $\nu_2$, the Kullback-Leibler (KL) divergence from $\nu_1$ to $\nu_2$ is defined as $\KL(\nu_1\!\parallel \!\nu_2)\!\Let\!\EE_{\nu_1} \left[ \log (\mathrm{d} \nu_1/\mathrm{d} \nu_2) \right]$,
	where $\mathrm{d} \nu_1/\mathrm{d} \nu_2$ is the Radon-Nikodym derivative of $\nu_1$ with respect to $\nu_2$.
\end{definition}

It is well-known that the KL divergence is non-negative, and it vanishes to zero if and only if $\nu_1$ coincides with $\nu_2$. The KL divergence, on the other hand, is not a distance: it even fails to be symmetric, and in general, we have $\KL(\nu_1 \parallel \nu_2) \neq \KL(\nu_2 \parallel \nu_1)$. The KL divergence is also a fundamental tool in statistics, information theory~\citep{ref:cover2012elements} and information geometry~\citep{ref:amari2016information}. 

Under Assumption~\ref{a:bayes}, the posteriors $\pi_k$ are Gaussian, hence, to compute the aggregated posterior $\mc C_{\lambda}({\pi_k})$, we often need to evaluate the KL divergence from, or to, a Gaussian distribution. Fortunately, when measuring the dissimilarity between two Gaussian distributions, the KL divergence admits a closed-form expression.  
 
\begin{lemma}[KL divergence between Gaussian distributions~\citep{ref:pardo2018statistical}] \label{lemma:KL}
    For any pairs $(\m_i, \cov_i)$, $(\m_j, \cov_j) \in \R^n \times \PD^n$, we have
    \begin{align*}
        \KL \big( \mc N_n(\m_i, \cov_i) \parallel \mc N_n(\m_j, \cov_j) \big) \Let (\m_j - \m_i)^\top\cov_j^{-1} (\m_j - \m_i) +\Tr{\cov_i \cov_j^{-1}} - \log\det (\cov_i \cov_j^{-1}) - n.
    \end{align*}
\end{lemma}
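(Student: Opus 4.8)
The plan is to compute the divergence directly from Definition~\ref{def:KL}, exploiting the explicit Gaussian densities. Write $p$ and $q$ for the Lebesgue densities of $\mc N_n(\m_i, \cov_i)$ and $\mc N_n(\m_j, \cov_j)$ respectively. Since both covariances lie in $\PD^n$, the two measures are mutually absolutely continuous and nondegenerate, so the Radon--Nikodym derivative is simply the pointwise ratio $p/q$, and $\KL\big(\mc N_n(\m_i,\cov_i)\parallel \mc N_n(\m_j,\cov_j)\big) = \EE_{\nu_1}[\log(p/q)]$ with $\nu_1 = \mc N_n(\m_i, \cov_i)$. Taking logarithms of the ratio of the two Gaussian densities, the normalizing constants $(2\pi)^{-n/2}$ cancel and $\log(p/q)$ splits into a deterministic log-determinant term involving $\log(\det\cov_j/\det\cov_i)$ together with two quadratic forms, one weighted by $\cov_i^{-1}$ and centered at $\m_i$, the other weighted by $\cov_j^{-1}$ and centered at $\m_j$.

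The remaining work is to integrate each piece against $\theta \sim \mc N_n(\m_i, \cov_i)$. The log-determinant term is constant and passes through the expectation unchanged, contributing (after combining the two determinants) a term proportional to $\log\det(\cov_i\cov_j^{-1})$. For the two quadratic forms I would invoke the standard moment identity
\[
  \EE_{\theta \sim \mc N_n(\m, \cov)}\big[(\theta - a)^\top A (\theta - a)\big] = \Tr{A \cov} + (\m - a)^\top A (\m - a),
\]
valid for any symmetric matrix $A$ and vector $a$, which itself follows from linearity of the trace together with $\EE[(\theta-\m)(\theta-\m)^\top] = \cov$. Applied to the first quadratic form with $a = \m_i$ and $A = \cov_i^{-1}$, the cross term vanishes and the trace collapses to $\Tr{\cov_i^{-1}\cov_i} = n$; applied to the second with $a = \m_j$ and $A = \cov_j^{-1}$, it produces $\Tr{\cov_j^{-1}\cov_i}$ plus the nonzero cross term $(\m_i - \m_j)^\top \cov_j^{-1}(\m_i - \m_j)$.

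Collecting the three contributions, using the cyclic invariance of the trace to rewrite $\Tr{\cov_j^{-1}\cov_i}$ as $\Tr{\cov_i\cov_j^{-1}}$, and noting that the cross term is symmetric in $\m_i - \m_j$, reproduces the four-term expression (Mahalanobis distance $+$ trace $-$ log-determinant $-\,n$) on the right-hand side of the lemma after the routine constant bookkeeping. There is no genuine analytic obstacle here; the only point requiring care is the second quadratic expectation, where the mean $\m_i$ of the integrating measure differs from the center $\m_j$ of the quadratic form, so that both the trace term and the surviving cross term must be tracked — this cross term is exactly the source of the Mahalanobis distance $(\m_j - \m_i)^\top\cov_j^{-1}(\m_j - \m_i)$ in the final closed form. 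Since the claim is a pure identity between nondegenerate Gaussians, the proof reduces entirely to these elementary second-moment computations.
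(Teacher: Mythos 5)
The paper states this lemma without proof, treating it as a classical identity, so there is no in-paper argument to compare against; your direct computation from Definition~\ref{def:KL} via the ratio of Gaussian densities is the standard route, and the structure of your argument (log-ratio splits into a constant log-determinant term plus two quadratic forms, each integrated with the second-moment identity $\EE[(\theta-a)^\top A(\theta-a)] = \Tr{A\cov} + (\m-a)^\top A(\m-a)$) is sound. The one point you cannot defer to ``routine constant bookkeeping'' is the overall factor of $\tfrac12$. Carrying the normalizing constants through, the log-ratio is
\[
\log\frac{p(\theta)}{q(\theta)} \;=\; \tfrac12\log\frac{\det\cov_j}{\det\cov_i} \;-\; \tfrac12(\theta-\m_i)^\top\cov_i^{-1}(\theta-\m_i) \;+\; \tfrac12(\theta-\m_j)^\top\cov_j^{-1}(\theta-\m_j),
\]
and taking the expectation under $\mc N_n(\m_i,\cov_i)$ exactly as you describe yields
\[
\EE_{\nu_1}\!\left[\log\frac{p}{q}\right] \;=\; \tfrac12\Big[(\m_j-\m_i)^\top\cov_j^{-1}(\m_j-\m_i) + \Tr{\cov_i\cov_j^{-1}} - \log\det(\cov_i\cov_j^{-1}) - n\Big],
\]
which is one half of the right-hand side displayed in the lemma. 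The lemma is written with $\Let$ rather than $=$, so the authors are evidently adopting the doubled quantity as their working expression (the factor $\tfrac12$ is immaterial for the barycenter problem~\eqref{eq:consensus-F} and for Lemma~\ref{lemma:projection-V}, since it rescales the objective uniformly), but your proof as written asserts that the computation ``reproduces the four-term expression,'' which it does only up to this factor. You should either state the result with the explicit $\tfrac12$ or remark that the lemma's formula is twice the divergence of Definition~\ref{def:KL}; with that caveat the argument is complete and correct.
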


In this section we use $\varphi$ as the forward KL-divergence to aggregate the posteriors; more specifically, for any two probability distributions $\nu_1$ and $\nu_2$, we use $\varphi(\nu_1,\nu_2)=\KL(\nu_1\parallel\nu_2)$. Note that this choice of $\varphi$ places the posteriors $\pi_k$ as the second argument in the divergence. We use $\mc C_\lambda^F$ to denote the consensus mechanism induced by the forward KL-divergence, and $\mc C_\lambda^F$ is formally defined as follows. 
\begin{definition}[Forward KL consensus mechanisms]
    A mechanism $\mc C_\lambda^F$ with weight $\lambda$ is called a forward KL consensus mechanism if for any collection of posterior distributions $\{\pi_k\}$, we have 
\be \label{eq:consensus-F}
    \mc C_\lambda^F(\{\pi_k\}) = \arg \Min{\pi\in\mc M} \sum_{k \in [K]} \lambda_k~\KL (\pi \parallel \pi_k).
\ee
\end{definition} 

The KL barycenter of the form~\eqref{eq:consensus-F} is widely used to fuse distributions and aggregate information, and it has witnessed many successful applications in machine learning~\citep{ref:claici2020model,ref:da2020kullback,ref:li2019cardinality}.
The set of forward KL mechanisms parametrized by the simplex $\Delta_K$ is denoted by $\mbb C^F \Let \{ \mc C_\lambda^F: \lambda \in \Delta_K\}$. For a given input $\{\pi_k\}$ and weight $\lambda \in \Delta_K$, computing $\mc C_{\lambda}^F(\{\pi_k\})$ requires solving an infinite dimensional optimization problem. Fortunately, when the posteriors $\pi_k$ are Gaussian under Assumption~\ref{a:bayes}, we can characterize $\mc C_\lambda^F$ analytically thanks to the following proposition. 

\begin{proposition}[Closed form of the consensus~{\citep{ref:battistelli2013consensus}}] \label{prop:consensus-F}
    Suppose that $\pi_k \sim \N(\m_k, \cov_k)$ with $\cov_k \succ 0$ for any $k \in [K]$. For any $\lambda \in \Delta_K$, we have $\mc C_\lambda^F(\{\pi_k\}) = \mc N(\msa, \covsa)$ with
    \be
        \msa = (\sum_{k \in [K]}\lambda_k\Sigma_k^{-1})^{-1} \big( \sum_{k \in [K]}\lambda_k \cov_k^{-1} \m_k \big) \in \R^n, \quad \covsa = (\sum_{k \in [K]}\lambda_k\Sigma_k^{-1})^{-1} \in \PD^n.
    \ee
\end{proposition}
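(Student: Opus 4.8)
The plan is to show that the weighted sum of forward KL divergences differs from a single KL divergence to an explicit Gaussian only by an additive constant, so that the unconstrained minimization over all of $\mc M$ can be solved in closed form. Writing each posterior density with respect to the Lebesgue measure on $\R^n$ as $p_k$, and letting $p$ denote the density of a candidate $\pi \in \mc M$, I first note that any $\pi$ without a density, or not absolutely continuous with respect to some $\pi_k$, makes $\KL(\pi \parallel \pi_k) = +\infty$ and can be discarded (each $\pi_k$ is a full-support Gaussian). Then, using $\sum_{k \in [K]} \lambda_k = 1$, I would expand the objective as
\[
    \sum_{k\in[K]} \lambda_k \KL(\pi\parallel\pi_k) = \int p \log p \, \dd x - \int p \Big( \sum_{k\in[K]} \lambda_k \log p_k \Big)\dd x.
\]

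Next I would introduce the geometric-mixture density $q \propto \prod_{k\in[K]} p_k^{\lambda_k}$, i.e.\ $\log q = \sum_{k} \lambda_k \log p_k - \log Z$ with $Z = \int \prod_{k} p_k^{\lambda_k}\,\dd x$. Since each $\log p_k$ is, up to a constant, the quadratic $-\tfrac12 (x-\m_k)^\top \cov_k^{-1}(x-\m_k)$, the combined exponent $\sum_{k} \lambda_k \log p_k$ is again a concave quadratic with Hessian $-\sum_{k} \lambda_k \cov_k^{-1}$; because $\sum_{k} \lambda_k = 1$ with every $\cov_k^{-1} \succ 0$, this matrix is positive definite, so $Z < \infty$ and $q$ is a bona fide Gaussian density. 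Substituting $\sum_{k} \lambda_k \log p_k = \log q + \log Z$ into the display yields the exact identity
\[
    \sum_{k\in[K]} \lambda_k \KL(\pi\parallel\pi_k) = \KL(\pi\parallel q) - \log Z .
\]

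Because $\log Z$ is independent of $\pi$ and $\KL(\pi \parallel q) \ge 0$ with equality iff $\pi = q$, the minimizer over $\mc M$ is exactly $\pi\opt = q$, which is Gaussian. It then remains to read off its parameters: collecting the quadratic and linear terms of $\log q$ and completing the square gives precision matrix $\covsa^{-1} = \sum_{k} \lambda_k \cov_k^{-1}$ and mean $\msa = \covsa \big( \sum_{k} \lambda_k \cov_k^{-1}\m_k \big)$, matching the claimed $\covsa = (\sum_{k} \lambda_k \cov_k^{-1})^{-1}$ and $\msa$. As an alternative one may restrict a priori to Gaussian $\pi = \mc N(\m,\cov)$, substitute the closed form of Lemma~\ref{lemma:KL} into the objective, and minimize the resulting smooth convex function of $(\m,\cov)$ by a first-order condition; I prefer the geometric-mixture argument because it simultaneously certifies that no non-Gaussian $\pi$ can do better.

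The main obstacle is the reduction step: one must justify discarding measures for which the objective is infinite, verify that the geometric mixture $q$ is integrable and Gaussian (which is precisely where positive definiteness of $\sum_{k} \lambda_k \cov_k^{-1}$ enters), and track the constant $\log Z$ carefully so that the identity above is exact rather than merely proportional. Once that identity is established, optimality follows immediately from nonnegativity of the KL divergence, and the parameter identification is a routine completion of squares.
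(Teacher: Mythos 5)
Your argument is correct: the reduction to densities via the $+\infty$ convention, the identity $\sum_k \lambda_k \KL(\pi\parallel\pi_k) = \KL(\pi\parallel q) - \log Z$ with $q \propto \prod_k p_k^{\lambda_k}$, the positive definiteness of $\sum_k \lambda_k \cov_k^{-1}$ guaranteeing that $q$ is a proper Gaussian, and the completion of squares all go through, and the resulting mean and precision match the stated formulas. The paper does not prove this proposition itself but imports it from the cited reference, so there is no in-paper proof to compare against; your geometric-mixture argument is the standard one for the forward-KL (exponential-family) barycenter, and it has the merit of certifying global optimality over all of $\mc M$ rather than only over the Gaussian family, which the alternative first-order-condition route you mention would not do on its own.
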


Next, we discuss the tractability of the bias constraint under the forward KL consensus mechanism. For a given collection of posterior distributions $\{\pi_k\}$, define the following set
\[
    \Lambda_{\ell_{\mc Z}, \delta}^F( \{\pi_k\}) = \left\{
        \lambda \in \Delta_K : ~  \ell_{\mc Z}(\EE_{\mc C_\lambda^F(\{ \pi_k\})}[\theta]- \m_1) \le \delta
    \right\}
\]
that contains all the parameters $\lambda\in\Delta_K$ for which the forward KL mechanism $\mc C_\lambda^F$ satisfies the $(\ell_{\mc Z}, \mu_1,\delta)$-bias requirement.  Unfortunately, despite the analytical expression for the aggregated posterior in Proposition~\ref{prop:consensus-F}, it remains difficult to characterize the set $\Lambda_{\ell_{\mc Z}, \delta}^F( \{\pi_k\})$ in its most general form. This difficulty can be alleviated if we impose additional assumptions on $\{\pi_k\}$ and the set $\mc Z$. 

\begin{proposition}[$(\ell_{\mc Z},\m_1, \delta)$-bias forward KL mechanisms] \label{prop:bias-F}
Suppose that there exists an orthogonal matrix $V \in \R^{n \times n}$ and a collection of diagonal matrices $D_k = \diag(d_k) \in \PD^n$ for $k \in [K]$ such that $\cov_k = V D_k V^\top \succ 0$ and $\pi_k \sim \mc N(\m_k, \cov_k)$ for all $k \in [K]$. 
If $\mc Z = \{ z \in \R^n: \| V^\top z\|_1 \le 1\}$, then 
\begin{equation}\label{eq:biasF}
        \Lambda_{\ell_{\mc Z}, \delta}^F( \{\pi_k\}) = \left\{ \lambda \in \Delta_K: 
       \ds - \delta \le \frac{\sum_{k \in [K]} c_{kj} \lambda_k}{\sum_{k \in [K]}  d_{kj}^{-1} \lambda_k} - v_j^\top \m_1 \le \delta \quad \forall j \in [n]
    \right\},
    \end{equation}
    where $ d_{kj}$ is the $j$-th component of the vector $d_k$, $V = [v_1, \ldots, v_n]$, and
    \[
    c_{kj} = \frac{v_j^\top \m_k}{ d_{kj}} \quad \forall (k,j) \in [K] \times [n].
    \]
\end{proposition}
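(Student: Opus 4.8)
The plan is to reduce the bias constraint $\ell_{\mc Z}(\EE_{\mc C_\lambda^F(\{\pi_k\})}[\theta] - \m_1) \le \delta$ to the $n$ scalar two-sided inequalities in~\eqref{eq:biasF} by two moves: first rewriting the penalization $\ell_{\mc Z}$ in closed form for the rotated $\ell_1$-ball, and then diagonalizing the aggregated mean via the common eigenbasis $V$. Throughout I would use that $\EE_{\mc C_\lambda^F(\{\pi_k\})}[\theta] = \msa$, which is available in closed form from Proposition~\ref{prop:consensus-F}.

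First I would evaluate $\ell_{\mc Z}$ on the specified set $\mc Z = \{z : \|V^\top z\|_1 \le 1\}$. Substituting $w = V^\top z$ (a bijection since $V$ is orthogonal) turns the defining supremum in~\eqref{eq:ell-Z} into $\ell_{\mc Z}(\m) = \Sup{\|w\|_1 \le 1} w^\top (V^\top \m) = \|V^\top \m\|_\infty$, using that the $\ell_\infty$-norm is the dual of the $\ell_1$-norm. Hence the bias constraint becomes $\|V^\top(\msa - \m_1)\|_\infty \le \delta$, which is equivalent to the $n$ coordinate-wise bounds $-\delta \le v_j^\top \msa - v_j^\top \m_1 \le \delta$ for each $j \in [n]$. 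This decoupling across coordinates is exactly what the rotated $\ell_1$ geometry buys us.

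Next I would compute $v_j^\top \msa$ explicitly. By Proposition~\ref{prop:consensus-F}, $\msa = (\sum_k \lambda_k \cov_k^{-1})^{-1}(\sum_k \lambda_k \cov_k^{-1}\m_k)$. Using $\cov_k = V D_k V^\top$ with $V$ orthogonal gives $\cov_k^{-1} = V D_k^{-1} V^\top$, so every matrix in the expression is simultaneously diagonalized by $V$. Writing $D_\lambda \Let \sum_k \lambda_k D_k^{-1}$ (diagonal, with $j$-th entry $\sum_k \lambda_k d_{kj}^{-1}$), I would factor $\sum_k \lambda_k \cov_k^{-1} = V D_\lambda V^\top$ and obtain $V^\top \msa = D_\lambda^{-1}\sum_k \lambda_k D_k^{-1} V^\top \m_k$. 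Reading off the $j$-th coordinate yields the weighted ratio $v_j^\top \msa = (\sum_k \lambda_k d_{kj}^{-1} v_j^\top \m_k)/(\sum_k \lambda_k d_{kj}^{-1})$, and substituting $c_{kj} = v_j^\top \m_k / d_{kj}$ gives precisely the numerator $\sum_k c_{kj}\lambda_k$ over denominator $\sum_k d_{kj}^{-1}\lambda_k$ appearing in~\eqref{eq:biasF}. Combining this with the coordinate-wise bounds from the previous step completes the characterization.

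The computation is otherwise routine; the only genuine content is the first move. The step I expect to require the most care is verifying that the chosen set $\mc Z$ produces the $\ell_\infty$-penalty in the $V$-basis and thereby decouples the constraint — without simultaneous diagonalizability the aggregated mean $\msa$ would not diagonalize and the bias set would not admit the clean separable form, so the hypothesis $\cov_k = V D_k V^\top$ is doing the essential work rather than merely simplifying algebra.
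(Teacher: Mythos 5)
Your proposal is correct and follows essentially the same route as the paper's proof: compute $\msa$ in closed form from Proposition~\ref{prop:consensus-F}, use the common eigenbasis $V$ to reduce it to the diagonal ratio $v_j^\top \msa = (\sum_k c_{kj}\lambda_k)/(\sum_k d_{kj}^{-1}\lambda_k)$, and identify $\ell_{\mc Z}$ with the $\ell_\infty$-norm in the $V$-coordinates via $\ell_1$--$\ell_\infty$ duality. The only difference is cosmetic (you perform the change of variables $w = V^\top z$ where the paper inserts $VV^\top$ and invokes the dual-norm pairing directly), so the argument matches.
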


We are now ready to state the main result of this section, which provides a reformulation of problem~\eqref{eq:prob} under conditions of Proposition~\ref{prop:bias-F}.
\begin{theorem}[Optimal forward KL mechanism] \label{thm:refor}
    Suppose that there exists an orthogonal matrix $V \in \R^{n \times n}$ and a collection of diagonal matrices $D_k = \diag( d_k) \in \PD^n$ for $k \in [K]$ such that $\cov_k = V  D_k V^\top \succ 0$ and $\pi_k \sim \mc N(\m_k, \cov_k)$ for all $k \in [K]$, and suppose that $\mc Z = \{ z \in \R^n: \| V^\top z\|_1 \le 1\}$. Let $\lambda\opt$ be the optimal solution in the variable $\lambda$ of the following second-order cone optimization problem
    \be \label{eq:refor-F}
    \begin{array}{cl}
    \min & \ds\sum_{j\in[n]} \gamma_j \\
    \st & \lambda \in \Delta_K,~\gamma \in \R_+^n \\
        & \!\! \left. \begin{array}{l}
        \ds \sum_{k \in [K]} c_{kj} \lambda_k \le (\delta + v_j^\top \m_1) \sum_{k \in [K]}  d_{kj}^{-1} \lambda_k \\
        \ds \sum_{k \in [K]} c_{kj} \lambda_k \ge (v_j^\top \m_1 - \delta) \sum_{k \in [K]}  d_{kj}^{-1} \lambda_k \\
        \left\| \begin{bmatrix} 
            2 \\ \sum_{k \in [K]} d_{kj}^{-1} \lambda_k - \gamma_j
        \end{bmatrix} \right\|_2 \le \sum_{k \in [K]} d_{kj}^{-1} \lambda_k + \gamma_j
            \end{array} \right\} \quad \forall j \in [n],
        \end{array}
    \ee
    where $d_{kj}$ is the $j$-th component of the vector $ d_k$, $V = [v_1, \ldots, v_n]$, and
    \[
    c_{kj} = \frac{v_j^\top \m_k}{ d_{kj}} \quad \forall (k,j) \in [K] \times [n].
    \]
    Then $\mc C_{\lambda\opt}^F$ is the optimal consensus mechanism in $\mbb C^F$, that is, $\mc C_{\lambda\opt}^F$ is $(\ell_{\mc Z},\mu_1,\delta)$-bias tolerable and has minimal variance.
\end{theorem}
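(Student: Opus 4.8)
The plan is to specialize the generic consensus problem~\eqref{eq:prob} to the forward KL mechanism by substituting the closed forms from Proposition~\ref{prop:consensus-F} and the bias characterization from Proposition~\ref{prop:bias-F}, and then to recognize the resulting problem as a second-order cone program. First I would simplify the objective. Under the simultaneous diagonalizability assumption $\cov_k = V D_k V^\top$, we have $\cov_k^{-1} = V D_k^{-1} V^\top$, so $\sum_{k} \lambda_k \cov_k^{-1} = V \big(\sum_k \lambda_k D_k^{-1}\big) V^\top$, where $\sum_k \lambda_k D_k^{-1}$ is diagonal with $j$-th entry $\sum_k \lambda_k d_{kj}^{-1}$. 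Since $\covsa = (\sum_k \lambda_k \cov_k^{-1})^{-1}$ by Proposition~\ref{prop:consensus-F} and the trace is invariant under orthogonal conjugation,
\[
    \Tr{\covsa} = \sum_{j \in [n]} \frac{1}{\sum_{k \in [K]} \lambda_k d_{kj}^{-1}}.
\]
Each denominator is strictly positive because $D_k \succ 0$ forces $d_{kj} > 0$, and $\lambda \in \Delta_K$ has at least one positive coordinate; this positivity is what legitimizes the subsequent manipulations.

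Next I would linearize the objective via epigraph variables $\gamma_j$: minimizing $\Tr{\covsa}$ is equivalent to minimizing $\sum_j \gamma_j$ subject to $\gamma_j \ge (\sum_k \lambda_k d_{kj}^{-1})^{-1}$. Because the $\gamma_j$ appear nowhere else in the program, any minimizer drives each such constraint to equality, so the reformulated objective value coincides with $\Tr{\covsa}$ and the epigraph relaxation is exact. Writing $s_j = \sum_k \lambda_k d_{kj}^{-1} > 0$, the epigraph constraint reads $\gamma_j s_j \ge 1$ with $\gamma_j, s_j \ge 0$, a hyperbolic constraint that I would convert to the rotated second-order cone form
\[
    \left\| \begin{bmatrix} 2 \\ s_j - \gamma_j \end{bmatrix} \right\|_2 \le s_j + \gamma_j,
\]
verifying by squaring that it reduces to $4 s_j \gamma_j \ge 4$. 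This reproduces exactly the third family of constraints in~\eqref{eq:refor-F}.

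Finally I would translate the bias constraint. Proposition~\ref{prop:bias-F} shows the $(\ell_{\mc Z}, \m_1, \delta)$-bias feasible set consists of the $\lambda \in \Delta_K$ satisfying $-\delta \le (\sum_k c_{kj}\lambda_k)/(\sum_k d_{kj}^{-1}\lambda_k) - v_j^\top \m_1 \le \delta$ for every $j$. Multiplying each inequality through by the positive denominator $\sum_k d_{kj}^{-1}\lambda_k$ preserves the inequality directions and yields the two families of constraints linear in $\lambda$ appearing in~\eqref{eq:refor-F}. Assembling the simplex constraint, these linear bias constraints, and the rotated-cone epigraph constraints gives precisely problem~\eqref{eq:refor-F}, whose linear objective, linear constraints, and rotated second-order cone constraints certify it as a second-order cone program. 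Its optimizer $\lambda\opt$ then satisfies the bias requirement by construction and attains the minimal value of $\Tr{\covsa}$ over $\mbb C^F$, so $\mc C_{\lambda\opt}^F$ is the optimal forward KL consensus mechanism.

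The main obstacle is really just the careful bookkeeping of the two reformulations: ensuring the epigraph relaxation is tight, so that no curvature of the fractional objective is lost, and ensuring the denominator-clearing in the bias constraint preserves the inequality directions. Both of these hinge on the strict positivity of $s_j = \sum_k \lambda_k d_{kj}^{-1}$, which the diagonalizability assumption together with $\lambda \in \Delta_K$ guarantees. Everything else is a direct assembly of Propositions~\ref{prop:consensus-F} and~\ref{prop:bias-F}.
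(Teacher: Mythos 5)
Your proposal is correct and follows essentially the same route as the paper's proof: substitute the closed forms from Propositions~\ref{prop:consensus-F} and~\ref{prop:bias-F}, use simultaneous diagonalizability to reduce $\Tr{\covsa}$ to $\sum_j (\sum_k d_{kj}^{-1}\lambda_k)^{-1}$, clear the positive denominators in the bias constraints, and convert each reciprocal term to a rotated second-order cone constraint. The only difference is cosmetic: the paper cites Section~2.3 of \cite{ref:lobo1998socp} for the hyperbolic-to-cone conversion, whereas you verify it explicitly by squaring and also spell out why the epigraph relaxation is tight.
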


Problem~\eqref{eq:refor-F} is a tractable convex optimization problem, and it can be readily solved using off-the-shelf conic solvers such as MOSEK~\citep{mosek}. Note that the tractability of the reformulation~\eqref{eq:refor-F} depends crucially on the assumption that the covariance matrices $\Sigma_k$ are simultaneously diagonalizable. However, this assumption is rarely satisfied in reality. We now discuss a pre-processing step that projects all the covariance matrices $\Sigma_k$ onto a common eigenbasis. To this end, for a given orthogonal matrix $V \in \R^{n \times n}$, we define the following subspace of Gaussian distributions
\[
    \mc N_V \Let \big\{ \pi \in \mc M: \exists (\m, d) \in \R^n \times \R_{++}^n, \cov = V\diag(d) V^\top, \pi \sim \mc N(\m, \cov) \big\}
\]
that contains all nondegenerate Gaussian distributions whose covariance matrix is diagonalizable with respect to $V$. Given any posterior $\pi_k \sim \N(\m_k, \cov_k)$, we add a pre-processing step to create $\pi_k' \sim \N(\m_k, \cov_k')$ with $\cov_k'$ being diagonalizable with the orthogonal matrix $V$ based on the following lemma.

\begin{lemma}[Projection on $\mc N_V$] \label{lemma:projection-V}
    Let $V$ be an orthogonal matrix. Suppose that $\pi_k \sim \N(\m_k, \cov_k)$ for some $\m_k \in \R^n$ and $\cov_k \in \PD^n$, then the following holds:
    \begin{enumerate}[label=(\roman*),leftmargin=*]
        \item \label{lemma:projection-V1} Let $d_k \in \R_{++}^n$ be such that $(d_k)_j = v_j^\top \cov_k v_j$ for $j \in [n]$. Then the distribution $\pi_k' \sim \N(\m_k, \cov_k')$ with $\cov_k' = V \diag(d_k) V^\top$ is the optimal solution of
        \[
            \Min{\pi_k' \in \mc N_V}~\KL( \pi_k \parallel \pi_k').
        \]
        \item \label{lemma:projection-V2} Let $ d_k \in \R_{++}^n$ be such that $(d_{k})_j = [V^\top (\cov_k)^{-1} V]_{jj}$ for $j \in [n]$. Then the distribution $\pi_k' \sim \N(\m_k, \cov_k')$ with $\cov_k' = V \diag( d_k) V^\top$ is the optimal solution of
        \[
            \Min{\pi_k' \in \mc N_V}~\KL( \pi_k' \parallel \pi_k).
        \] 
        
    \end{enumerate}
\end{lemma}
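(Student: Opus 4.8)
The plan is to treat both parts as the minimization of a Gaussian-to-Gaussian KL divergence over the family $\mc N_V$, for which Lemma~\ref{lemma:KL} supplies a closed form. Writing the free distribution as $\pi_k' \sim \N(\m', \cov')$ with $\cov' = V \diag(d) V^\top$ and $d \in \R_{++}^n$, I would first decouple the optimization over the mean $\m'$ from that over the covariance $\cov'$, and then show that the covariance subproblem splits into $n$ scalar problems indexed by the columns $v_j$ of $V$, each solvable in closed form.

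First I would dispose of the mean. In part~\ref{lemma:projection-V1} the only $\m'$-dependent term of $\KL(\pi_k \parallel \pi_k')$ is the quadratic form $(\m' - \m_k)^\top (\cov')^{-1} (\m' - \m_k)$, and in part~\ref{lemma:projection-V2} the only $\m'$-dependent term of $\KL(\pi_k' \parallel \pi_k)$ is $(\m_k - \m')^\top \cov_k^{-1} (\m_k - \m')$. Since $(\cov')^{-1}$ and $\cov_k^{-1}$ are positive definite, each term is nonnegative, vanishes exactly at $\m' = \m_k$, and attains its minimum there independently of the choice of $\cov'$. Hence the optimal mean is $\m_k$ in both cases, and the problem reduces to optimizing over $d$.

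Next I would reduce the covariance objective to a separable form. Using the cyclic property of the trace together with the orthogonality of $V$ (so that $\det \cov' = \prod_{j} d_j$), the remaining objective in part~\ref{lemma:projection-V1} becomes, up to an additive constant, $\sum_{j \in [n]} \big( (v_j^\top \cov_k v_j)/d_j + \log d_j \big)$, while in part~\ref{lemma:projection-V2} it becomes $\sum_{j \in [n]} \big( (v_j^\top \cov_k^{-1} v_j)\, d_j - \log d_j \big)$. Both sums are separable across $j$, so it suffices to minimize each scalar summand over $d_j > 0$. Each scalar problem is coercive on $(0, \infty)$ and strictly convex after the reparametrization $d_j \mapsto 1/d_j$ in part~\ref{lemma:projection-V1} (and directly convex in part~\ref{lemma:projection-V2}), so the first-order condition pins down the unique global minimizer: $d_j = v_j^\top \cov_k v_j$ in part~\ref{lemma:projection-V1}, recovering $(d_k)_j = v_j^\top \cov_k v_j$; and $d_j = 1/(v_j^\top \cov_k^{-1} v_j)$ in part~\ref{lemma:projection-V2}, i.e. the optimal $\pi_k'$ is the Gaussian whose precision matrix $(\cov_k')^{-1}$ has $V$-basis diagonal $[V^\top \cov_k^{-1} V]_{jj} = (d_k)_j$. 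Positivity of every minimizer is automatic from $\cov_k \succ 0$.

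I expect the only delicate point to be the reciprocal appearing in part~\ref{lemma:projection-V2}: unlike the forward KL, which matches the $V$-basis diagonal of the covariance $\cov_k$, the backward KL matches the $V$-basis diagonal of the precision $\cov_k^{-1}$, so one must track carefully whether an eigenvalue or its reciprocal is being fit. Beyond that, the remaining care is purely in confirming global (rather than merely stationary) optimality, which the coercivity and convexity remarks above handle, and in noting that membership in $\mc N_V$ imposes no constraint other than $d \in \R_{++}^n$, so no additional multipliers enter the first-order analysis.
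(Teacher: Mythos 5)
Your argument is correct and is essentially the paper's own proof: both use the closed-form Gaussian KL, observe that the mean term decouples and is minimized at $\m'=\m_k$ regardless of the covariance, and then reduce the covariance subproblem to $n$ separable scalar problems in the $V$-basis; your additional remarks on coercivity and convexity (after the substitution $d_j\mapsto 1/d_j$ in part~(i)) are a welcome justification of global optimality that the paper leaves implicit.

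The one substantive point concerns part~(ii), and here you are right and the stated lemma is not. Your first-order condition correctly yields $d_j = 1/(v_j^\top \cov_k^{-1} v_j)$, i.e.\ the optimal \emph{covariance} is $\cov_k' = V\diag\bigl(1/[V^\top\cov_k^{-1}V]_{jj}\bigr)V^\top$, equivalently the optimal \emph{precision} has $V$-basis diagonal $[V^\top\cov_k^{-1}V]_{jj}$. The lemma, however, literally asserts $\cov_k' = V\diag(d_k)V^\top$ with $(d_k)_j = [V^\top\cov_k^{-1}V]_{jj}$ (no reciprocal), and the paper's proof repeats the same slip in its last line, claiming that $(d_k)_j=[V^\top(\cov_k)^{-1}V]_{jj}$ minimizes $-\log\det\diag(\wh d_k)+\Tr{\diag(\wh d_k)V^\top(\cov_k)^{-1}V}$, whose actual minimizer is the entrywise reciprocal. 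Rather than quietly reconciling this by re-reading $(d_k)_j$ as the diagonal of the precision matrix (which contradicts the lemma's own definition $\cov_k'=V\diag(d_k)V^\top$), you should state explicitly that the lemma as written is off by a reciprocal in part~(ii) and give the corrected formula $(d_k)_j = 1/[V^\top(\cov_k)^{-1}V]_{jj}$. Part~(i) is stated correctly and your derivation matches it exactly.
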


\section{Optimal Wasserstein Consensus Mechanism}
\label{sec:Wass}

In this section, we will measure the dissimilarity between distributions using the Wasserstein type-2 distance.

\begin{definition}[Wasserstein distance]
	\label{definition:wasserstein}	
	Given two probability measures $\nu_1$ and $\nu_2$, the type-2 Wasserstein distance between $\nu_1$ and $\nu_2$ is defined as 
	\be
	\notag
	\Wass_2(\nu_1, \nu_2) \Let \Min{\gamma}  \left( \int_{\R^n \times \R^n} \|\theta_1 - \theta_2\|_2^2\, \gamma({\rm d}\theta_1, {\rm d} \theta_2) \right)^{\half},
	\ee
	where the minimization is taken over all couplings of $\nu_1$ and $\nu_2$. 
\end{definition}

The type-2 Wasserstein distance $\Wass_2$ is non-negative and it vanishes only if $\nu_1 = \nu_2$. Moreover, $\Wass_2$ is symmetric and satisfies the triangle inequality. As a consequence, $\Wass_2$ represents a metric on the space of probability measures~\cite[p.~94]{ref:villani2008optimal}. The type-2 Wasserstein distance $\Wass_2$ is a fundamental tool in the theory of optimal transport, and its squared value can be interpreted as the minimum cost of transporting the distribution $\nu_1$ to $\nu_2$, in which the cost of moving a unit probability mass is given by the squared Euclidean distance. The application of the Wasserstein distance has also been extended to economics~\citep{ref:galichon2016optimal}, machine learning~\citep{ref:kuhn2019wasserstein} and statistics~\citep{ref:panaretos2019statistical}.

As the posteriors $\pi_k$ are Gaussian under Assumption~\ref{a:bayes}, we often need to evaluate the Wasserstein distance from a Gaussian distribution. Conveniently, the Wasserstein distance between two Gaussian distributions can be written in an analytical form~\citep{ref:givens1984class}.   

\begin{definition}[Wasserstein type-2 distance between Gaussian distributions] \label{def:Wass}
	The Wasserstein $\Wass_2$ distance between $\mc N(\m, \cov)$ and $\mc N(\m', \cov')$ is given by
\begin{equation*}
\Wass_2\big( \mc N(\m, \cov) , \mc N(\m', \cov') \big) = \sqrt{ \| \m - \m' \|_2^2 + \Tr{\cov + \cov' - 2 \big( (\cov')^{\half} \cov (\cov')^{\half} \big)^{\half} } }.
\end{equation*}
\end{definition}

In this section, we use $\varphi$ as the Wasserstein distance $\Wass_2$ to aggregate the posteriors. In more details, for any two probability distributions $\nu_1$ and $\nu_2$, we use $\varphi(\nu_1,\nu_2)=\Wass_2(\nu_1, \nu_2)$. We use $\mc C_\lambda^W$ to denote the consensus mechanism induced by the Wasserstein metric, and $\mc C_\lambda^W$ is formally defined as follows. 

\begin{definition}[Wasserstein consensus mechanisms] \label{def:consensus-W}
    A mechanism $\mc C_\lambda^W$ with weight $\lambda$ is called a Wasserstein consensus mechanism if for any collection of posterior distributions $\{\pi_k\} \in \mc N^K$, we have
\be \label{eq:consensus-W}
    \mc C_\lambda^W(\{\pi_k\})  = \arg \min_{\pi \in \N}  \sum_{k \in [K]}\lambda_k~\Wass_2 (\pi , \pi_k)^2.
\ee
\end{definition} 

Notice that by Definition~\ref{def:consensus-W}, we will optimize only over the space of Gaussian distributions in problem~\eqref{eq:consensus-W}. The Wasserstein barycenter of the form~\eqref{eq:consensus-W} is widely used to fuse distributions and aggregate information. It has many successful applications in machine learning, for example, distribution clustering~\citep{ref:ye2017fast}, shape and texture interpolation~\citep{ref:rabin2012wasserstein,ref:solomon2015con}, and multi-target tracking~\citep{ref:baum2015on}. The set of Wasserstein mechanisms parametrized by the simplex $\Delta_K$ is denoted by $\mbb C^W \Let \{ \mc C_\lambda^W: \lambda \in \Delta_K\}$. When the posteriors $\pi_k$ are Gaussian under Assumption~\ref{a:bayes}, we can characterize $\mc C_\lambda^W$ semi-analytically thanks to the following proposition.

\begin{proposition}[{Closed form of the consensus~\citep{ref:agueh2011barycenter}}] \label{prop:consensus-W}
Suppose that $\pi_k \sim \N(\m_k, \cov_k)$ with $\cov_k \succ 0$ for any $k \in [K]$. For any $\lambda \in \Delta_K$, we have $\mc C_\lambda^W(\{\pi_k\}) = \mc N(\msa, \covsa)$ with $\msa = \sum_{k \in [K]} \lambda_k \mu_k \in \R^n$ and $\covsa \in \PD^n$ is the unique solution of the matrix equation
    \be
        \covsa = \sum_{k \in [K]} \lambda_k (\covsa^\half \cov_k \covsa^\half)^\half.
    \ee
\end{proposition}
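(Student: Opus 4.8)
The plan is to use the closed-form expression from Definition~\ref{def:Wass} to turn the infinite-dimensional barycenter problem~\eqref{eq:consensus-W} into a finite-dimensional minimization over the parameters of a single Gaussian. Because the feasible set in~\eqref{eq:consensus-W} is already restricted to $\N$, I do not need the deeper fact that the true Wasserstein barycenter of Gaussians is itself Gaussian; I may simply optimize over $(\m, \cov)$. Writing the candidate as $\pi \sim \N(\m, \cov)$ and substituting Definition~\ref{def:Wass}, the objective becomes
\[
    \sum_{k \in [K]} \lambda_k \Wass_2(\pi, \pi_k)^2 = \sum_{k \in [K]} \lambda_k \| \m - \m_k\|_2^2 + \sum_{k \in [K]} \lambda_k \Tr{\cov + \cov_k - 2 (\cov_k^\half \cov \cov_k^\half)^\half},
\]
which decouples additively into a term depending only on $\m$ and a term depending only on $\cov$. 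Hence the two blocks can be optimized separately.

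First I would dispatch the mean. The map $\m \mapsto \sum_{k \in [K]} \lambda_k \| \m - \m_k\|_2^2$ is a strictly convex quadratic with gradient $2 \sum_{k \in [K]} \lambda_k (\m - \m_k)$; since $\sum_{k \in [K]} \lambda_k = 1$, the unique stationary point is $\msa = \sum_{k \in [K]} \lambda_k \m_k$, which matches the claimed barycenter mean.

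The covariance block is the substantive part. Dropping the constant $\sum_{k} \lambda_k \Tr{\cov_k}$ and using $\sum_k \lambda_k = 1$, it suffices to minimize $g(\cov) = \Tr{\cov} - 2 \sum_{k \in [K]} \lambda_k \Tr{(\cov^\half \cov_k \cov^\half)^\half}$ over $\cov \in \PD^n$, where I have used the cyclic-eigenvalue identity $\Tr{(\cov_k^\half \cov \cov_k^\half)^\half} = \Tr{(\cov^\half \cov_k \cov^\half)^\half}$. The core computation is the gradient of the trace-of-square-root term. I would invoke the matrix identity that, for fixed $\cov_k \succ 0$, the map $\cov \mapsto \Tr{(\cov^\half \cov_k \cov^\half)^\half}$ has gradient $\half T_k$, where $T_k = \cov^{-\half} (\cov^\half \cov_k \cov^\half)^\half \cov^{-\half}$ is the symmetric positive-definite optimal transport map sending $\N(0, \cov)$ to $\N(0, \cov_k)$ (it satisfies $T_k \cov T_k = \cov_k$). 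Consequently $\grad g(\cov) = I - \sum_{k \in [K]} \lambda_k T_k$, and the first-order condition reads $\sum_{k \in [K]} \lambda_k T_k = I$. Conjugating this equation by $\cov^\half$ and using $\cov^\half T_k \cov^\half = (\cov^\half \cov_k \cov^\half)^\half$ yields precisely the asserted fixed-point equation $\covsa = \sum_{k \in [K]} \lambda_k (\covsa^\half \cov_k \covsa^\half)^\half$.

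Finally I would address existence and uniqueness. The function $\cov \mapsto \Tr{(\cov^\half \cov_k \cov^\half)^\half}$ is concave on $\PD^n$ (for instance via its semidefinite representation as the maximum of $\Tr{M}$ subject to the affine-in-$\cov$ constraint that the block matrix with diagonal blocks $\cov, \cov_k$ and off-diagonal block $M$ is positive semidefinite), so $g$ is convex as a linear term minus a concave one; any stationary point is therefore a global minimizer. Coercivity of $g$ on $\PD^n$ gives existence, and uniqueness of the fixed point follows from the strict convexity afforded by the positive-definiteness of all $\cov_k$, or may be cited directly from~\citep{ref:agueh2011barycenter}. The main obstacle I anticipate is the justification of the matrix derivative $\grad \Tr{(\cov^\half \cov_k \cov^\half)^\half} = \half T_k$, which requires differentiating the matrix square root and handling the $\cov$-dependence inside both the outer root and the inner conjugation; the rest of the argument is routine once this identity and the concavity claim are in hand.
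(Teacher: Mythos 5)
The paper does not actually prove Proposition~\ref{prop:consensus-W}; it imports the result wholesale from \cite{ref:agueh2011barycenter}, so there is no in-paper argument to compare against. Your blind derivation is, however, correct and is essentially the standard route taken in the Gaussian-barycenter literature: since Definition~\ref{def:consensus-W} already restricts the argmin to $\N$, substituting the closed form of $\Wass_2$ between Gaussians legitimately reduces the problem to a finite-dimensional one that decouples into a mean block and a covariance block; the mean block is a strictly convex quadratic with minimizer $\msa = \sum_k \lambda_k \m_k$; and the first-order condition $\sum_k \lambda_k T_k = I$ for the covariance block, conjugated by $\covsa^{\half}$, is exactly the asserted fixed-point equation. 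The two ingredients you flag as the technical heart are indeed the only nontrivial steps, and both are true: the gradient identity $\grad_{\cov}\Tr{(\cov^{\half}\cov_k\cov^{\half})^{\half}} = \half T_k$ with $T_k = \cov^{-\half}(\cov^{\half}\cov_k\cov^{\half})^{\half}\cov^{-\half}$ is the known differentiability of the Bures functional, and concavity of $\cov \mapsto \Tr{(\cov^{\half}\cov_k\cov^{\half})^{\half}}$ follows from the semidefinite representation you describe (in fact, writing it as $\Tr{(\cov_k^{\half}\cov\,\cov_k^{\half})^{\half}}$ and using strict concavity of $X \mapsto \Tr{X^{\half}}$ composed with the linear bijection $\cov \mapsto \cov_k^{\half}\cov\,\cov_k^{\half}$ gives \emph{strict} concavity, which delivers the uniqueness you want without further appeal). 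The only point you gloss over is that the minimization runs over the open cone $\PD^n$, so beyond coercivity one must rule out the minimizer degenerating to a singular matrix on the boundary; this is where citing \cite{ref:agueh2011barycenter} (or the positivity of the barycenter when all $\cov_k \succ 0$) is the cleanest fix, and you already offer that escape hatch. In short: the proposal is sound and supplies a proof the paper omits.
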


Proposition~\ref{prop:consensus-W} asserts that the aggregated mean $\msa$ is simply the weighted average of the posterior means $\mu_k$. The aggregated covariance matrix $\covsa$ is generally not available in closed form. Nevertheless, there exists a fast numerical routine to compute~$\covsa$ for any set of input $\{\cov_k\}$ using fixed point iterations~\citep{ref:alvarez2016fixed}. However, for the special case of $K=2$~\cite{ref:McCann1997convexity} show that $\covsa$ admits a closed-form expression as
\[
\covsa = \left(\lambda_1 I_n+\lambda_2\Phi\right)\Sigma_1\left(\lambda_1I_n+\lambda_2\Phi\right) \in \PD^n,
\]
where $\Phi =\Sigma_2^{1/2}\left(\Sigma_2^{1/2}\Sigma_1\Sigma_2^{1/2}\right)^{-1/2}\Sigma_2^{1/2}$.

In the next step, we leverage the expression of $\msa$ to reformulate the bias tolerance constraint. For the collection of posterior distributions $\{\pi_k\}$, define the following set
\[
    \Lambda_{\ell_{\mc Z}, \delta}^W( \{\pi_k\}) = \left\{
        \lambda \in \Delta_K : ~  \ell_{\mc Z}(\EE_{\mc C_\lambda^W(\{ \pi_k\})}[\theta]- \m_1) \le \delta
    \right\}
\]
that contains the subset of $\lambda\in\Delta_K$ for which the Wasserstein mechanism $\mc C_\lambda^W$ satisfies the $(\ell_{\mc Z},\m_1, \delta)$-bias requirement. The next proposition provides the expression of~$\Lambda_{\ell_{\mc Z}, \delta}^W( \{\pi_k\})$ corresponding to the examples of $\mc Z$ in Example~\ref{ex:penalty}.
\begin{proposition}[$(\ell_{\mc Z}, \m_1,\delta)$-bias Wasserstein mechanisms] \label{prop:bias-W} 
    The following assertions hold.
     \begin{enumerate}[label=(\roman*),leftmargin=*]
         \item If $\mc Z = \{ z \in \R^n: \| z\|_2 \le 1\}$ is the Euclidean ball, then \begin{equation}\label{eq:euclideanball}
        \Lambda_{\ell_{\mc Z}, \delta}^W( \{\pi_k\}) = \left\{ \lambda \in \Delta_K: 
      \left\|\sum_{k \in [K]}\lambda_k\mu_k - \m_1\right\|_2\leq \delta
    \right\}.
    \end{equation}
        \item If $\mc Z$ is a convex polyhedron $\mc Z = \{ z \in \R^n: Az \le b\}$ for some matrix $A$ and vector $b$ of appropriate dimensions, and if the linear programming
        \[
        \begin{array}{cl}
        \inf & b^\top w\\
        \st & A^\top w = \sum_{k\in[K]}\lambda_k\mu_k-\mu_1,~w\geq0
        \end{array}
        \]
        has an optimal solution for any $\lambda\in\Delta_K$, then
        \[
        \Lambda_{\ell_{\mc Z}, \delta}^W( \{\pi_k\}) = \left\{ \lambda \in \Delta_K: \exists w\geq0,~ b^\top w\leq \delta,~ A^\top w = \left(\sum_{k \in [K]}\lambda_k\m_k-\m_1\right)\right\}.
        \]
    
    \item If $\mc Z = \{ \mathbbm 1_n\}$, then
        \[
        \Lambda_{\ell_{\mc Z}, \delta}^W( \{\pi_k\}) = \left\{ \lambda \in \Delta_K: 
      \sum_{k \in [K]}\lambda_k\mathbbm1_n^\top\mu_k - \mathbbm1_n^\top\m_1\leq \delta
    \right\}.
    \]
    \end{enumerate}
\end{proposition}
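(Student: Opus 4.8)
The plan is to combine the closed-form expression for the aggregated mean from Proposition~\ref{prop:consensus-W} with the support-function representation of $\ell_{\mc Z}$ in~\eqref{eq:ell-Z}, and then specialize to each of the three choices of $\mc Z$. First I would invoke Proposition~\ref{prop:consensus-W}, which guarantees that $\EE_{\mc C_\lambda^W(\{\pi_k\})}[\theta] = \msa = \sum_{k \in [K]} \lambda_k \m_k$. Substituting this into the definition of $\Lambda_{\ell_{\mc Z}, \delta}^W(\{\pi_k\})$ reduces the membership condition to the single inequality
\[
    \ell_{\mc Z}\Big( \sum_{k \in [K]} \lambda_k \m_k - \m_1 \Big) \le \delta,
\]
which, recalling~\eqref{eq:ell-Z}, reads $\Sup{z \in \mc Z}~ z^\top \big( \sum_{k \in [K]} \lambda_k \m_k - \m_1 \big) \le \delta$. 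The entire argument then amounts to evaluating this supremum for each prescribed $\mc Z$.

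For part~(i), the set $\mc Z$ is the Euclidean unit ball, so $\ell_{\mc Z}$ is the dual norm of $\|\cdot\|_2$, which is $\|\cdot\|_2$ itself; the constraint becomes $\|\sum_{k\in[K]} \lambda_k \m_k - \m_1\|_2 \le \delta$, giving~\eqref{eq:euclideanball}. For part~(iii), $\mc Z$ is the singleton $\{\mathbbm 1_n\}$, so the supremum collapses to the single evaluation $\mathbbm 1_n^\top \big(\sum_{k\in[K]} \lambda_k \m_k - \m_1\big)$, yielding the stated half-space description directly. Both of these are immediate once the mean formula has been substituted.

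The main work lies in part~(ii). Here $\ell_{\mc Z}(\mu) = \sup\{ z^\top \mu : Az \le b\}$ is itself the optimal value of a linear program in $z$. I would pass to its Lagrangian dual, $\inf\{ b^\top w : A^\top w = \mu,~ w \ge 0\}$, and invoke linear-programming strong duality: under the stated hypothesis that this program attains an optimal solution for every $\lambda \in \Delta_K$, the primal and dual optimal values coincide. Consequently the constraint $\ell_{\mc Z}(\sum_{k\in[K]} \lambda_k \m_k - \m_1) \le \delta$ holds if and only if there exists a dual-feasible $w \ge 0$ with $A^\top w = \sum_{k\in[K]} \lambda_k \m_k - \m_1$ and $b^\top w \le \delta$, which is exactly the claimed characterization.

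I expect the only genuine subtlety to be in part~(ii): one must ensure that \emph{strong} duality (equality of optimal values), rather than mere weak duality, is available, since weak duality alone would only give an inclusion between the two sets. This is precisely what the attainment hypothesis in the statement secures, and I would state this dependence explicitly rather than relying on a Slater-type constraint qualification. Parts~(i) and~(iii) are routine evaluations of the support function and carry no real obstacle.
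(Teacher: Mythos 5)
Your proposal is correct and follows essentially the same route as the paper's proof: substitute the barycenter mean $\msa = \sum_{k}\lambda_k\m_k$ from Proposition~\ref{prop:consensus-W}, observe that parts~(i) and~(iii) follow directly from the definition of $\ell_{\mc Z}$, and handle part~(ii) by passing to the dual linear program and invoking strong duality under the stated attainment hypothesis. Your explicit remark that strong (rather than weak) duality is what is needed is a fair point of emphasis, but it matches what the paper already does.
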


Because the aggregated covariance matrix $\covsa$ is not available in closed form  for general $K$, it is intractable to devise the optimal consensus mechanism under the Wasserstein distance in the general setting. Fortunately, if we impose the simultaneous diagonalizability assumption similar to Theorem~\ref{thm:refor}, then the optimal mechanism can be identified by solving a convex optimization problem, as highlighted in the next theorem.

\begin{theorem}[Optimal Wasserstein consensus mechanism] \label{thm:consensus-W}
    Suppose that there exists an orthogonal matrix $V \in \R^{n \times n}$ and a collection of diagonal matrices $D_k = \diag( d_k) \in \PD^n$ for $k \in [K]$ such that $\cov_k = V  D_k V^\top \succ 0$. Let $\lambda\opt$ be the optimal solution to the following problem
    \be \label{eq:consensus-W-refor}
        \begin{array}{cl}
        \min & \sum_{i, j \in [K]} \lambda_i \lambda_j \Tr{\cov_i^\half \cov_j^\half}\\
        \st & \lambda \in \Delta_K \\
        &  \ell_{\mc Z}\left(\sum_{k \in [K]}\lambda_k\m_k - \m_1\right) \le \delta.
    \end{array}
\ee
 Then $\mc C_{\lambda\opt}^W$ is the optimal consensus mechanism in $\mbb C^W$, that is, $\mc C_{\lambda\opt}^W$ is $(\ell_{\mc Z},\mu_1,\delta)$-bias tolerable and has minimal variance.
\end{theorem}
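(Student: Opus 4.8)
The plan is to show that, under the simultaneous diagonalizability hypothesis, the abstract consensus problem~\eqref{eq:prob} specialized to $\varphi = \Wass_2^2$ coincides exactly with~\eqref{eq:consensus-W-refor}. Both the objective (the trace of the aggregated covariance) and the constraint (the aggregated mean) are available in closed form, so the whole argument reduces to two computations followed by an appeal to the optimality criteria in Definition~\ref{def:optimal-consensus}.

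First I would dispose of the constraint. By Proposition~\ref{prop:consensus-W}, the aggregated mean is $\msa = \sum_{k \in [K]} \lambda_k \m_k$, so $\EE_{\mc C_\lambda^W(\{\pi_k\})}[\theta] = \sum_{k \in [K]} \lambda_k \m_k$ and the bias-tolerance requirement $\ell_{\mc Z}(\EE_{\mc C_\lambda^W(\{\pi_k\})}[\theta] - \m_1) \le \delta$ becomes verbatim the constraint $\ell_{\mc Z}(\sum_{k \in [K]} \lambda_k \m_k - \m_1) \le \delta$ appearing in~\eqref{eq:consensus-W-refor}. No diagonalizability is needed here.

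The substance is the objective. Proposition~\ref{prop:consensus-W} characterizes $\covsa$ only implicitly, as the unique solution of $\covsa = \sum_{k \in [K]} \lambda_k (\covsa^\half \cov_k \covsa^\half)^\half$. The key step is to exhibit a solution diagonal in the common basis $V$: I would posit $\covsa = V \wh D V^\top$ with $\wh D$ diagonal, substitute $\cov_k = V D_k V^\top$, and use that diagonal matrices commute to collapse $(\covsa^\half \cov_k \covsa^\half)^\half = V \wh D^\half D_k^\half V^\top$. Cancelling the (invertible) factor $\wh D^\half$, the fixed-point equation reduces to $\wh D^\half = \sum_{k \in [K]} \lambda_k D_k^\half$, i.e.\ $\wh D = (\sum_{k \in [K]} \lambda_k D_k^\half)^2$. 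Since Proposition~\ref{prop:consensus-W} guarantees the solution is unique, this candidate is $\covsa$. Taking the trace and using $\Tr{\cov_i^\half \cov_j^\half} = \Tr{D_i^\half D_j^\half}$ (orthogonality of $V$) gives $\Tr{\covsa} = \Tr{(\sum_{k \in [K]} \lambda_k D_k^\half)^2} = \sum_{i, j \in [K]} \lambda_i \lambda_j \Tr{\cov_i^\half \cov_j^\half}$, which is precisely the objective of~\eqref{eq:consensus-W-refor}.

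With both the objective and the constraint matched, problems~\eqref{eq:prob} and~\eqref{eq:consensus-W-refor} have identical feasible sets and identical objective values on $\Delta_K$, so any minimizer $\lambda\opt$ of~\eqref{eq:consensus-W-refor} yields an aggregated posterior $\mc C_{\lambda\opt}^W(\{\pi_k\})$ that is $(\ell_{\mc Z}, \m_1, \delta)$-bias tolerable and of minimal variance, which is the assertion. I do not expect a genuine obstacle: the simultaneous diagonalizability hypothesis is exactly what makes the otherwise-implicit Wasserstein barycenter covariance explicit, and the only point requiring care is justifying the diagonal ansatz through the uniqueness in Proposition~\ref{prop:consensus-W} rather than merely checking that it solves the equation. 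As a closing remark I would observe that the objective is the quadratic form $\lambda^\top Q \lambda$ with $Q_{ij} = \Tr{\cov_i^\half \cov_j^\half} = \inner{\cov_i^\half}{\cov_j^\half}$ a Gram matrix in the Frobenius inner product, hence $Q \succeq 0$ and the program is convex; this explains the tractability claimed in the text and clarifies why both the closed form and the convexity are lost once the $\cov_k$ cannot be simultaneously diagonalized.
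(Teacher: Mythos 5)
Your proposal is correct and follows essentially the same route as the paper: both reduce the problem to the closed-form barycenter covariance $\covsa = \bigl(\sum_{k \in [K]} \lambda_k \cov_k^\half\bigr)^2$ under simultaneous diagonalizability and then take the trace to obtain the quadratic objective, with the bias constraint handled directly from $\msa = \sum_k \lambda_k \m_k$. The only difference is that the paper obtains the closed form by citing Remark~4.4 of \cite{ref:alvarez2016fixed}, whereas you derive it yourself via the diagonal ansatz in the common basis $V$ together with the uniqueness guaranteed by Proposition~\ref{prop:consensus-W} --- a valid and self-contained substitute.
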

    Notice that the function $ \lambda \mapsto \sum_{i, j \in [K]} \lambda_i \lambda_j \Tr{\cov_i^\half \cov_j^\half} = \lambda^\top G \lambda$
    is convex in $\lambda$ because the matrix $G = (\Tr{\cov_i^\half \cov_j^\half})_{i, j \in [K]}$ is the Gram matrix of the linear reproducing kernel\footnote{More details on the reproducing kernel Hilbert space can be found in~\cite{ref:berlinet2004reproducing}.} on $\PD^n$ applied to $(\cov_1^\half, \ldots, \cov_K^\half)$, and thus $G$ is a positive semidefinite matrix. The result of Proposition~\ref{prop:bias-W} can now be incorporated into the reformulation~\eqref{eq:consensus-W-refor}, and lead to a convex finite-dimensional optimization problem, which can be solved efficiently using off-the-shelf solvers such as MOSEK~\citep{mosek}. 
    
   For the special case of $K=2$, we can leverage McCann's interpolation~\citep{ref:McCann1997convexity} to identify the optimal consensus mechanism without the simultaneous diagonalizability assumption.
    \begin{proposition} \label{prop:reformKis2}
    Suppose that $\pi_k \sim \N(\m_k, \cov_k)$ with $\cov_k\succ 0$ for $k=1,2$. Let $\lambda\opt$ be the optimal solution in the variable $\lambda$ of the following quadratic programming
     \be \label{eq:reformKis2}
        \begin{array}{cl}
        \min & \lambda_1^2\mathrm{Tr}(\Sigma_1)+\lambda_2^2\mathrm{Tr}(\Sigma_2)+2\lambda_1\lambda_2\mathrm{Tr}(\Sigma_1\Phi)\\
        \st & \lambda \in \Delta \\
        & \ell_{\mc Z}\left(\sum_{k \in [2]}\lambda_k\m_k - \m_1\right) \le \delta,
    \end{array}
\ee
  where $\Phi =\Sigma_2^{1/2}\left(\Sigma_2^{1/2}\Sigma_1\Sigma_2^{1/2}\right)^{-1/2}\Sigma_2^{1/2}$. Then $\mc C_{\lambda\opt}^W$ is the optimal consensus mechanism in $\mbb C^W$.
\end{proposition}
This special case allows us to interpolate easily between two posteriors. In our simulation and empirical analysis, we use this case for a Wasserstein interpolation between the two extreme posteriors of using either only the training data or the full panel. We demonstrate that the full Wasserstein mechanism for general $K$ dominates out-of-sample a simple Wasserstein interpolation, that is, fusing more layers of additional information results in a better look-ahead-bias and variance trade-off out-of-sample.

\section{Optimal Backward Kullback-Leibler Consensus Mechanism}
\label{sec:backward}

In this section, we revisit the KL divergence setting, and we consider $\varphi$ as the backward KL-divergence, i.e., for any two probability distributions $\nu_1$ and $\nu_2$, we use $\varphi(\nu_1,\nu_2)=\KL(\nu_2\parallel\nu_1)$. Note that this choice of $\varphi$ places the posteriors $\pi_k$ as the first argument in the divergence. We use $\mc C_\lambda^B$ to denote the consensus mechanism induced by the backward KL-divergence, and $\mc C_\lambda^B$ is formally defined as follows.

\begin{definition}[Backward KL consensus mechanisms]
    A mechanism $\mc C_\lambda^B$ with weight $\lambda$ is called a backward KL consensus mechanism if, for any collection of posterior distributions $\{\pi_k\}$, we have
\be \label{eq:consensus-B}
    \mc C_\lambda^B(\{\pi_k\})  = \arg \min_{\pi \in \mc M}  \sum_{k \in [K]}\lambda_k~\KL (\pi_k \parallel \pi).
\ee
\end{definition} 

Given a simplex $\Delta_K$, we are interested in the set of backward KL mechanisms $\mbb C^B \Let \{ \mc C_\lambda^B: \lambda \in \Delta_K\}$. Problem~\eqref{eq:consensus-B} is infinite-dimensional as the decision variable $\pi$ ranges over the space of all probability distributions supported on $\mathbb{R}^n$. Fortunately, when the posteriors $\pi_k$ are Gaussian under Assumption~\ref{a:bayes}, we can characterize $\mc C_\lambda^B$ analytically. The next proposition states that $\mc C_\lambda^B$ is a mixture of normal distributions.

\begin{proposition}[Closed form of the consensus~{\cite[Theorem~1]{ref:da2020kullback}}] \label{prop:consensus-B}
    Suppose that $\pi_k \sim \N(\m_k, \cov_k)$ with $\cov_k \succ 0$ for any $k \in [K]$. For any $\lambda \in \Delta_K$, we have $\mc C_\lambda^B(\{\pi_k\}) = \sum_{k\in[K]}\lambda_k\pi_k$, a mixture of normal distributions.
\end{proposition}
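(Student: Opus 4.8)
The goal is to show that for Gaussian inputs $\pi_k \sim \mc N(\m_k, \cov_k)$ the backward KL barycenter $\mc C_\lambda^B(\{\pi_k\}) = \arg\min_{\pi \in \mc M} \sum_{k} \lambda_k \KL(\pi_k \parallel \pi)$ equals the mixture $\sum_k \lambda_k \pi_k$. The plan is to attack the variational problem directly by computing the first-order optimality condition over the infinite-dimensional space $\mc M$. First I would write the objective as $J(\pi) = \sum_k \lambda_k \EE_{\pi_k}[\log(\dd\pi_k/\dd\pi)]$, expand $\KL(\pi_k \parallel \pi) = \EE_{\pi_k}[\log(\dd\pi_k/\dd\mu_{\mathrm{ref}})] - \EE_{\pi_k}[\log(\dd\pi/\dd\mu_{\mathrm{ref}})]$ against a common reference measure (Lebesgue on $\R^n$), and note that the first term is independent of the decision variable $\pi$. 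Hence minimizing $J(\pi)$ is equivalent to maximizing $\sum_k \lambda_k \EE_{\pi_k}[\log p(\theta)] = \int \big(\sum_k \lambda_k p_k(\theta)\big) \log p(\theta)\, \dd\theta$, where $p$ and $p_k$ denote the densities of $\pi$ and $\pi_k$.

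The key step is then to recognize this as a weighted cross-entropy maximization. Setting $q(\theta) = \sum_k \lambda_k p_k(\theta)$ — which is itself a valid density since $\lambda \in \Delta_K$ — the objective becomes $\int q(\theta) \log p(\theta)\, \dd\theta$, and by Gibbs' inequality (equivalently, the non-negativity of $\KL(q \parallel p)$ from Definition~\ref{def:KL}) this is maximized uniquely over all densities $p$ by the choice $p = q$. This immediately identifies the minimizer as the mixture density $\sum_k \lambda_k p_k$, i.e.\ $\mc C_\lambda^B(\{\pi_k\}) = \sum_{k \in [K]} \lambda_k \pi_k$. The Gaussianity of the $\pi_k$ plays no essential role in the optimization itself; it only guarantees that the $\KL(\pi_k \parallel \pi)$ terms are finite and well-defined whenever $\pi$ has full support, so that the absolute-continuity requirement in Definition~\ref{def:KL} is met.

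The main obstacle is the technical bookkeeping of the infinite-dimensional optimization: I must verify that restricting attention to densities absolutely continuous with respect to Lebesgue measure is without loss of generality (any candidate $\pi$ failing to dominate some $\pi_k$ yields $\KL(\pi_k \parallel \pi) = +\infty$, so such $\pi$ are never optimal), and that the objective is well-defined and bounded so the $\arg\min$ is attained. A clean way to sidestep a direct calculus-of-variations argument is the reduction to Gibbs' inequality above, which turns the whole problem into a single application of $\KL \ge 0$ and thereby avoids any Lagrangian or functional-derivative computation. This also transparently shows uniqueness of the minimizer, since $\KL(q \parallel p) = 0$ forces $p = q$ almost everywhere.
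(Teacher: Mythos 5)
The paper does not actually prove Proposition~\ref{prop:consensus-B}: it imports the statement wholesale from Theorem~1 of \cite{ref:da2020kullback}, so there is no internal proof to compare against. Your argument supplies the standard (and correct) derivation: split $\KL(\pi_k\parallel\pi)=\int p_k\log p_k-\int p_k\log p$, observe the first term is constant in $\pi$, aggregate the cross-entropy terms into $\int q\log p$ with $q=\sum_k\lambda_k p_k$, and invoke Gibbs' inequality $\int q\log p\le\int q\log q$ with equality iff $p=q$ a.e. Equivalently, this is the compensation identity $\sum_k\lambda_k\KL(\pi_k\parallel\pi)=\sum_k\lambda_k\KL(\pi_k\parallel q)+\KL(q\parallel\pi)$, which makes both optimality and uniqueness immediate. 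You are also right that Gaussianity is inessential beyond guaranteeing finite differential entropies $\int p_k\log p_k$, so the decomposition never produces an $\infty-\infty$.

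One technical point in your "bookkeeping" paragraph is not quite right as stated. You reduce to Lebesgue densities by noting that any $\pi$ with $\pi_k\not\ll\pi$ gives $\KL(\pi_k\parallel\pi)=+\infty$. That observation is true, but it does not eliminate all candidates outside $\mc N$'s ambient class: a measure such as $\pi=(1-\eps)\nu+\eps\,\delta_0$ with $\nu$ Gaussian still satisfies $\pi_k\ll\pi$ for every $k$ (any $\pi$-null set is Lebesgue-null away from the atom), yet has no Lebesgue density. To close this, take the Lebesgue decomposition $\pi=\pi_{ac}+\pi_s$; since each $\pi_k$ charges only the absolutely continuous part, $\dd\pi_k/\dd\pi=p_k/p_{ac}$ holds $\pi_k$-a.e., and if $\pi_s\neq0$ then $\int p_{ac}<1$, so replacing $\pi$ by the renormalized density $p_{ac}/\!\int p_{ac}$ strictly decreases every $\KL(\pi_k\parallel\pi)$ term by $-\log\int p_{ac}>0$. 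This one extra step legitimizes the restriction to Lebesgue densities, after which your Gibbs-inequality argument is complete.
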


We are now ready to state the main result of this section, which provides a tractable reformulation of problem~\eqref{eq:prob} under the backward KL consensus.
\begin{theorem}[Optimal backward KL mechanism] \label{thm:reversereform}
    Suppose that $\pi_k \sim \N(\m_k, \cov_k)$ with $\cov_k\succ 0$ for any $k \in [K]$. Let $\lambda\opt$ be the optimal solution in the variable $\lambda$ of the following problem
    \be \label{eq:reversereforquad}
        \begin{array}{cl}
        \min & \sum_{k \in [K]}\lambda_k \Tr{\m_k\m_k^\top+\cov_k}-\sum_{i,j\in[K]}\lambda_i\lambda_j\m_i^\top\m_j\\
        \st & \lambda \in \Delta_K \\
        &  \ell_{\mc Z}\left(\sum_{k \in [K]}\lambda_k\m_k - \m_1\right) \le \delta.
    \end{array}
\ee
 Then $\mc C_{\lambda\opt}^B$ is the optimal consensus mechanism in $\mbb C^B$, that is, $\mc C_{\lambda\opt}^B$ is $(\ell_{\mc Z},\mu_1,\delta)$-bias tolerable and has minimal variance.
\end{theorem}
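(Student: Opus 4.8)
The plan is to invoke Proposition~\ref{prop:consensus-B} to replace the abstract mechanism $\mc C_\lambda^B(\{\pi_k\})$ by the explicit mixture $\sum_{k \in [K]} \lambda_k \pi_k$, and then to reduce the generic consensus problem~\eqref{eq:prob} to~\eqref{eq:reversereforquad} by evaluating the mean and the trace of the covariance of this mixture in closed form. Because both quantities depend on $\{(\m_k, \cov_k)\}$ only through elementary moment formulas, the whole argument is a substitution: once the mixture form is in hand, no optimization over the infinite-dimensional space $\mc M$ remains, and the claim reduces to matching the objective and the constraint of~\eqref{eq:prob} against those of~\eqref{eq:reversereforquad}.

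First I would handle the bias constraint. The mean of a mixture equals the corresponding convex combination of the component means, so $\EE_{\mc C_\lambda^B(\{\pi_k\})}[\theta] = \sum_{k \in [K]} \lambda_k \m_k$. Subtracting the anchor $\m_1$ and applying $\ell_{\mc Z}$ shows that the bias-tolerance requirement $\ell_{\mc Z}(\EE_{\mc C_\lambda^B(\{\pi_k\})}[\theta] - \m_1) \le \delta$ is exactly the constraint $\ell_{\mc Z}(\sum_{k\in[K]}\lambda_k\m_k - \m_1) \le \delta$ appearing in~\eqref{eq:reversereforquad}.

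Next I would compute the objective. Writing $\bar\m = \sum_{k\in[K]}\lambda_k\m_k$ for the mixture mean, the second moment of the mixture is $\EE[\theta\theta^\top] = \sum_{k\in[K]}\lambda_k(\cov_k + \m_k\m_k^\top)$, since each Gaussian component contributes its own $\cov_k + \m_k\m_k^\top$. The usual variance decomposition then gives $\mathrm{Cov}_{\mc C_\lambda^B(\{\pi_k\})}[\theta] = \sum_{k\in[K]}\lambda_k(\cov_k + \m_k\m_k^\top) - \bar\m\bar\m^\top$. Taking the trace, using linearity together with $\Tr{\bar\m\bar\m^\top} = \bar\m^\top\bar\m = \sum_{i,j\in[K]}\lambda_i\lambda_j\m_i^\top\m_j$, yields the objective $\sum_{k\in[K]}\lambda_k\Tr{\m_k\m_k^\top+\cov_k} - \sum_{i,j\in[K]}\lambda_i\lambda_j\m_i^\top\m_j$ of~\eqref{eq:reversereforquad}. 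This establishes that~\eqref{eq:prob}, specialized to $\varphi$ the backward KL divergence, coincides with~\eqref{eq:reversereforquad}, so its minimizer $\lambda\opt$ induces an optimal mechanism $\mc C_{\lambda\opt}^B$ in the sense of Definition~\ref{def:optimal-consensus}.

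There is no serious analytical obstacle here: the only point deserving care is the variance decomposition for a mixture, where one must retain the $-\bar\m\bar\m^\top$ correction term that couples the components and produces the quadratic $\sum_{i,j}\lambda_i\lambda_j\m_i^\top\m_j$. It is worth noting that, unlike the forward KL and Wasserstein cases, this quadratic term carries a negative sign in front of the Gram matrix $(\m_i^\top\m_j)_{i,j\in[K]}$, which is positive semidefinite; hence the objective is concave and~\eqref{eq:reversereforquad} is a non-convex program, consistent with the discussion preceding the theorem.
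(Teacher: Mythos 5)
Your proposal is correct and follows essentially the same route as the paper: apply Proposition~\ref{prop:consensus-B} to get the mixture form, compute the mixture's first and second moments, form the covariance via the standard decomposition, and substitute the trace and the mean into problem~\eqref{eq:prob}. The closing remark on the concave quadratic term matches the paper's own discussion of the non-convexity of~\eqref{eq:reversereforquad}.
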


The bias tolerance constraint in problem~\eqref{eq:reversereforquad} coincides with the bias tolerance constraint for the optimal Wasserstein consensus problem~\eqref{eq:consensus-W-refor}. Indeed, the result of Proposition~\ref{prop:consensus-B} implies that the aggregated mean for the backward KL consensus is also a weighted average of $\mu_k$, and the coincidence in the bias tolerance constraint follows. The result of Proposition~\ref{prop:bias-W} can be reapplied to reformulate the constraints of~\eqref{eq:reversereforquad} into either linear or second-order cone constraints. On the downside, notice that the objective function of~\eqref{eq:reversereforquad} is a concave quadratic function. Consequently, solving~\eqref{eq:reversereforquad} is generally difficult. While problem~\eqref{eq:reversereforquad} suffers from non-convexity, it does not require simultaneous diagonalizability as is required in Theorem~\ref{thm:refor}.

\section{Simulation and Empirical Analysis}
\label{sec:numerics}

We study the look-ahead-bias and variance trade-off of our BVMI framework in simulations and empirically. We work with a panel of asset returns data (either simulated or empirical data). We separate the panel data into three portions: \textit{training}, \textit{testing}, and an additional \textit{out-of-sample testing} periods, each having a length $T^{train},T^{test}$ and $T^{oos\text{-}test}$ respectively. For both the simulated and empirical data, we have access to the full panel of returns, which allows us to study various patterns of missing data. We obtain missing data by masking entries in the panel following certain missing mechanisms. We apply the masking procedure to the training period, while we preserve the completeness of testing and out-of-sample testing periods for evaluation purposes.\footnote{All optimization problems are implemented in Python. \label{fn:computer}We use an Intel Core i5 CPU (1.40 GHz) computer with 8GB of memory. Data and codes are available at \url{https://github.com/xuhuiustc/time-series-imputation}.} 

Following the BVMI framework, we feed the data set with missing values to the posterior generator module, as explained in Section~\ref{sec:generator}. Recall that we have a total of $K$ posterior distributions. The truncation times $T_k$'s are set to be equispaced between $T^{train}$ and $T$ with $T_1=T^{train}$ and $T_K=T$, where $T = T^{train}+T^{test}$. We assume a non-informative flat prior $\pi_0$ on the mean parameter $\theta$. The covariance matrix $\Omega$ is taken as the ground truth in the case of simulated data, or taken as the sample covariance matrix of the complete panel in the empirical study. Our focus is the first-order problem of the uncertainty in the mean estimation. Relative to the mean, the variance estimation is substantially more precise, as discussed among others in \cite{Kozak_etal2020}.

The generated multiple posterior distributions are used in the consensus mechanism module to output an optimal consensus posterior, the details of which we provide in Sections~\ref{sec:consensus} to~\ref{sec:backward}. More specifically, we use the following three aggregation mechanisms: 
\begin{enumerate}[label=(\roman*)]
    \item The forward KL mechanism corresponding to the reformulation~\eqref{eq:refor-F}.
    \item The full Wasserstein mechanism corresponding to  the reformulation~\eqref{eq:consensus-W-refor}, together with the set $\mathcal{Z}$ chosen as the Euclidean ball as in Proposition~\ref{prop:bias-W}~($i$).
\item The restricted Wasserstein mechanism suggested in Proposition~\ref{prop:reformKis2}, which (effectively) only considers the consensus posterior as the Wasserstein barycenter of the two naive posteriors $\pi_1$ and $\pi_K$.
\end{enumerate}
We do not consider the backward KL mechanism in Section~\ref{sec:backward} as it is non-convex and hence it is challenging to solve for large values of $K$. To satisfy the simultaneous diagonalizability assumption for the forward KL mechanism and the full Wasserstein mechanism, we project the covariance matrices of the posteriors to the same eigenbasis. 
We choose the orthogonal matrix $V$ (as in Lemma~\ref{lemma:projection-V}) as the eigenbasis of the covariance of the first posterior, and we follow Lemma~\ref{lemma:projection-V} \ref{lemma:projection-V1} to project the other posteriors onto this basis. As we have remarked, the tolerance parameter $\delta$ in~\eqref{eq:refor-F} and~\eqref{eq:consensus-W-refor} indicates how much of look-ahead-bias is accepted in the aggregated posterior $\pi\opt$. We select the $\delta$ parameter from the $10$ values equi-spaced between $0$ and $\delta_{max}$, where $\delta_{max}$ is computed using~\eqref{eq:biasF} or~\eqref{eq:euclideanball} by setting $\lambda_K=1$, i.e., $\delta_{max}=\max_{j\in[n]}\left|c_{Kj}/d_{Kj}^{-1}-v_j^\top\m_1\right|$ for the forward KL reformulation and $\delta_{max} = \|\mu_K-\mu_1\|_2^2$ for the Wasserstein reformulation.

Corresponding to each selected $\delta$ parameter, the aggregated posterior $\pi\opt$ is then output to the multiple imputation sampler module to create multiply-imputed data sets, as discussed in Section~\ref{sec:sampler}. More specifically, we follow the Conditional Expectation Bayesian Imputation algorithm, i.e., Algorithm~\ref{algo:conditional_Bayes}. In each round of imputation, we first sample the mean parameter $\theta$ from the aggregated posterior $\pi\opt$. Then for each $t\in[T^{train}]$, we impute $Y_t$ by $\theta_{Y_t}+\Omega_{Y_tX_t}(\Omega_{X_t})^{-1}(X_t-\theta_{X_t})$, which is the conditional mean of the posterior of $Y_t$ given $(\theta,\mc D)$.

Our downstream task is portfolio optimization. More specifically, we construct portfolios with the highest mean return. This captures the core of the problem and it is straightforward to extend this to other portfolio objectives. In more detail, the portfolio weights solve the following optimization problem during the training period:
\[
        \max \left\{ \EE_{\PP^{\text{imputed}}} [w^\top Z] : \| w \|_2 = 1 \right\},
\]
where $\tilde Z_t$ denotes the imputed value of $Z_t$ and $\PP^{\text{imputed}} = \sum_{t \in [T^{train}]} \delta_{\tilde Z_t}/T^{train}$ is the empirical distribution supported on the imputed data. The analytical solution equals 
    \[
    w = \ds \frac{\frac{1}{T^{train}}\sum_{t\in[T^{train}]}\tilde Z_t}{\|\frac{1}{T^{train}}\sum_{t\in[T^{train}]}\tilde Z_t\|_2}. 
    \]
  The estimated portfolio weights are used during the testing period, which results in the average portfolio return of
    \[
    R_{test} = \frac{1}{T^{test}}\sum_{t\in[T^{train}+1,T]} w^\top Z_t.
    \]
    Similarly, when the portfolio weights $w$ are applied on the out-of-sample testing period, we obtain the averaged portfolio return 
    \[
    R_{oos\text{-}test}=\frac{1}{T^{oos\text{-}test}}\sum_{t\in[T+1,T+T^{oos\text{-}test}]} w^\top Z_t.
    \]
The difference in the averaged portfolio return between the two periods defines our \textit{regret} measure
\[
\Delta R = R_{test}-R_{oos\text{-}test}.
\]
Recall that $\mc D$ is the realization of observed data and missing masks. Conditional on $\mc D$, $\Delta R$ is a random variable where the randomness comes from multiple imputations. Thus we have the conditional moments $\mathbb{E}[\Delta R|\mc D]$ and $\mathbb{V}\mathrm{ar}[\Delta R|\mc D]$, whose randomness can, in turn, be averaged out over realizations of $\mathcal{D}$.  We are interested in finding a $\delta$ that minimizes the ``expected conditional mean squared error'', or ECMSE, defined as
\[
     \max\{\mathbb{E}[\Delta R],0\}^2 + \mathbb{E}[\mathbb{V}\mathrm{ar}[\Delta R|\mc D]].
\]
Observe that this ECMSE shares the same bias-variance trade-off structure as problem~\eqref{eq:prob} (with $\mathcal{Z}$ chosen according to Example~\ref{ex:penalty}~($i$)) with the exception that we consider a one-sided bias in the above formulation as an investor's objective is to avoid a negative regret or disappointment. We denote the first term above as ECBias$^2$ and the second term above as ECVar. In the following we study the three measures ECMSE, ECBias$^2$ and ECVar as $\delta$ varies, under four different missing mechanisms that include missing at random, a dependent block missing structure, and when the missing pattern depends on realizations of returns (noticeably the last missing pattern does not satisfy the ignorability assumption, i.e., our Assumption~\ref{a:ignorable}).

\subsection{Simulation Results}
\label{sec:synexp}

In our simulation, we generate returns for $n=10$ stocks based on a factor model and study the effect of data imputation for different missing patterns. We consider the time horizons $T^{train}=T
^{test}=100$, $T^{oos\text{-}test}=1000$. The large size of testing set is designed to amplify the look-ahead-bias. We generate the returns vector from a factor model, namely, the vectors $Z_t$, $t=1,\ldots,T^{train}+T^{test}+T^{oos\text{-}test}$ are generated~i.i.d.~from $\mc N_n(\theta,\Omega)$, where for each $i=1,\ldots,n$, $\theta_i=0.2\beta_i+\alpha_i$, $\beta_i=1$, $\alpha_i$ is equispaced between $-0.3$ and $0.3$, and $\Omega = \beta\beta^\top+I_n$.

We consider four types of missing mechanisms.
\begin{itemize}
    \item \textit{Missing completely at random} -- we mask each entry in the training period as missing independently at random with a fixed probability $50\%$.
    \item \textit{Missing at random} -- for each stock $i$, we sample a Bernoulli variable $S_i$ with success probability $50\%$. If $S_i=1$, we mask the entries corresponding to stock $i$ as missing independently at random with probability $50\%$, otherwise we mask the entries corresponding to stock $i$ as missing independently at random with probability $70\%$.
    \item \textit{Block missing} -- we mask the first $30\%$ of the training period as missing. 
    \item \textit{Missing by value} -- we mask an entry in the training period as missing if it has an absolute value greater than $0.3$. 
\end{itemize}

\begin{figure}[h!]
\tcapfig{Model comparison on simulated data}
    \centering
    \subfigure[Missing completely at random]{\label{fig:synth-mcar-comp}
		\includegraphics[width=0.4\columnwidth]{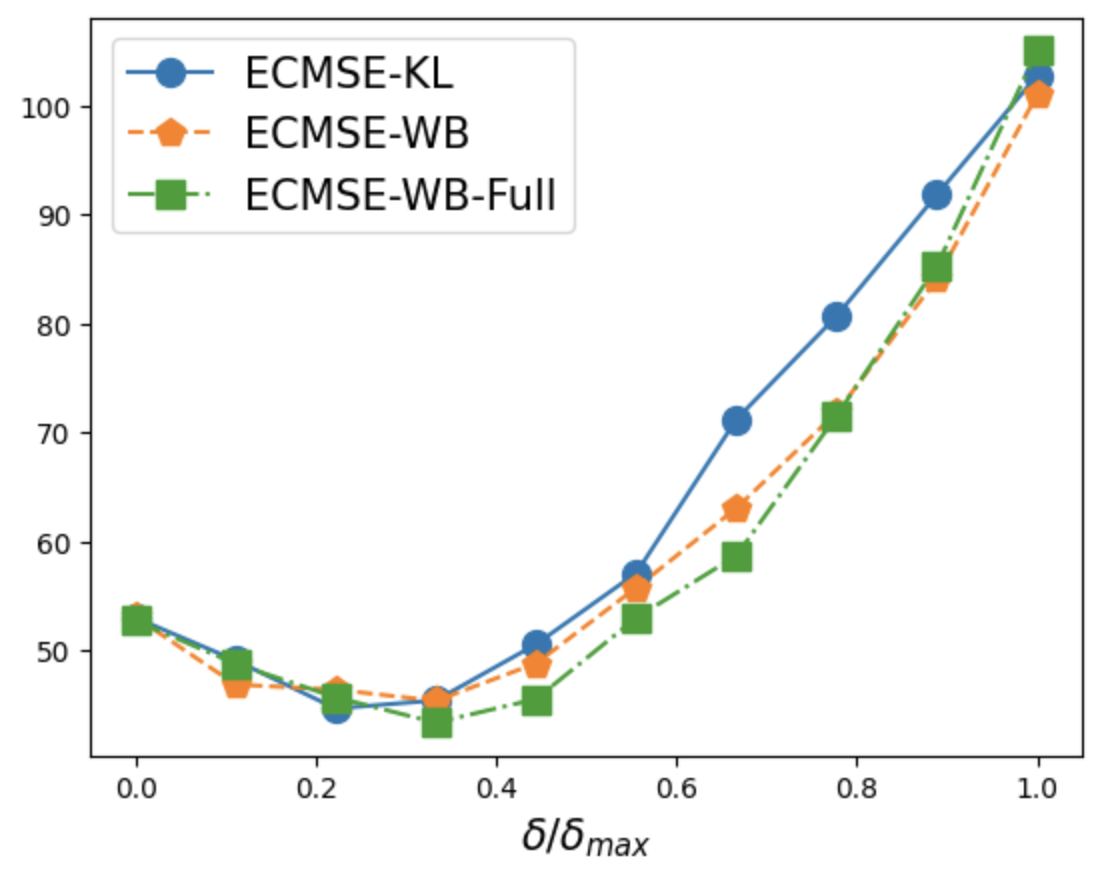}} \hspace{1mm}
	\subfigure[Missing at random]{\label{fig:synth-mar-comp}
	\includegraphics[width=0.4\columnwidth]{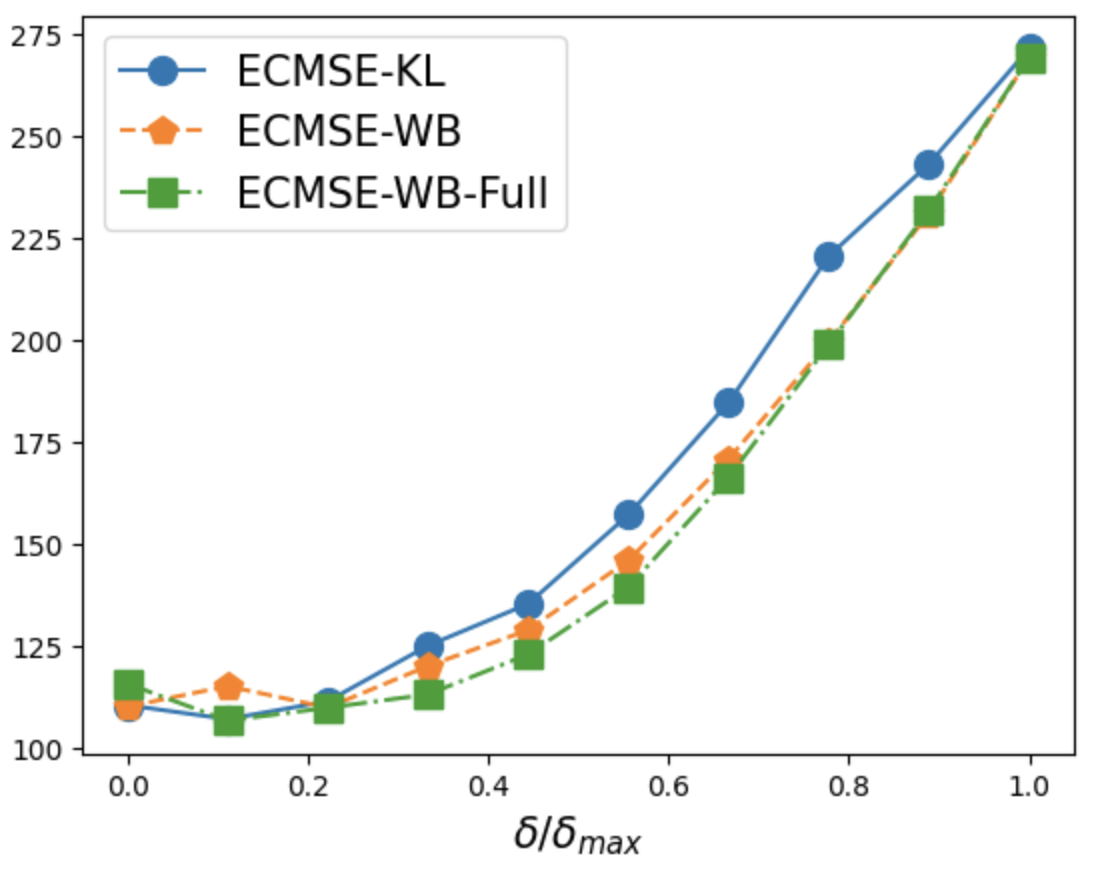}}
	    \hspace{1mm}
	\subfigure[Block missing]{\label{fig:synth-block-comp}
		\includegraphics[width=0.4\columnwidth]{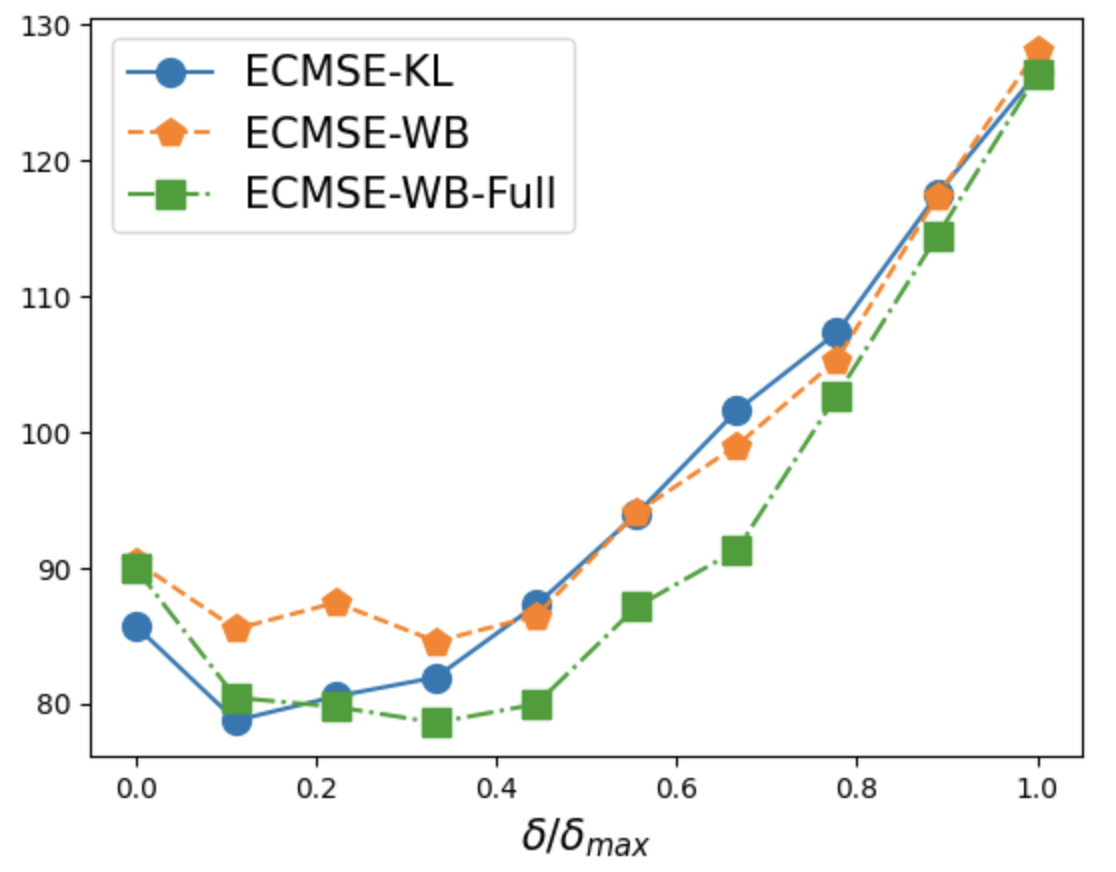}} 
	\subfigure[Missing by value]{\label{fig:synth-nonrandom-comp} 
		\includegraphics[width=0.4\columnwidth]{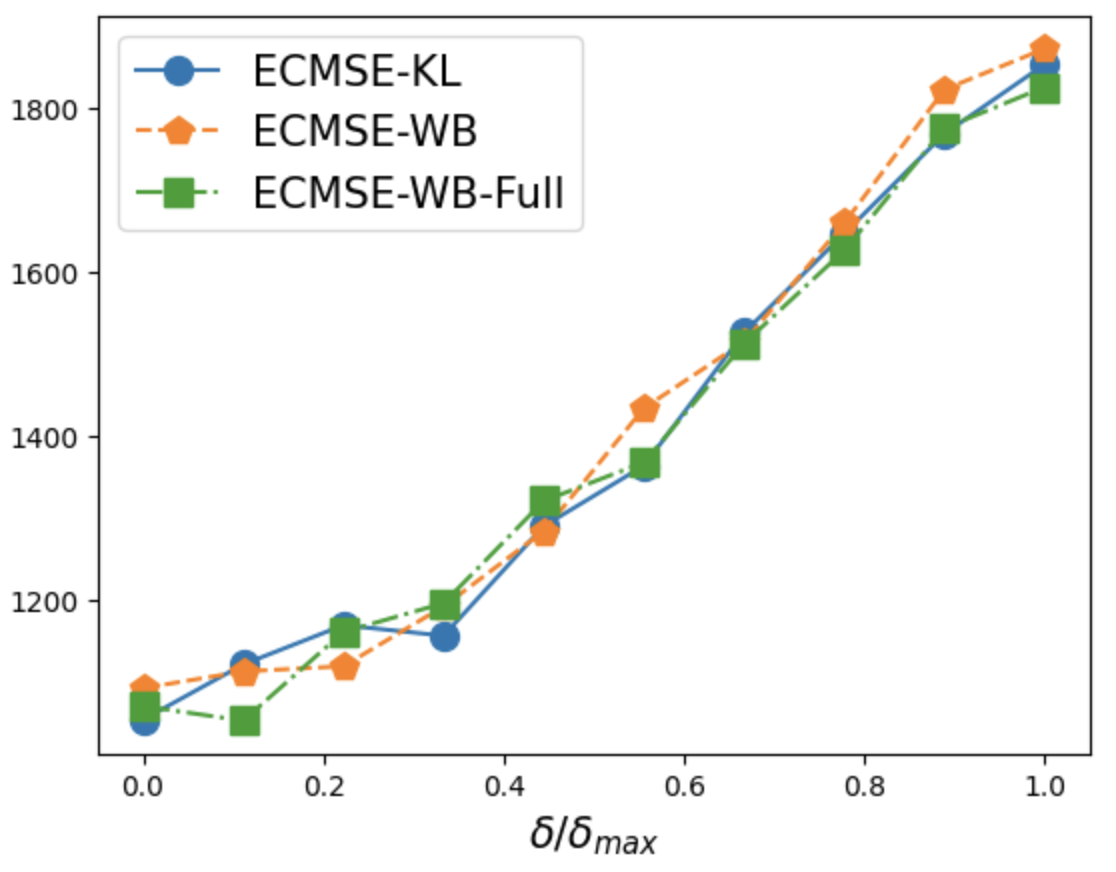}} 
		\bnotefig{These figures compare the expected conditional mean squared error (ECMSE) for different imputation schemes on simulated data. We impute the missing data using the forward KL mechanism (blue solid line with circle), full Wasserstein mechanism (green dash-dot line with square), and restricted Wasserstein mechanism (dashed orange line with pentagon) for different weights $\delta$, which control the trade-off between look-ahead-bias and variance.}
    \label{fig:synth-comp}
\end{figure}

For each missing mechanism, we run $500$ independent simulations, wherein for each simulated sample the imputation is repeated for $10$ times, and we carry out the downstream task on each imputed data set. 
Our main focus is the effect of the parameter $\delta$ for the three aggregation mechanisms, forward KL mechanism, full Wasserstein mechanism, and restricted Wasserstein mechanism. Figure~\ref{fig:synth-comp} shows the comparison of the ECMSE for the three mechanisms and the four missing patterns. Importantly, for all missing patterns and mechanisms, the expected conditional mean squared error is minimized for a non-trivial value of $\delta$. This means the special cases of no-look-ahead-bias ($\delta=0$) or using all the data with maximum look-ahead-bias ($\delta=\delta_{max}$) are not optimal. Moreover, either the forward KL mechanism or the full Wasserstein mechanism attains a smaller minimum ECMSE than the restricted Wasserstein mechansim, showing the power of using more layers of information in the aggregation mechanism.

\begin{figure}[h!]
\tcapfig{Look-ahead-bias and variance with forward KL mechanism on simulated data}
    \centering
    \subfigure[Missing completely at random]{\label{fig:synth-mcar-kl}
		\includegraphics[width=0.4\columnwidth]{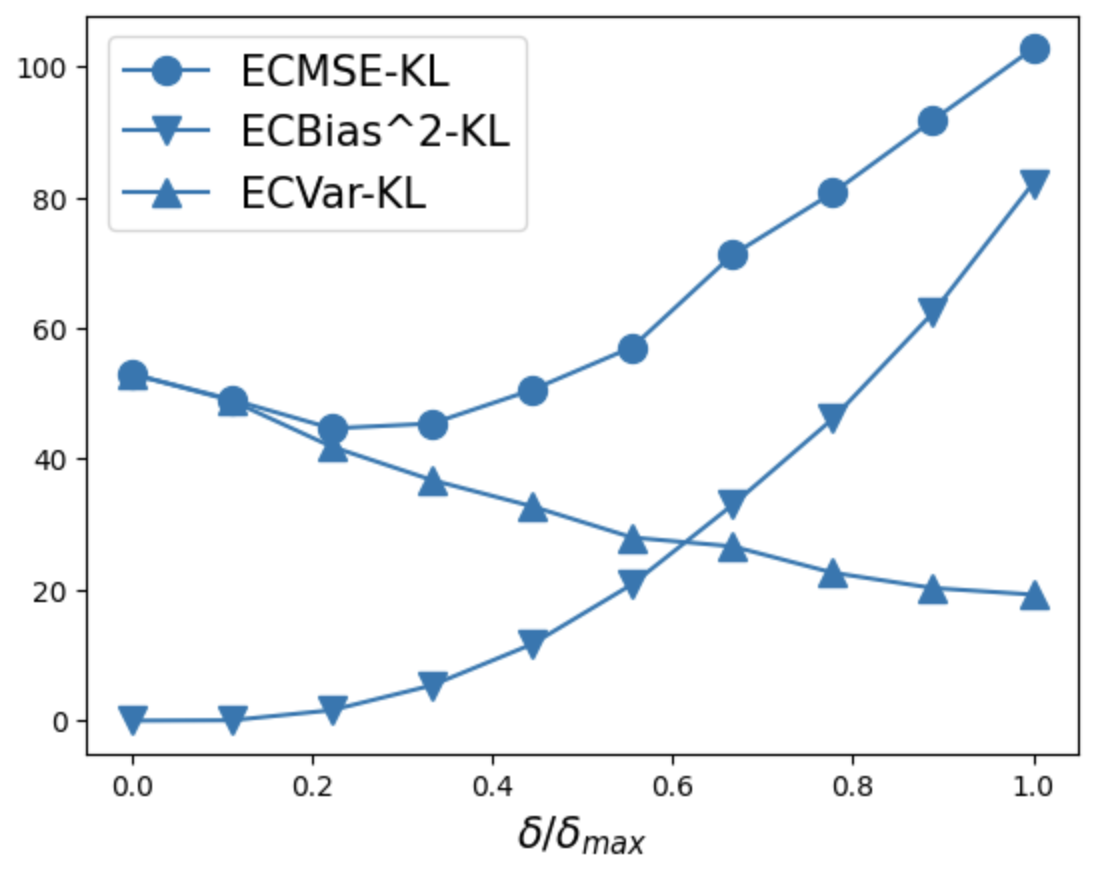}} \hspace{1mm}
	\subfigure[Missing at random]{\label{fig:synth-mar-kl}
	\includegraphics[width=0.4\columnwidth]{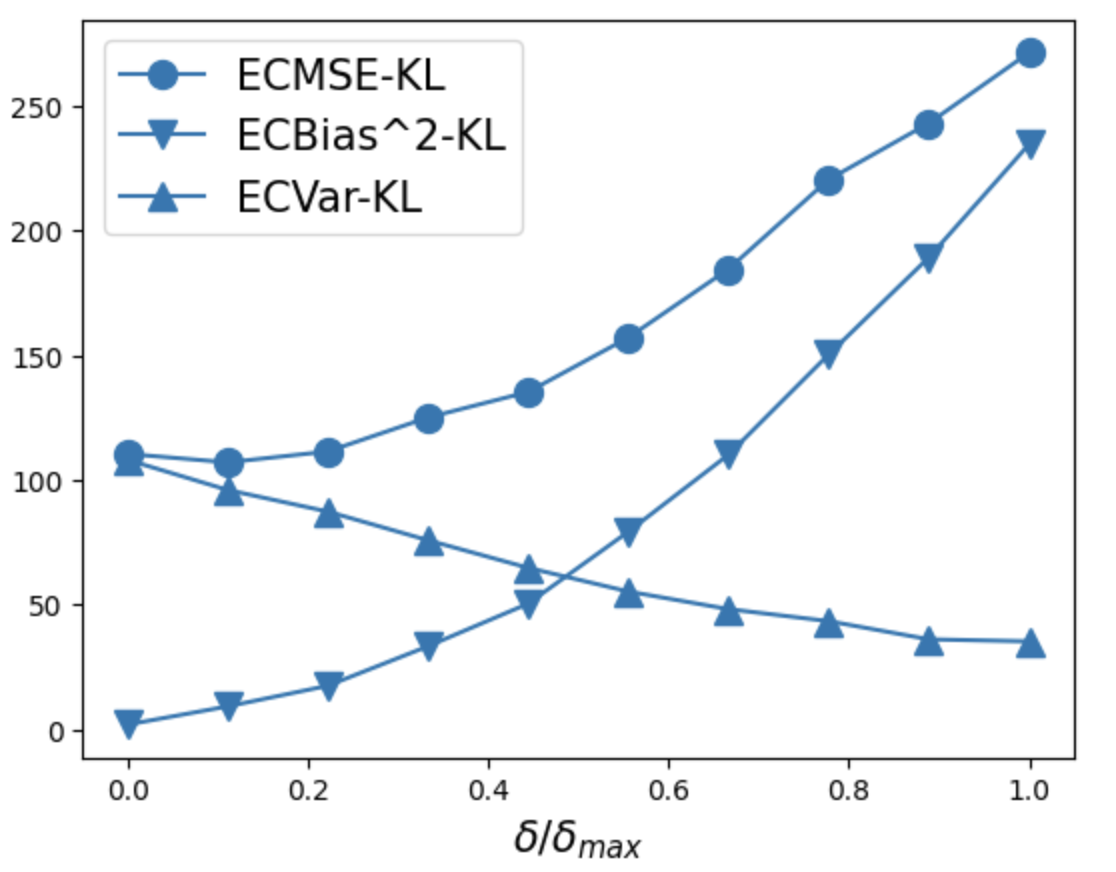}}
	    \hspace{1mm}
	\subfigure[Block missing]{\label{fig:synth-block-kl}
		\includegraphics[width=0.4\columnwidth]{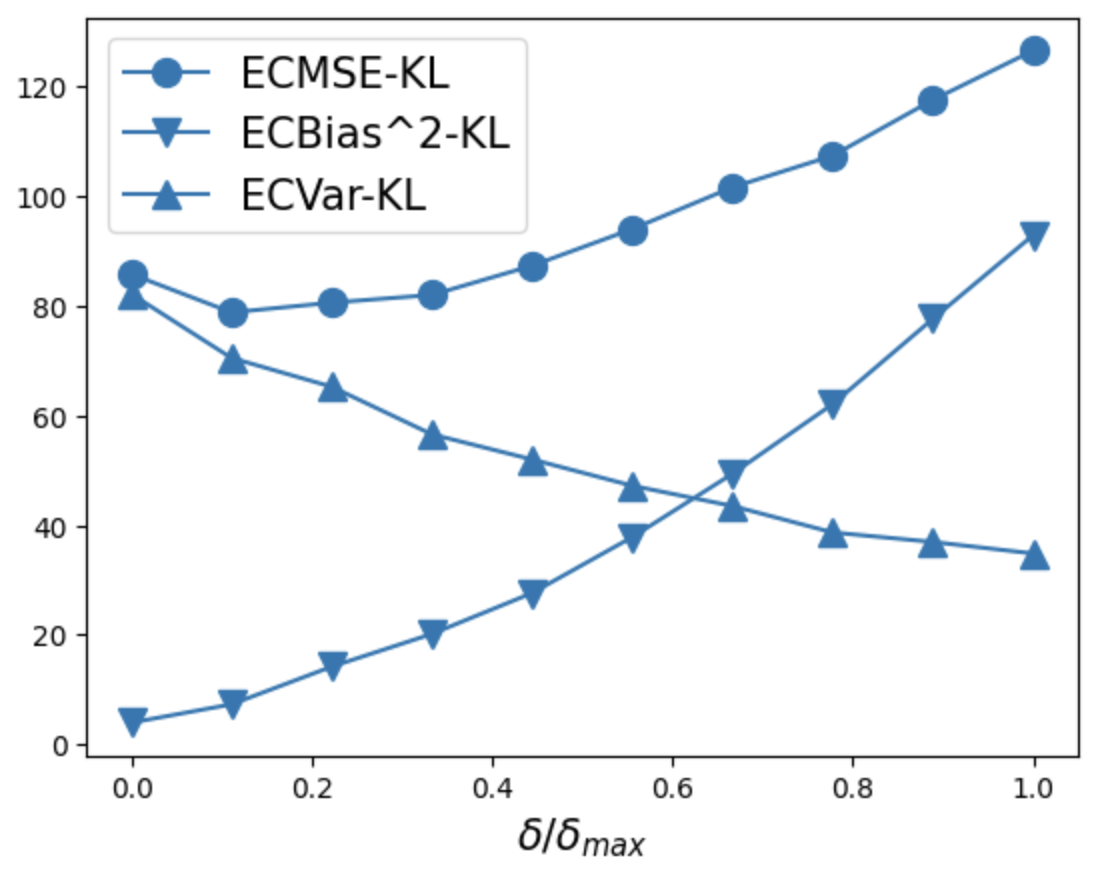}} 
	\subfigure[Missing by value]{\label{fig:synth-nonrandom-kl} 
		\includegraphics[width=0.4\columnwidth]{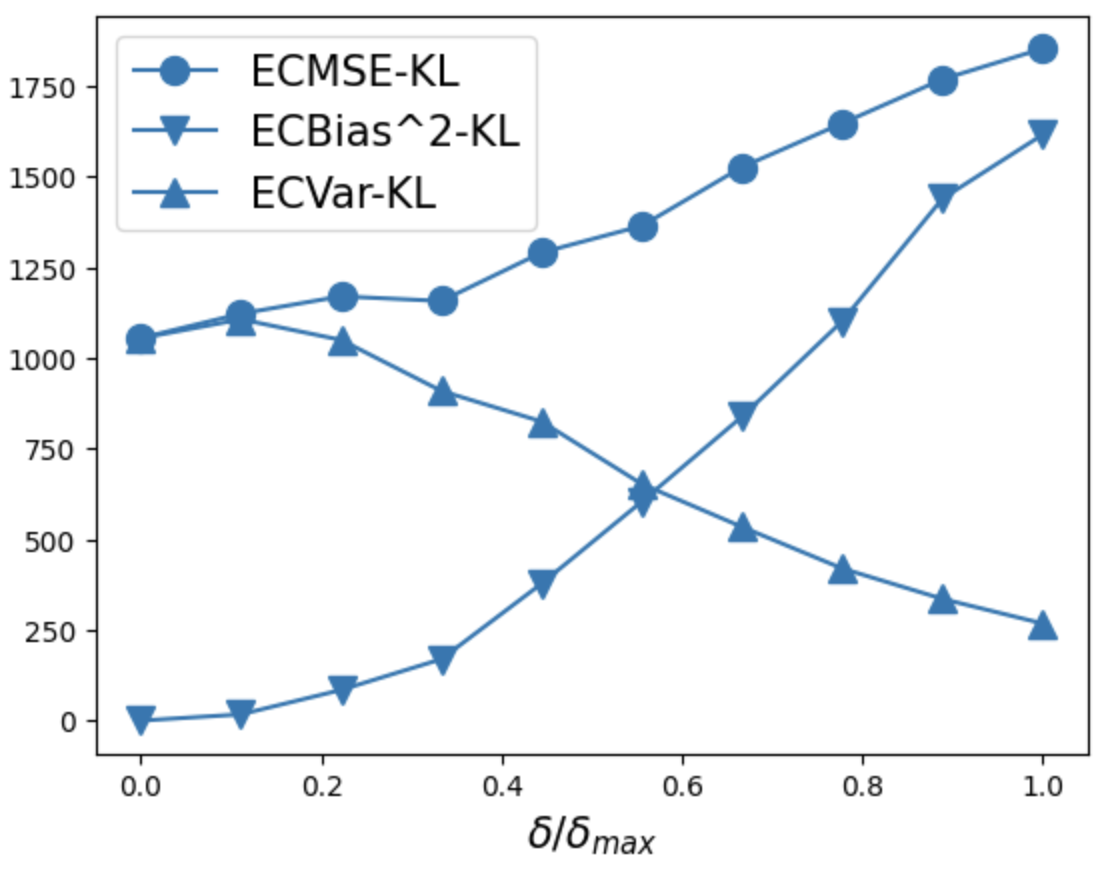}} 
	\bnotefig{These figures show the look-ahead-bias measured by the regret ECBias$^2$, the variance ECVar and the expected conditional mean squared error ECMSE for simulated data with the forward KL mechanism for different weights $\delta$, which control the trade-off between look-ahead-bias and variance.}
    \label{fig:synth-kl}
\end{figure}

Figure \ref{fig:synth-kl} sheds further light on the underlying trade-off and shows the ECBias$^2$ and ECVar for the forward KL mechanism. In all cases, the look-ahead-bias measured by ECBias$^2$ is increasing in $\delta$, while the variance measured by ECVar decreases. This is the reason why ECMSE can achieve its optimal value for non-trivial values of $\delta$. Figures \ref{fig:synth-wbfull} and \ref{fig:synth-wb} in the Appendix show the corresponding results for the full and restricted Wasserstein mechanism with the same monotonicity results.

\subsection{Empirical Application}

\label{sec:experiment}

In our empirical application, we consider two different panels of daily stock returns that cover different time periods. Our first data set is a representative sample of $n=10$ large capitalization stocks\footnote{The stocks are `DVA', `DE', `DAL',`ESRX', `XOM', `FFIV', `FB', `FAST', `ESS' and `EL'.} from 01/01/2015 to 01/01/2017. Our second panel consists of the $n=10$ standard industry portfolios from Kenneth French's data library from 01/03/2017 to 01/02/2019.\footnote{The industry portfolios are `NoDur', `Durbl', `Manuf', `Enrgy', `HiTec', `Telcm', `Shops', `Hlth', `Utils' and `Other', downloaded from \url{https://mba.tuck.dartmouth.edu/pages/faculty/ken.french/data_library.html}.} We consider $T^{train}=200$, $T^{test}=100$ and $T^{oos\text{-}test}=100$. All returns are annualized. Similar to before, we consider four types of missing mechanisms:

\begin{figure}[h!]
\tcapfig{Model comparison for individual stocks}
    \centering
    \subfigure[Missing completely at random]{\label{fig:real-mcar-s} 
		\includegraphics[width=0.4\columnwidth]{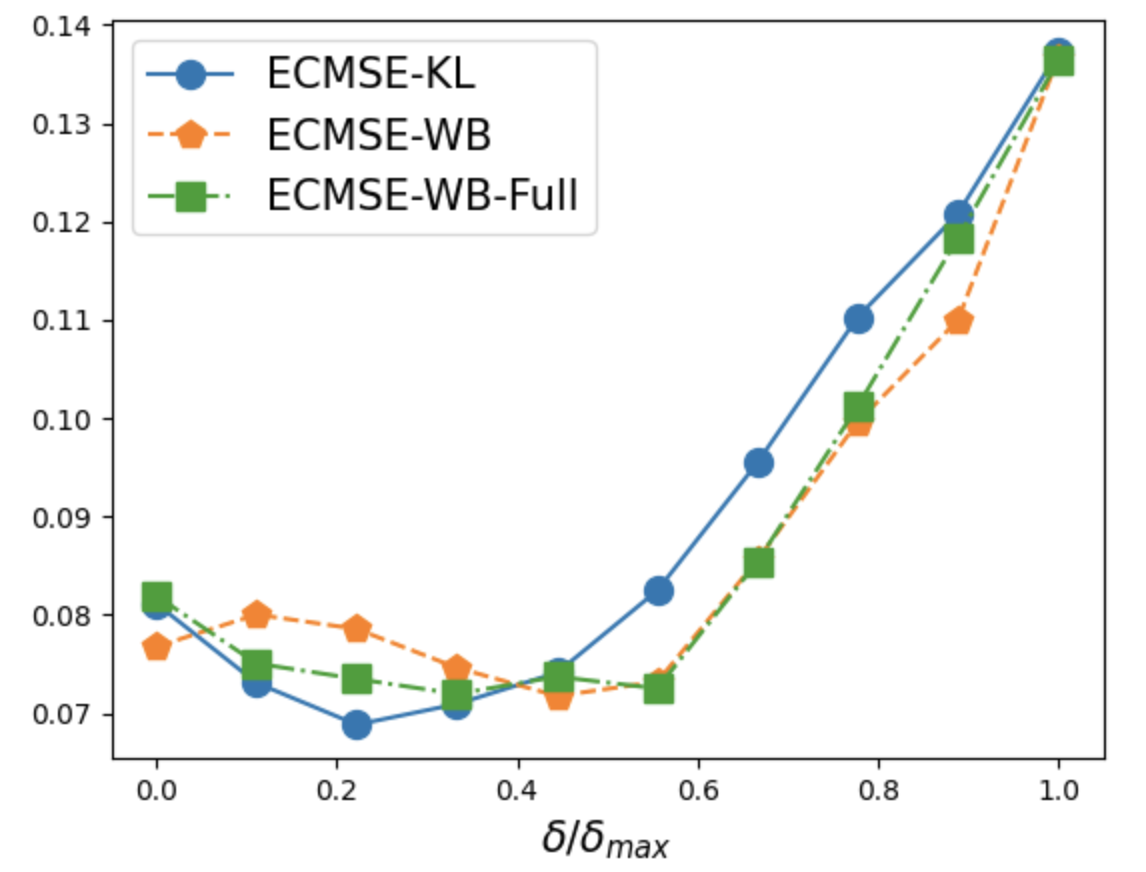}} \hspace{1mm}
	\subfigure[Missing at random]{\label{fig:real-mar-s}
	\includegraphics[width=0.4\columnwidth]{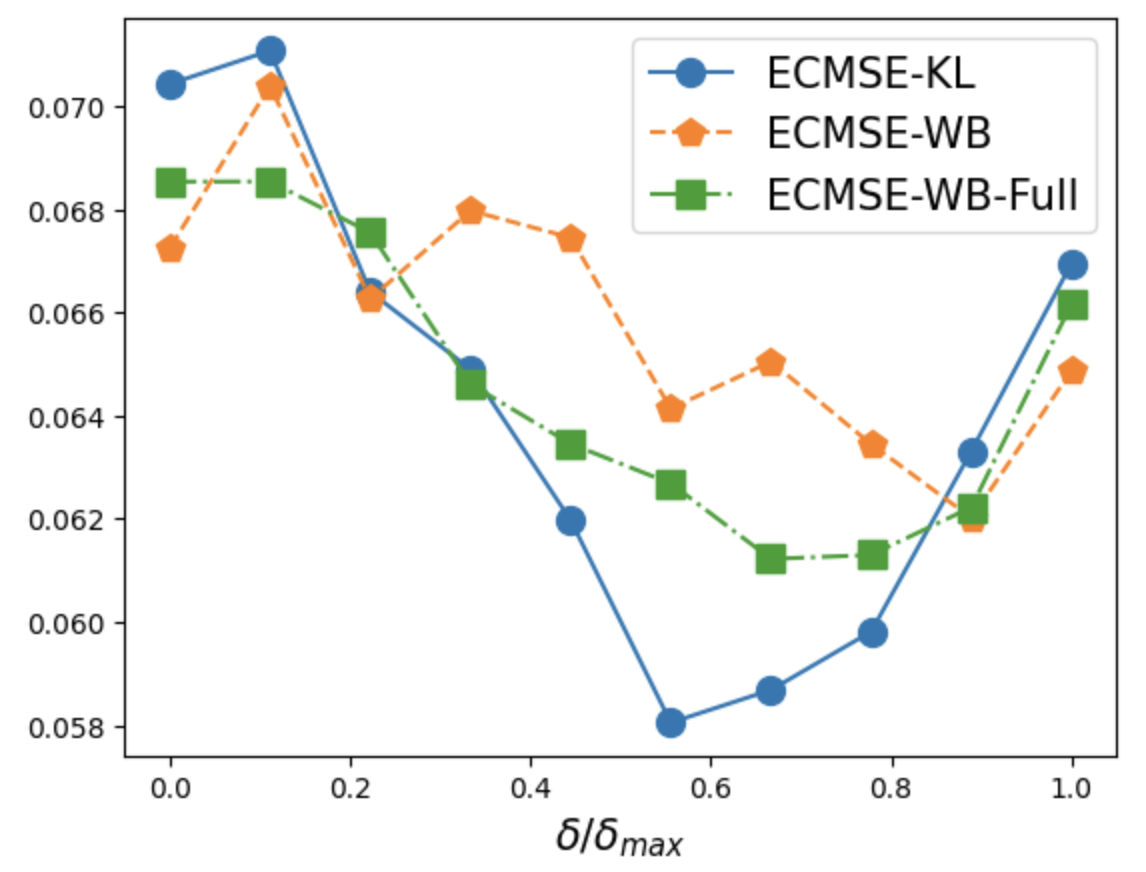}}
	    \hspace{1mm}
	\subfigure[Block missing]{\label{fig:real-block-s} 
		\includegraphics[width=0.4\columnwidth]{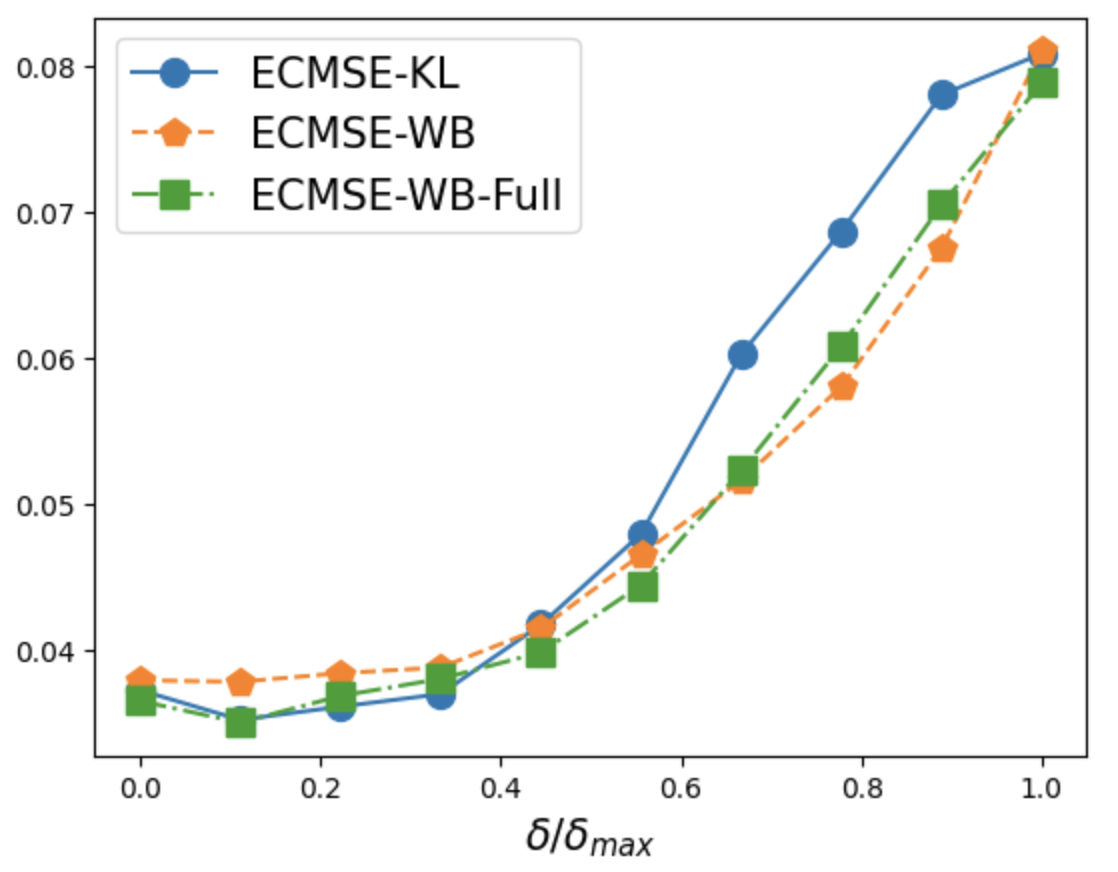}} 
		\subfigure[Missing by value]{\label{fig:real-nonrandom-s} 
		\includegraphics[width=0.4\columnwidth]{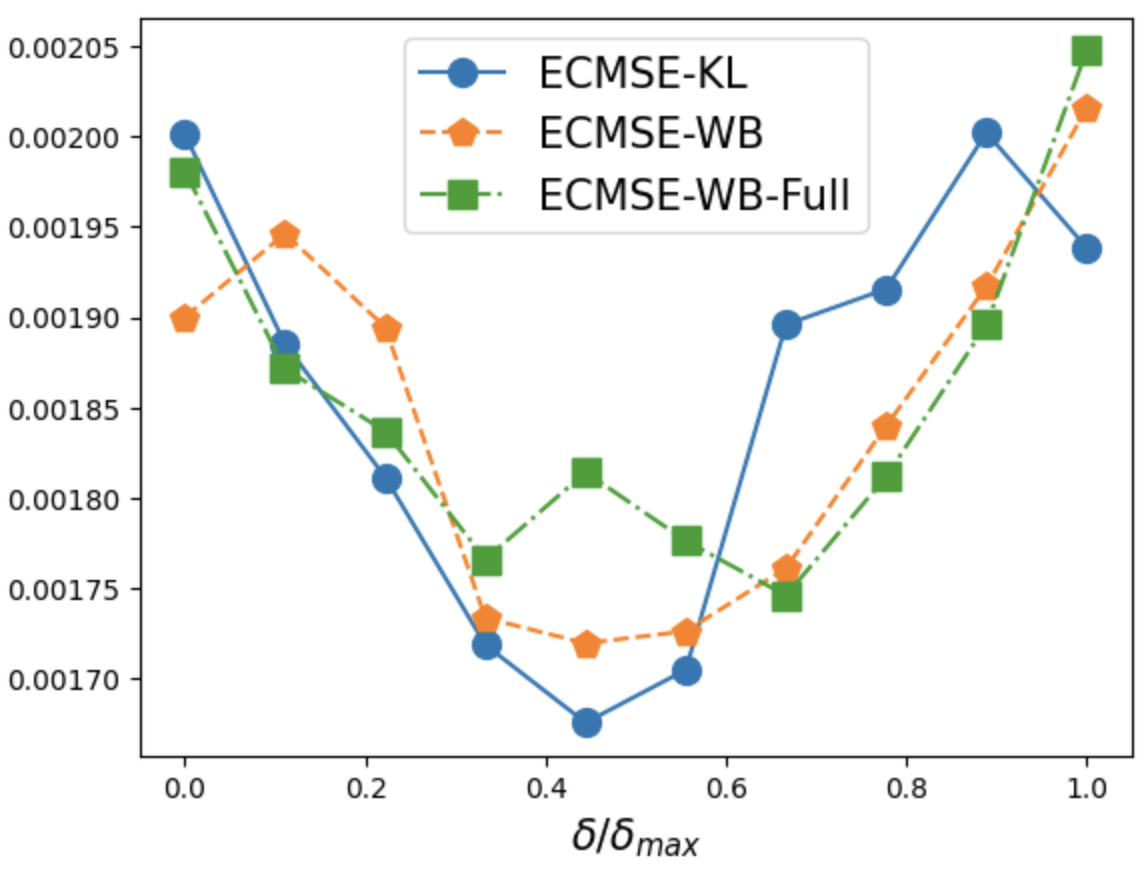}} 
				\bnotefig{These figures compare the expected conditional mean squared error (ECMSE) for different imputation schemes for individual stock returns. We impute the missing data using the forward KL mechanism (blue solid line with circle), full Wasserstein mechanism (green dash-dot line with square), and restricted Wasserstein mechanism (dashed orange line with pentagon) for different weights $\delta$, which control the trade-off between look-ahead-bias and variance. The samples are daily returns of 10 representative large capitalization stocks from 01/01/2015 to 01/01/2017.}
    \label{fig:reals}
\end{figure}

\begin{figure}[h!]
\tcapfig{Look-ahead-bias and variance with forward KL mechanism for individual stocks}
    \centering
    \subfigure[Missing completely at random]{\label{fig:real-mcar-s-kl} 
		\includegraphics[width=0.4\columnwidth]{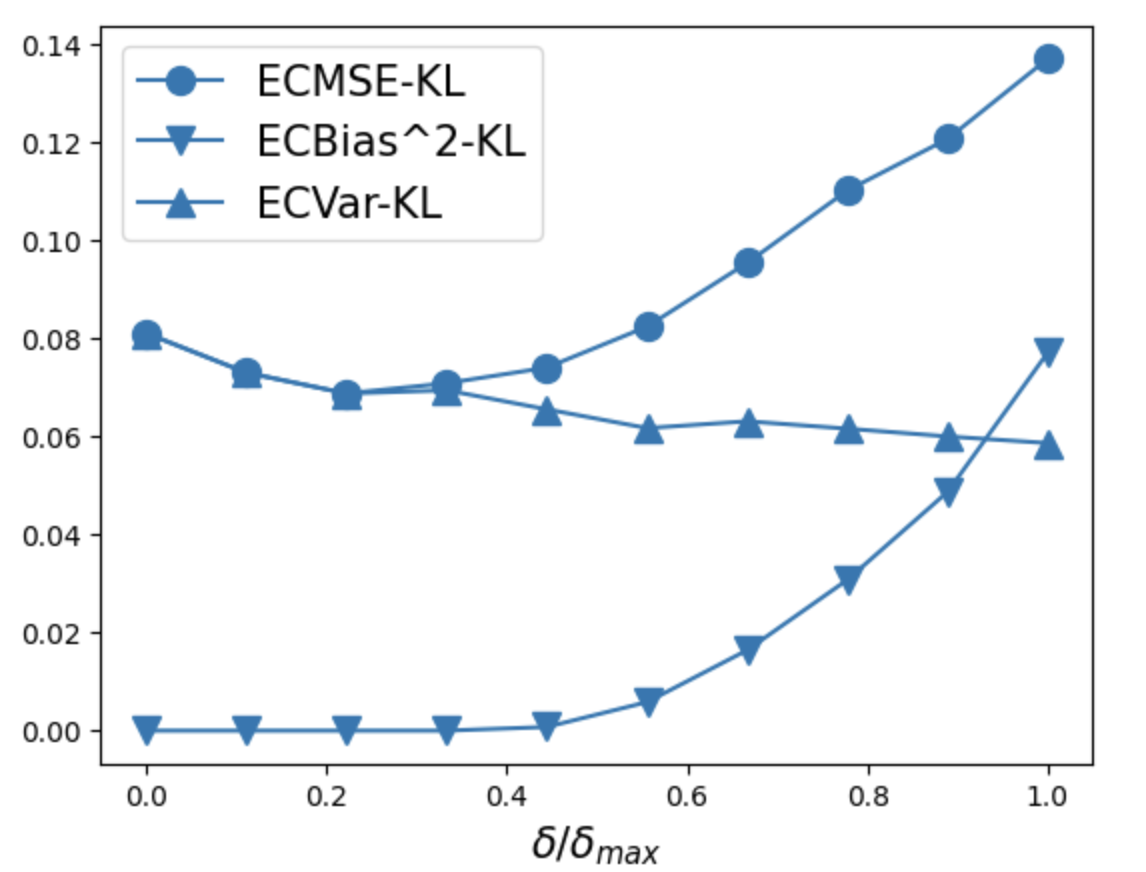}} \hspace{1mm}
	\subfigure[Missing at random]{\label{fig:real-mar-s-kl}
	\includegraphics[width=0.4\columnwidth]{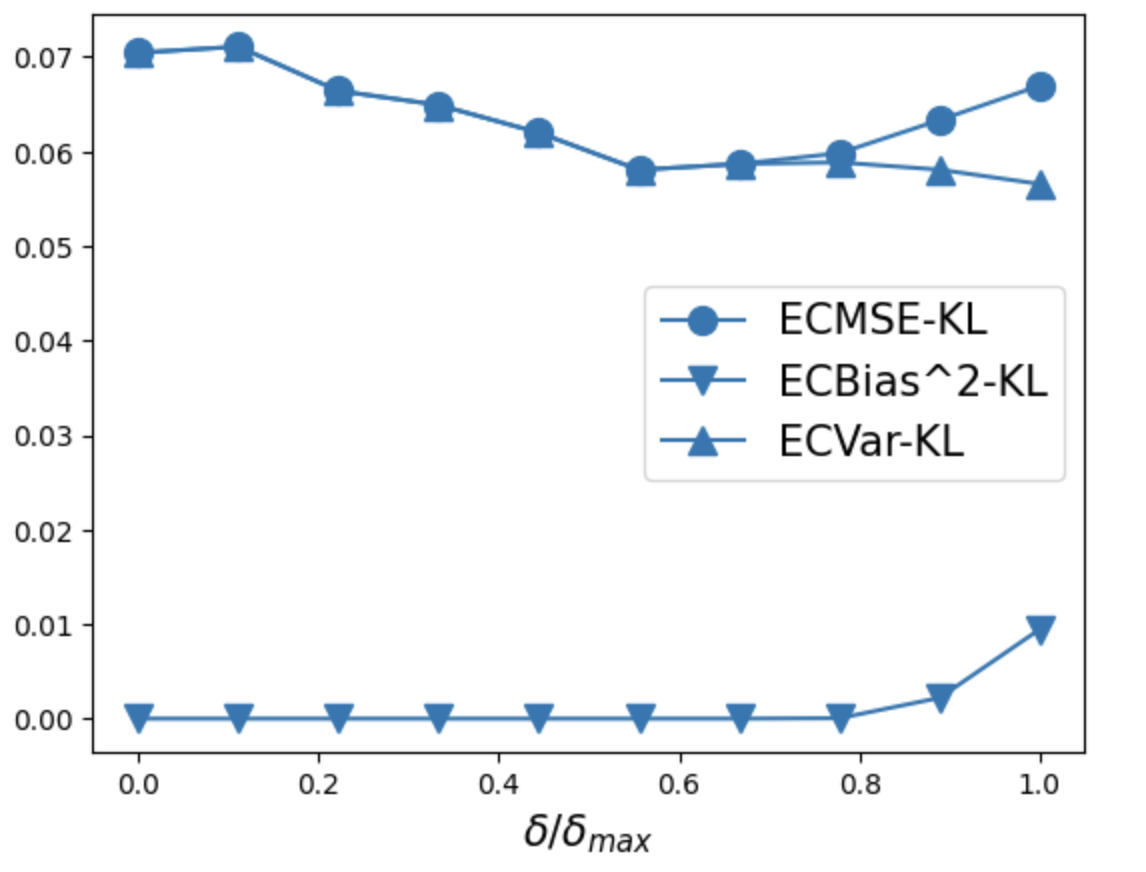}}
	    \hspace{1mm}
	\subfigure[Block missing]{\label{fig:real-block-s-kl} 
		\includegraphics[width=0.4\columnwidth]{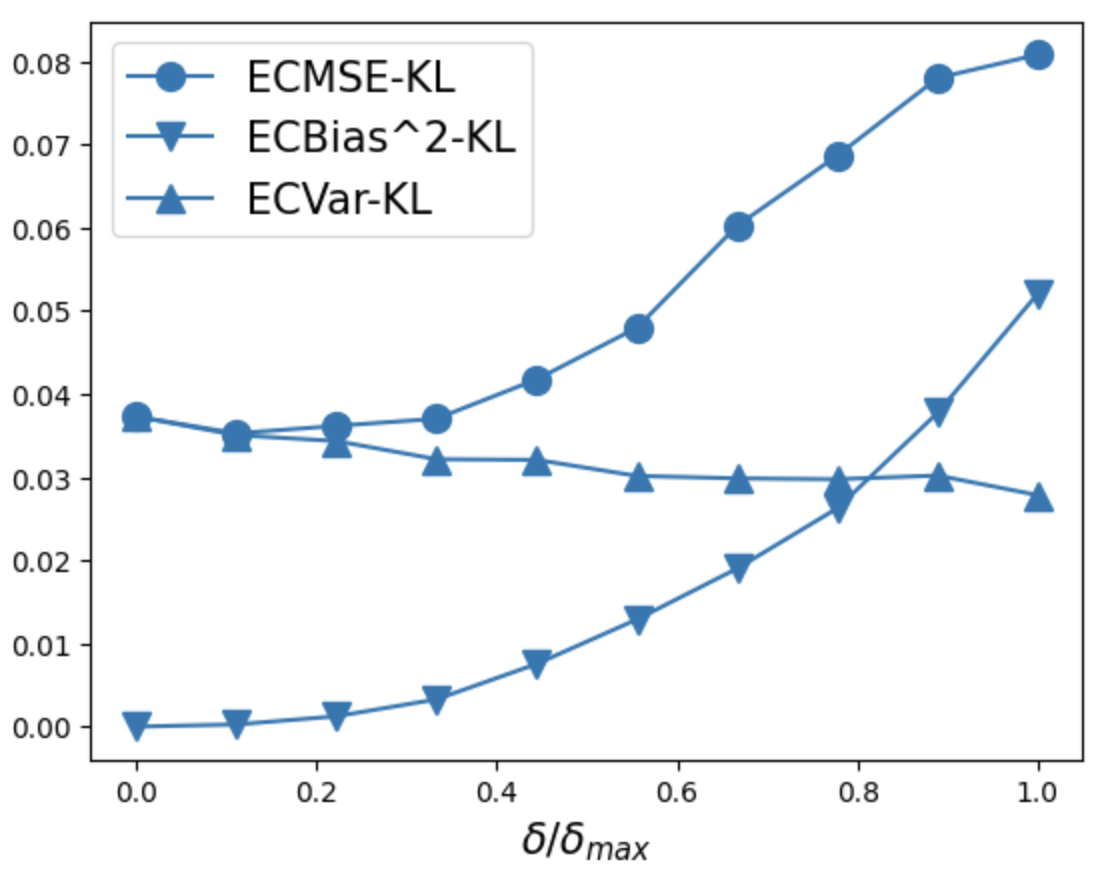}} 
		\subfigure[Missing by value]{\label{fig:real-nonrandom-s-kl} 
		\includegraphics[width=0.4\columnwidth]{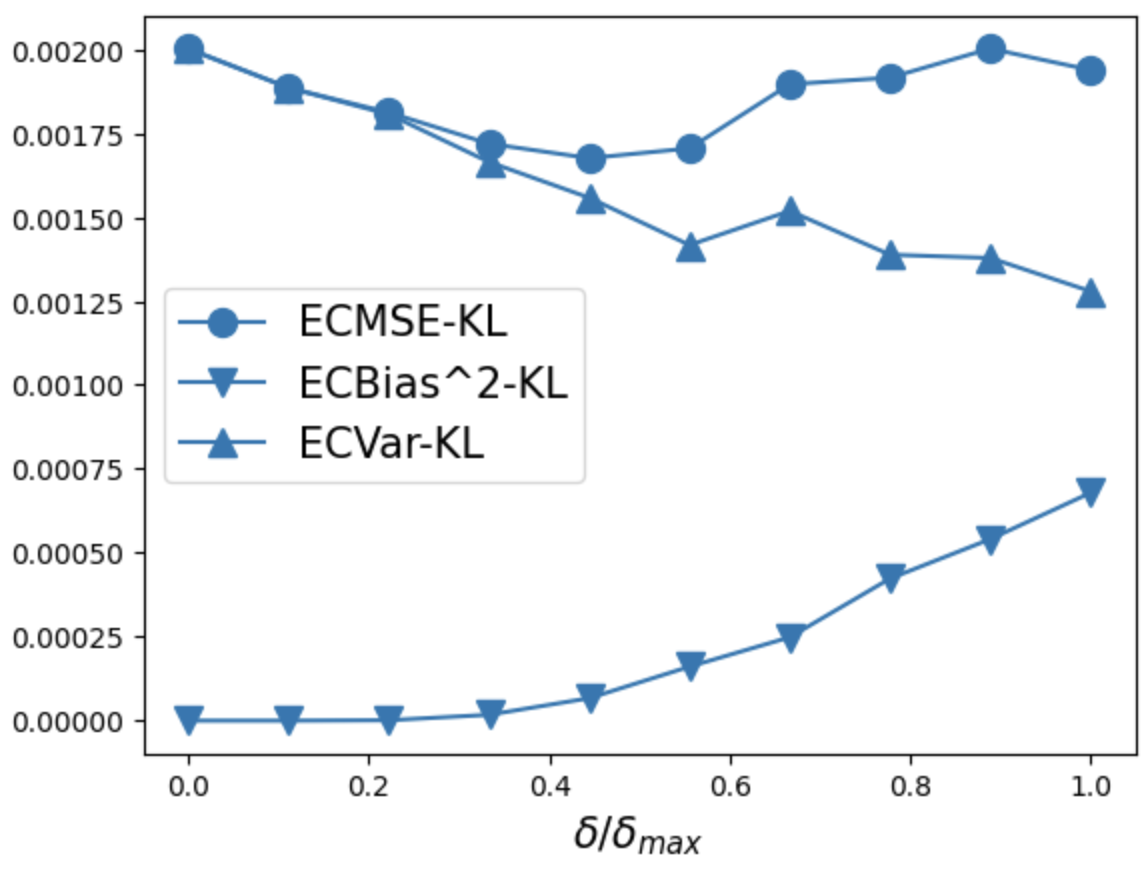}} 
			\bnotefig{These figures show the look-ahead-bias measured by the regret ECBias$^2$, the variance ECVar and the expected conditional mean squared error ECMSE for individual stocks with the forward KL mechanism for different weights $\delta$, which control the trade-off between look-ahead-bias and variance. The samples are daily returns of 10 representative large capitalization stocks from 01/01/2015 to 01/01/2017.}
  \label{fig:reals-kl}
\end{figure}

\begin{figure}[h!]
\tcapfig{Model comparison for industry portfolios}
    \centering
    \subfigure[Missing completely at random]{\label{fig:real-mcar-17} 
		\includegraphics[width=0.4\columnwidth]{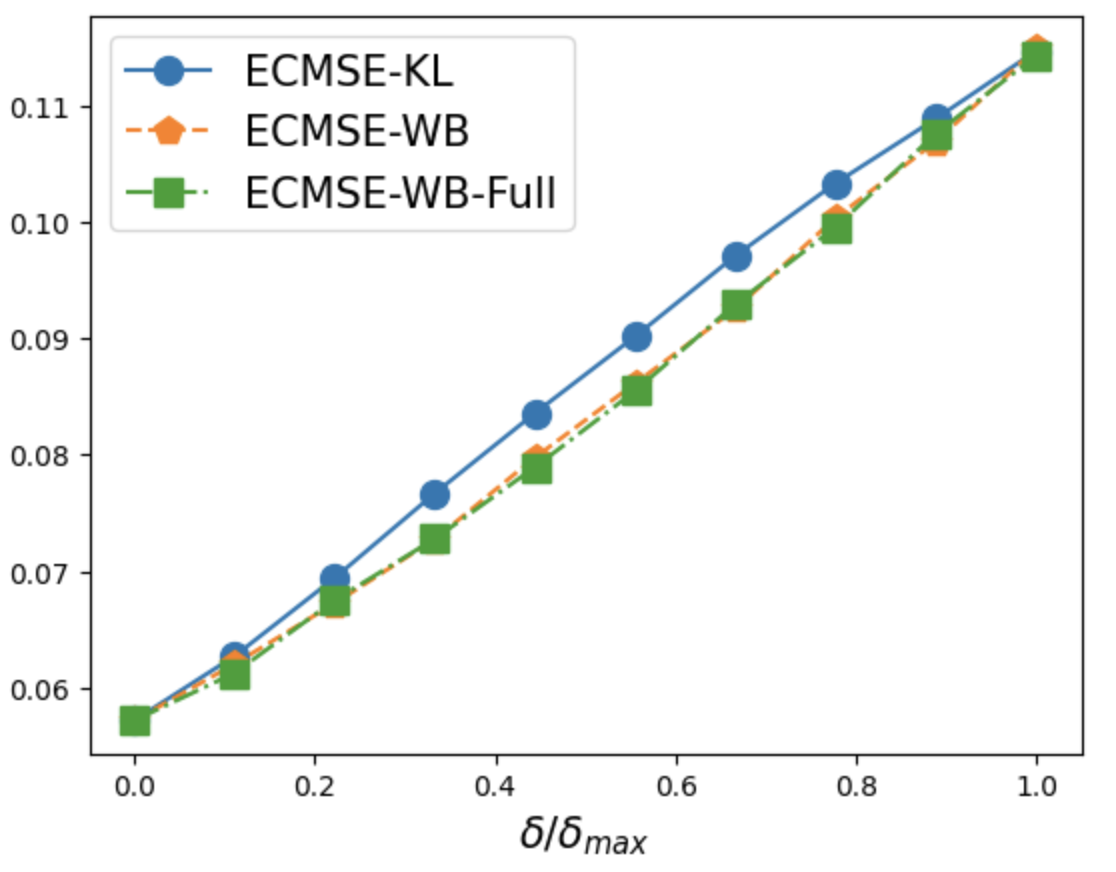}} \hspace{1mm}
	\subfigure[Missing at random]{\label{fig:real-mar-17}
	\includegraphics[width=0.4\columnwidth]{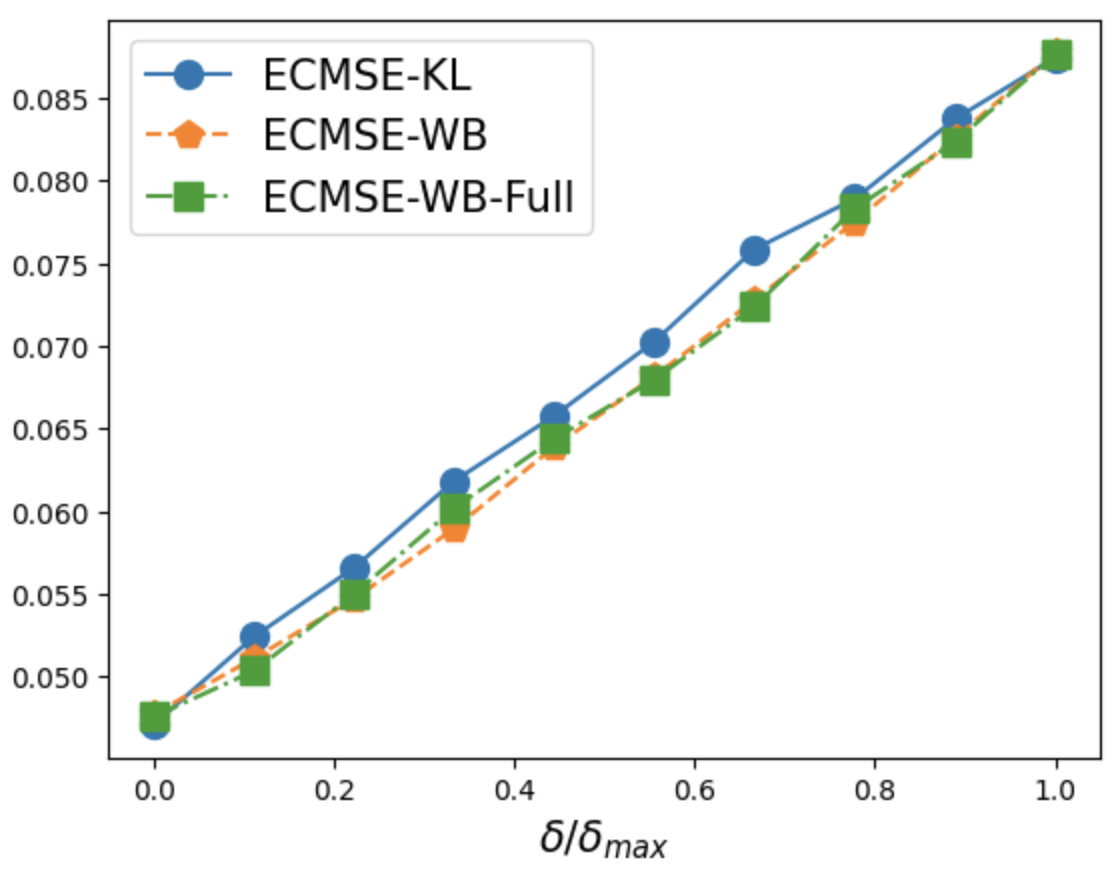}}
	    \hspace{1mm}
	\subfigure[Block missing]{\label{fig:real-block-17} 
		\includegraphics[width=0.4\columnwidth]{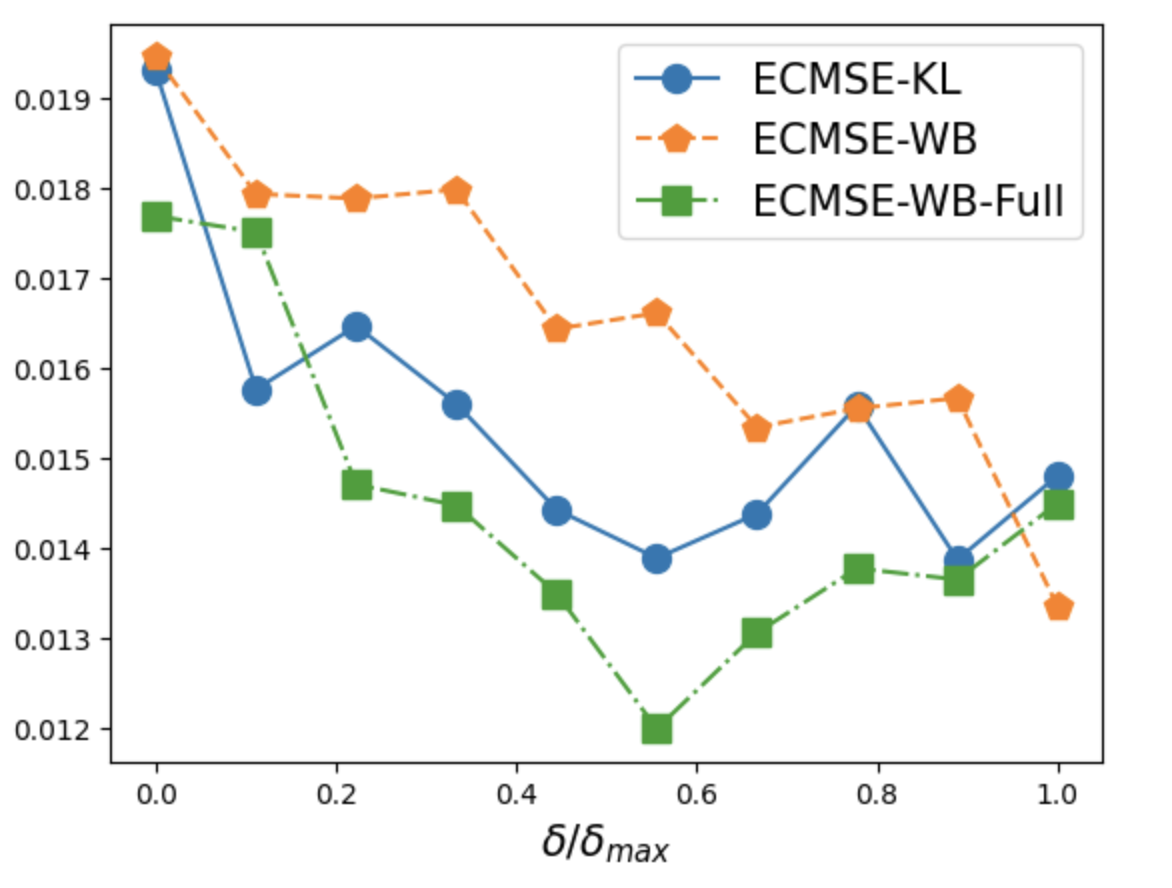}} 
		\subfigure[Missing by value]{\label{fig:real-nonrandom-17} 
		\includegraphics[width=0.4\columnwidth]{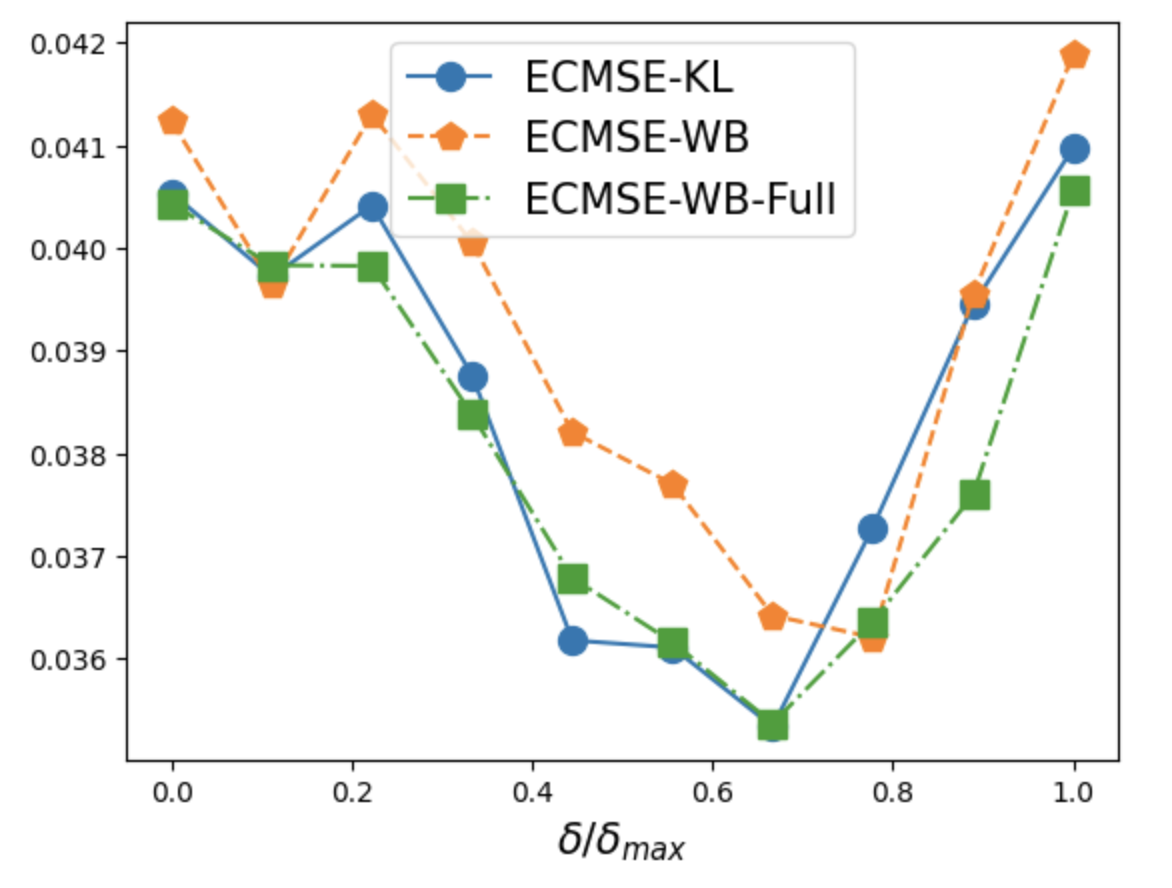}} 
	\bnotefig{These figures compare the expected conditional mean squared error (ECMSE) for different imputation schemes for portfolio returns. We impute the missing data using the forward KL mechanism (blue solid line with circle), full Wasserstein mechanism (green dash-dot line with square), and restricted Wasserstein mechanism (dashed orange line with pentagon) for different weights $\delta$, which control the trade-off between look-ahead-bias and variance. The samples are daily returns of 10 industry portfolios from 01/03/2017 to 01/02/2019.}
    \label{fig:real17}
\end{figure}

\begin{figure}[h!]
\tcapfig{Look-ahead-bias and variance with forward KL mechanism for industry portfolios}
    \centering
    \subfigure[Missing completely at random]{\label{fig:real-mcar-17-kl} 
		\includegraphics[width=0.4\columnwidth]{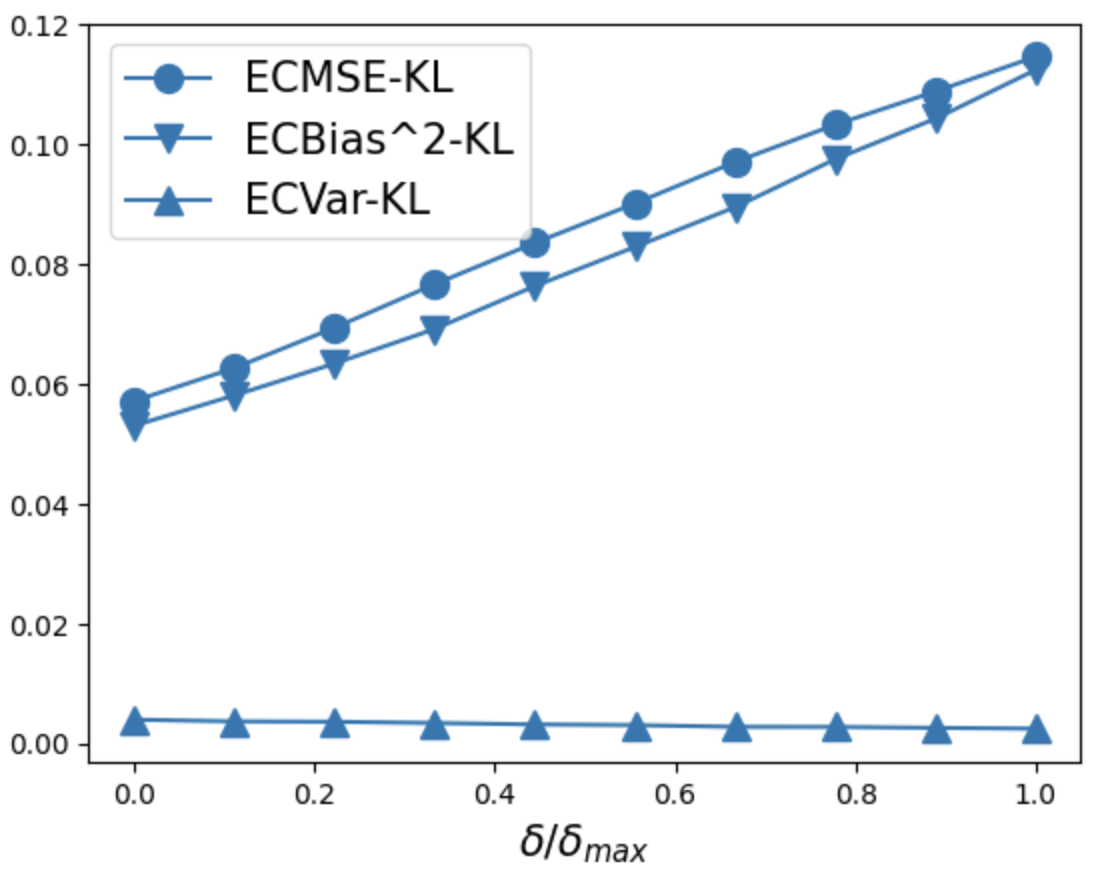}} \hspace{1mm}
	\subfigure[Missing at random]{\label{fig:real-mar-17-kl}
	\includegraphics[width=0.4\columnwidth]{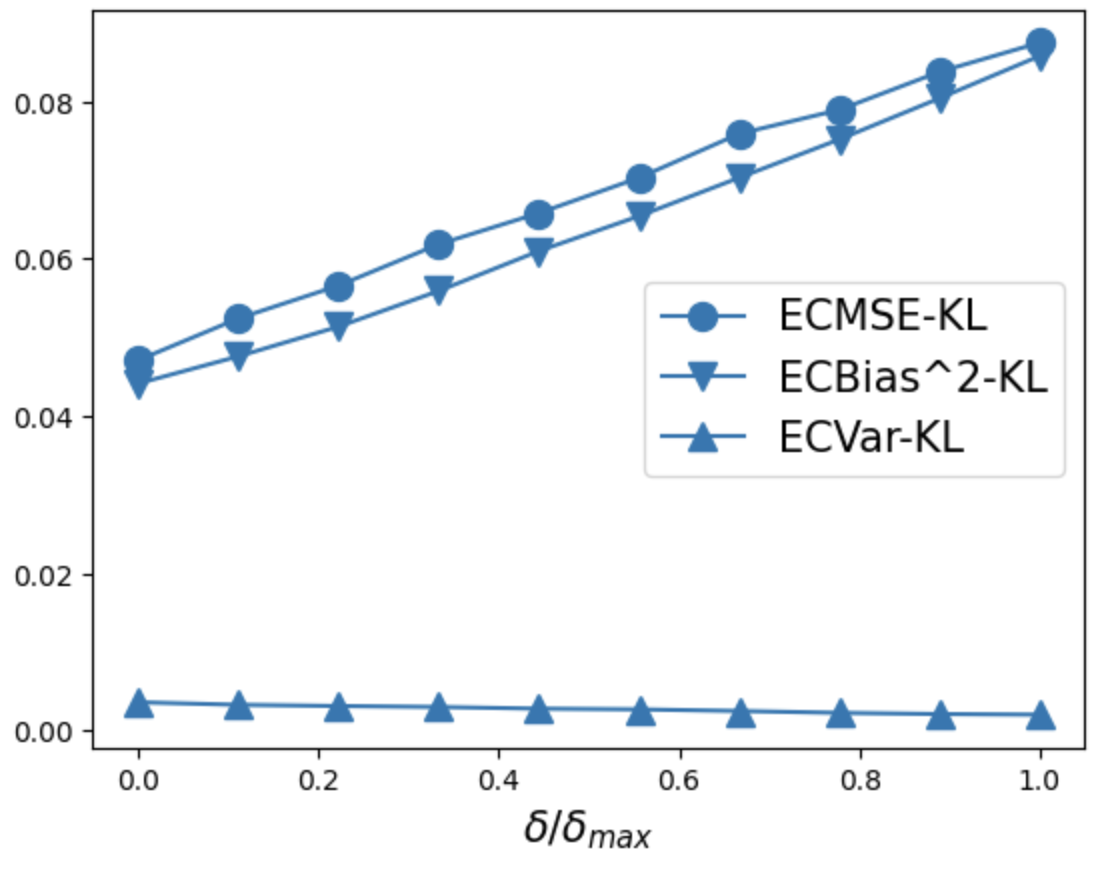}}
	    \hspace{1mm}
	\subfigure[Block missing]{\label{fig:real-block-17-kl} 
		\includegraphics[width=0.4\columnwidth]{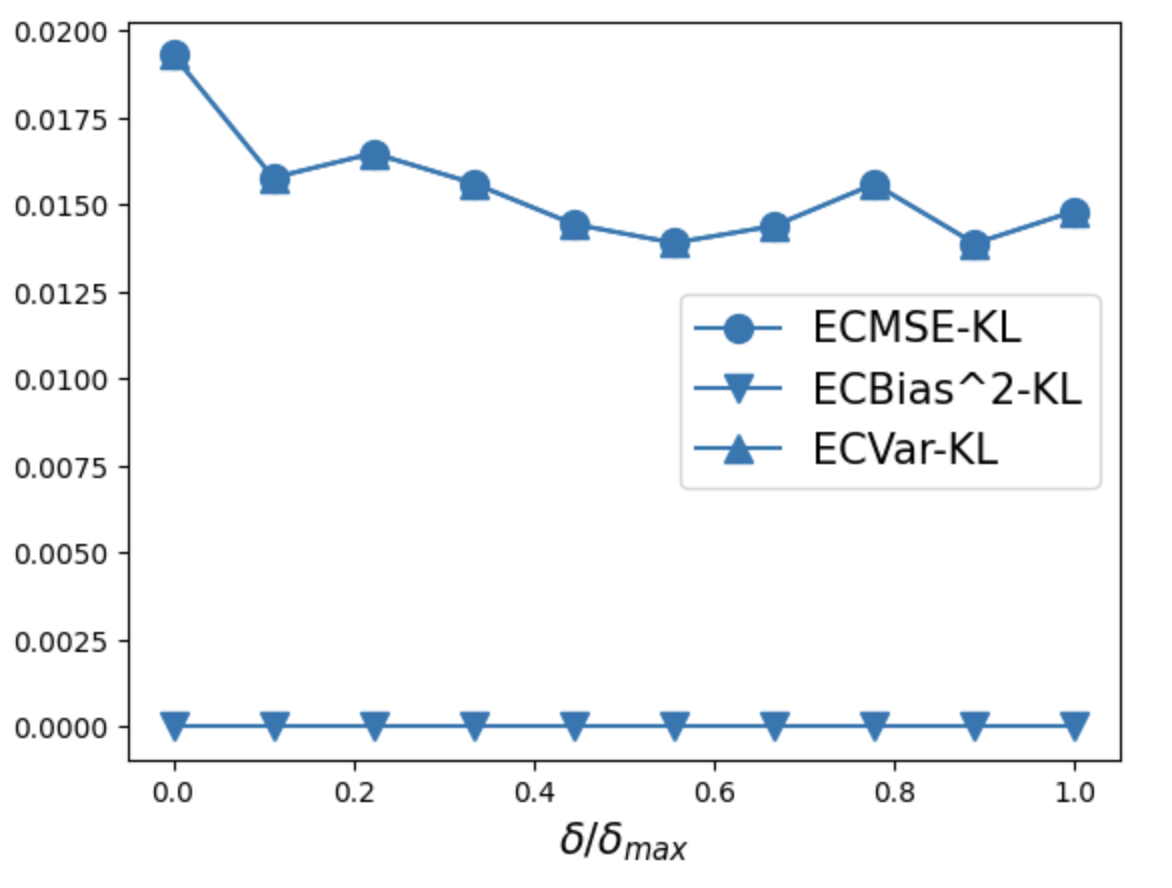}} 
		\subfigure[Missing by value]{\label{fig:real-nonrandom-17-kl} 
		\includegraphics[width=0.4\columnwidth]{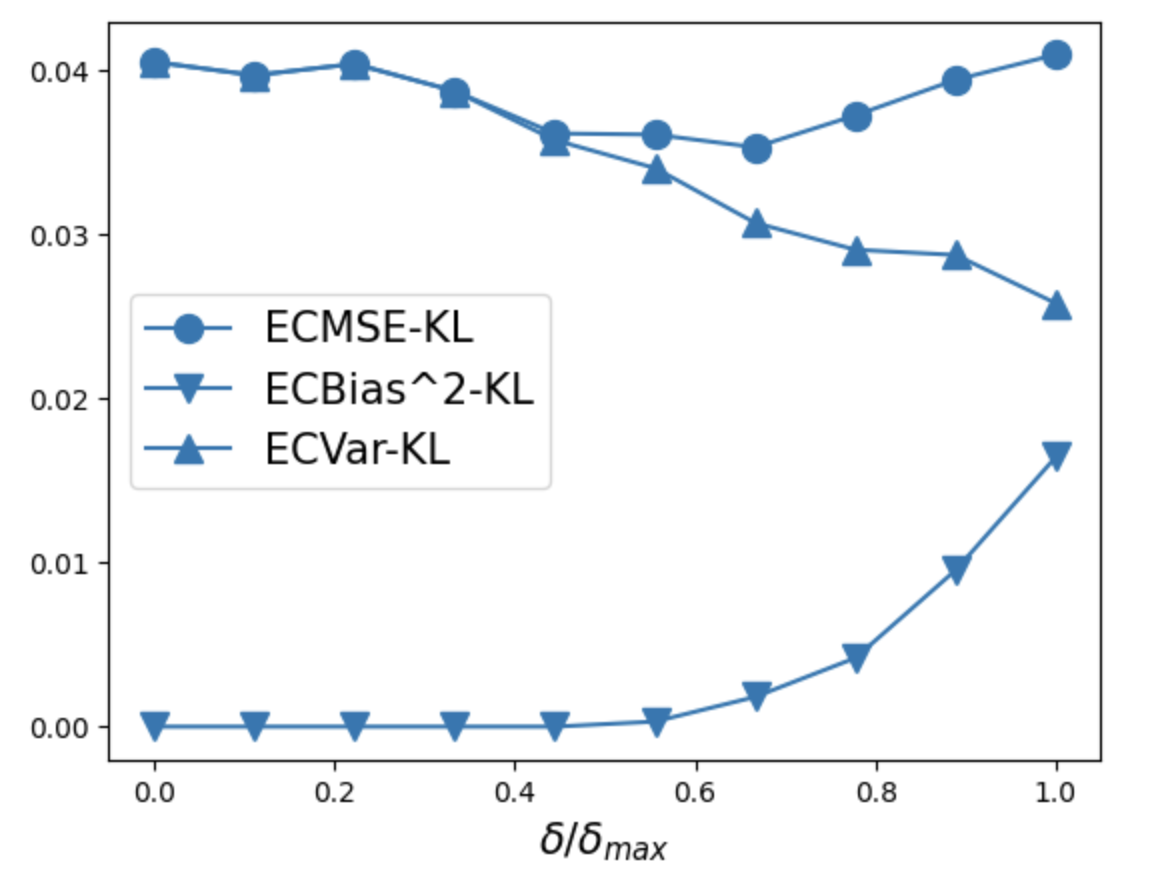}} 
		\bnotefig{These figures show the look-ahead-bias measured by the regret ECBias$^2$, the variance ECVar and the expected conditional mean squared error ECMSE for industry portfolios with the forward KL mechanism for different weights $\delta$, which control the trade-off between look-ahead-bias and variance. The samples are daily returns of 10 industry portfolios from 01/03/2017 to 01/02/2019.}
    \label{fig:real17-kl}
\end{figure}

\begin{itemize}
   \item \textit{Missing completely at random} -- we mask each entry in the training period as missing independently at random with a fixed probability $40\%$. The data is fixed as the first $T^{train}+T^{test}+T^{oos\text{-}test} = 400$ dates from the original data set. We repeat $50$ independent simulations, wherein for each sample we first generate the missing masks and then multiply-impute $100$ times. 
   
   \item \textit{Missing at random} -- for each industry portfolio $i$, we sample a Bernoulli variable $S_i$ with success probability $50\%$. If $S_i=1$, we mask the entries corresponding to $i$ as missing independently at random with probability $20\%$, otherwise we mask the entries corresponding to $i$ as missing independently at random with probability $50\%$. The rest of the procedure is the same as in the previous case.
   
    \item \textit{Block missing} -- we mask the first $40\%$ of the training period as missing. We conduct $50$ independent simulations. In the $k$-th sample, we use the $T^{train}+T^{test}+T^{oos\text{-}test}=400$ days of data starting from the $k$-th date in the original data set, and we multiply-impute $100$ times. 
    
    \item \textit{Missing by value} -- we mask an entry in the training period as missing if its absolute value is greater than $0.03$. The rest of the procedure is the same as in the previous case.
\end{itemize}

Figure \ref{fig:reals} compares the ECMSE as a function of $\delta$ for the three aggregation mechanisms (forward KL mechanism, full Wasserstein mechanism, and restricted Wasserstein mechanism) for individual stock returns. First, we observe that in most cases a non-trivial $\delta$ is optimal, that is, neither avoiding any look-ahead-bias ($\delta=0$) nor using all the data with full look-ahead-bias ($\delta=\delta_{max}$) minimizes the ECMSE. Second, the optimal value of $\delta$ depends on the missing pattern. Third, while the three mechanisms provide somewhat similar results and trade-offs, the more flexible forward KL and full Wasserstein mechanism dominate the restricted Wasserstein mechanism. Figure \ref{fig:reals-kl} provides further details for the trade-off between ECBsias$^2$ and ECVar for the forward KL mechanism. The results for the Wasserstein mechanism are similar and collected in Figures \ref{fig:reals-wbfull} and \ref{fig:reals-wb} in the Appendix. The monotonic dependency on $\delta$ is as expected. In most cases, the variance is only marginally decreased by using more data, while the increase in the bias seems to drive the trade-off.

Figures \ref{fig:real17}, \ref{fig:real17-kl},  \ref{fig:real17-wbfull} and  \ref{fig:real17-wb} show the corresponding results for the industry portfolios. While the general findings are the same as for the individual stocks, they highlight the important dependency on the missing pattern. In the case of missing at random a small value of $\delta$ is optimal as the look-ahead-bias is increasing much faster than the variance. However, for more complex missing patterns, where the returns are missing in blocks or for large realizations, the decrease in variance becomes more dominant and hence leads to a larger optimal value of $\delta$.



\section{Conclusion}
\label{sec:conclusion}

In this paper we study the implications of imputation in a downstream out-of-sample task with time-series input data. We propose a Bayesian framework to introduce and precisely formulate the notion of look-ahead-bias involved in imputation procedures. We provide a solution for the optimal trade-off between look-ahead-bias and variance by a consensus posterior that fuses a number of individual posteriors corresponding to different layers of future information, and we use the (forward and backward) KL divergence and Wasserstein distance to derive tractable reformulations for finding the optimal consensus posterior. Our theoretical results focus on the fundamental case of estimating mean returns under simplifying assumptions in order to clearly illustrate the novel conceptual ideas. However, the stylized features we adopt have a solid motivation in decision-making with time-series data and can be easily relaxed. For example, an extension of our analysis to residuals following an AR(p) process is straightforward (\cite{ref:little2002statistical}). 

Last but not least, we want to emphasize that the impact of look-ahead-bias is important in financial, healthcare, and operations management applications and even relatively small improvements are valuable because their impact can be largely magnified in the downstream task. For example, an investment strategy that applies leverage or derivatives can substantially increase the numbers reported in this paper. We are the first to identify and study the problem of look-ahead-bias in the context of data imputation. As the use of increasingly large data sets is gaining attention in prescriptive analytics, and as most of these data sets have missing values, it is crucial to understand the implications of imputing missing values. We are confident that we have ``opened a floodgate" and that more follow-up papers will focus on the issue of look-ahead-bias quantification, which we believe is interesting, relevant, and largely overlooked.

\singlespacing
\bibliographystyle{econometrica}
{\small
\bibliography{bibliography}
}
\onehalfspacing


\appendix

 \setcounter{equation}{0}
 \renewcommand{\theequation}{\thesection.\arabic{equation}}

\renewcommand{\theequation}{A.\arabic{equation}}%
\renewcommand{\thefigure}{A.\arabic{figure}} \setcounter{figure}{0}
\renewcommand{\thetable}{A.\Roman{table}} \setcounter{table}{0}

\section{Proof of Results}
\label{sec:appproof}
\begin{proof}[Proof of Lemma~\ref{lemma:pdcov}]
    Let $\xi$ by any non-zero vector of dimension $n$. From Assumption~\ref{a:1cell}, there exists $t\in[T_k]$, such that $\xi_{X_t}=\mc P_{M_t}^\perp(\xi)$ is non-zero. Since $\Omega_{X_t}$ is positive definite, we have $\xi_{X_t}^\top\Omega_{X_t}^{-1}\xi_{X_t}>0$. Therefore $$\xi^\top\left((\mc P_{M_t}^\perp)^{-1}(\Omega_{X_t}^{-1})\right)\xi>0.$$ 
    Thus 
    $$\xi^\top\left(\sum_{t \in [T_k]} (\mc P_{M_t}^\perp)^{-1}(\Omega_{X_t}^{-1})\right)\xi>0.$$ Since $\xi$ is arbitrary, we have $\sum_{t \in [T_k]} (\mc P_{M_t}^\perp)^{-1}(\Omega_{X_t}^{-1})$ is positive definite. Thus $\Sigma_k$ is positive definite. 
\end{proof}

\begin{proof}[Proof of Proposition~\ref{prop:bias-F}]
    From the expression of the aggregated mean $\msa$ in Proposition~\ref{prop:consensus-F}, we have
\begin{align*}
    \msa   &=  \big(\sum_{k \in [K]}\lambda_k\cov_k^{-1}\big)^{-1} \big(\sum_{k \in [K]}\lambda_k\cov_k^{-1}\m_k\big) \\
    &=  \big(\sum_{k \in [K]}\lambda_k V  D_k^{-1} V^\top\big)^{-1} \big(\sum_{k \in [K]}\lambda_k V  D_k^{-1} V^\top\m_k\big)  \\
    &=  V\big(\sum_{k \in [K]}\lambda_k  D_k^{-1} \big)^{-1} V^\top V \big(\sum_{k \in [K]}\lambda_k D_k^{-1} V^\top\m_k\big)  \\
    &=  V\big(\sum_{k \in [K]}\lambda_k D_k^{-1} \big)^{-1} \big(\sum_{k \in [K]}\lambda_k D_k^{-1} V^\top\m_k\big),
\end{align*}
where the second equality follows from the eigendecomposition of $\cov_k$ and the last equality follows because $V$ is an orthogonal matrix and thus $V^\top V = I_n$. By exploiting the definition of the set $\mc Z$, we have
\[
    \ell_{\mc Z}(\EE_{\mc C_\lambda^F(\{ \pi_k\})}[\theta]- \m_1) = \ell_{\mc Z}(\msa - \m_1) = \Sup{z \in \mc Z} z^\top (\msa - \m_1) = \Sup{z: \| V^\top z\|_1 \le 1} z^\top VV^\top (\msa - \m_1) = \|V^\top (\msa - \m_1)\|_{\infty},
\]
where the third equality follows from the fact that $VV^\top = I_n$, and the last equality follows by the properties of the dual norm pairs between $\| \cdot \|_1$ and $\| \cdot \|_\infty$. 
Combine with the expression for $\msa$ derived previously, we have
\begin{align*}
     \ell_{\mc Z}(\msa - \m_1) &= \| \big(\sum_{k \in [K]}\lambda_k  D_k^{-1} \big)^{-1} \big(\sum_{k \in [K]}\lambda_k D_k^{-1} V^\top\m_k\big) - V^\top \m_1 \|_\infty\\
     &= \max_{j\in[n]}\left|\frac{\sum_{k \in [K]} c_{kj} \lambda_k}{\sum_{k \in [K]} d_{kj}^{-1} \lambda_k} - v_j^\top \m_1\right|.
\end{align*}
Therefore the following holds
\[
    \{ \lambda \in \Delta_K: \ell_{\mc Z}(\EE_{\mc C_\lambda^F(\{ \pi_k\})}[\theta]-\m_1) \le \delta\} = \left\{ \lambda \in \Delta_K: 
       \ds - \delta \le \frac{\sum_{k \in [K]} c_{kj} \lambda_k}{\sum_{k \in [K]} d_{kj}^{-1} \lambda_k} - v_j^\top \m_1 \le \delta \quad \forall j \in [n]
    \right\},
\]
which is a feasible set prescribed using one simplex constraint and a collection of fractional linear constraints. This completes the proof.
\end{proof}

\begin{proof}[Proof of Theorem~\ref{thm:refor}]
By exploiting Proposition~\ref{prop:consensus-F}, we have $\mc C_{\lambda}^F(\{\pi_k\})=\mc N(\msa,\covsa)$, therefore problem~\eqref{eq:prob} is equivalent to
    \be \label{eq:probeq}
    \begin{array}{cl}
        \min & \Tr{\covsa} \\
        \st & \lambda \in \Delta_K\\
            & \covsa = \big(\sum_{k \in [K]}\lambda_k\cov_k^{-1}\big)^{-1} \in \PD^n,~\msa =\covsa\big(\sum_{k \in [K]}\lambda_k\cov_k^{-1}\m_k\big) \in \R^n\\
            & \ell_{\mc Z}(\msa - \m_1)  \le \delta.
    \end{array}
    \ee

By exploiting the diagonalization of $\cov_k$, we have
\begin{align*}
    \Tr{\covsa} &= \Tr{\big(\sum_{k \in [K]}\lambda_k\cov_k^{-1}\big)^{-1}} = \Tr{\big(\sum_{k \in [K]}\lambda_k V  D_k^{-1} V^\top\big)^{-1}} \\
    &= \Tr{ V (\sum_{k \in [K]} \lambda_k  D_k^{-1} )^{-1} V^\top} = \Tr{ (\sum_{k \in [K]} \lambda_k  D_k^{-1} )^{-1}} \\
    &=\sum_{j\in[n]} \frac{1}{ \sum_{k \in [K]}  d_{kj}^{-1} \lambda_k}.
\end{align*}

By exploiting Proposition~\ref{prop:bias-F}, we have the reformulation of the feasible set 
\[
    \{ \lambda \in \Delta_K: \ell_{\mc Z}(\msa-\m_1) \le \delta\} = \left\{ \lambda \in \Delta_K: 
       \ds - \delta \le \frac{\sum_{k \in [K]} c_{kj} \lambda_k}{\sum_{k \in [K]} d_{kj}^{-1} \lambda_k} - v_j^\top \m_1 \le \delta \quad \forall j \in [n]
    \right\}.
\]
Combine the above formulae, problem~\eqref{eq:probeq} becomes
\be \label{eq:refor-F-aux}
    \begin{array}{cl}
    \min & \ds\sum_{j\in[n]} \frac{1}{ \sum_{k \in [K]} d_{kj}^{-1} \lambda_k} \\
    \st & \lambda \in \Delta_K,\\
        & \ds - \delta \le \frac{\sum_{k \in [K]} c_{kj} \lambda_k}{\sum_{k \in [K]}  d_{kj}^{-1} \lambda_k} - v_j^\top \m_1 \le \delta \quad \forall j \in [n],
\end{array}
\ee
Because each element $d_{kj}$ is strictly positive and any feasible solution $\lambda$ should satisfy $\lambda \neq 0$, the fractional constraint of problem~\eqref{eq:refor-F-aux} can be rewritten as systems of linear inequality constraints
\[
    \forall j \in [n]: \qquad 
    \left\{
        \begin{array}{l}
            \sum_{k \in [K]} c_{kj} \lambda_k \le (\delta + v_j^\top \m_1) \sum_{k \in [K]}  d_{kj}^{-1} \lambda_k \\
            \sum_{k \in [K]} c_{kj} \lambda_k \ge (v_j^\top \m_1 - \delta) \sum_{k \in [K]}  d_{kj}^{-1} \lambda_k.
        \end{array}
    \right.
\]
Moreover, each fractional term in the objective function of~\eqref{eq:refor-F-aux} can be reformulated as a second-order cone following Section~2.3 in~\cite{ref:lobo1998socp}. This completes the proof.
\end{proof}

\begin{proof}[Proof of Lemma~\ref{lemma:projection-V}]
    
    By exploiting the definition of $\mc N_V$, we can parametrize the optimal solution as  $\pi_k' \sim \mc N(\m_k', \cov_k')$ where $\cov_k' = V  D_k V^\top$ and $D_k = \diag(d_k)$ is a diagonal matrix.
    
    First, consider Assertion~\ref{lemma:projection-V1}. By utilizing the expression of the KL divergence between Gaussian distributions in Definition~\ref{def:KL}, we have
    \[
        \begin{array}{rcl}
        (\m_k', \cov_k',  d_k) = \arg& \min & (\msa_k - \m_k)^\top \covsa_k^{-1} (\msa_k - \m_k) + \log \det \covsa_k + \Tr{\covsa_k^{-1} \cov_k} \\
        & \st & \msa_k \in \R^n,~\covsa_k \in \PD^n,~\wh d_k \in \R_{++}^n,~\covsa_k =V \diag(\wh d_k) V^\top.
        \end{array}
    \]
    For any $\covsa_k \in \PD^n$, the optimal solution in the variable $\msa_k$ is $\m_k' = \m_k$. The optimal solution in the variable $\wh d_k$ satisfies
    \[
         d_k = \arg\Min{\wh d_k \in \R_{++}^n}~\log\det \diag(\wh d_k) + \Tr{ \diag(\wh d_k^{-1}) V^\top \cov_k V},
    \]
    which admits the optimal solution
    $(d_{k})_j = v_j^\top \cov_k v_j$, for all $j \in [n]$.
    
    Next, consider Assertion~\ref{lemma:projection-V2}.  Utilizing the expression of the KL divergence between Gaussian distributions in Definition~\ref{def:KL}, we have
    \[
        \begin{array}{rcl}
        (\m_k', \cov_k',  d_k) = \arg& \min & (\msa_k - \m_k)^\top (\cov_k)^{-1} (\msa_k - \m_k) - \log \det \covsa_k + \Tr{\covsa_k (\cov_k)^{-1}} \\
        & \st & \msa_k \in \R^n,~\covsa_k \in \PD^n,~\wh d_k \in \R_{++}^n,~\covsa_k = V \diag(\wh d_k) V^\top.
        \end{array}
    \]
    Because $\cov_k \in \PD^p$, the optimal solution in the variable $\msa_k$ is $\m_k' = \m_k$. The optimal solution in the variable $\wh d_k$ satisfies
    \[
         d_k = \arg \Min{\wh d_k \in \R_{++}^n} ~-\log\det \diag(\wh d_k) + \Tr{\diag(\wh d_k) V^\top (\cov_k)^{-1} V},
    \]
    which admits the optimal solution $( d_{k})_j = [V^\top (\cov_k)^{-1} V]_{jj}$ for all $ j \in [n]$. This completes the proof.
\end{proof}

\begin{proof}[Proof of Proposition~\ref{prop:bias-W}]
 Note that by Proposition~\ref{prop:consensus-W}, the first moment of the solution $\pi\opt$ to~\eqref{eq:consensus-W} satisfies
$\mathbb{E}_{\pi\opt}[\theta] = \msa =\sum_{k \in [K]}\lambda_k\m_k$.  
Now $(i)$ and $(iii)$ follow from definition. For $(ii)$, note that
\[
        \Lambda_{\ell_{\mc Z}, \delta}^W( \{\pi_k\}) = \left\{ \lambda \in \Delta_K: 
      \sup_{Az\leq b} z^\top\left(\sum_{k \in [K]}\lambda_k\m_k-\m_1\right)\leq \delta
    \right\}.
    \] 
    The dual linear programming of $\sup_{Az\leq b} z^\top\left(\sum_{k \in [K]}\lambda_k\m_k-\m_1\right)$ is
      \[
        \begin{array}{cl}
        \inf & b^\top w\\
        \st & A^\top w = \sum_{k\in[K]}\lambda_k\mu_k-\mu_1,~ w \geq 0,
        \end{array}
        \]
    and by strong duality, the primal and dual optimal objective values are equal. Therefore we can reformulate $\sup_{Az\leq b} z^\top\left(\sum_{k \in [K]}\lambda_k\m_k-\m_1\right)\leq \delta$ as the existence of $w\geq0$ such that $b^\top w\leq\delta$ and $A^\top w = \sum_{k\in[K]}\lambda_k\mu_k-\mu_1$.
\end{proof}

\begin{proof}[Proof of Theorem~\ref{thm:consensus-W}]
    Because the covariance matrices $\cov_k$ are simultaneously diagonalizable, Remark~4.4 in~\cite{ref:alvarez2016fixed} implies that the covariance matrix $\covsa$ is known in closed form
    \[
        \covsa = \big( \sum_{k \in [K]} \lambda_k \cov_k^\half \big)^2.
    \]
    Since the trace operator is a linear operator, we find
    \[
        \Tr{\covsa} = \sum_{i, j \in [K]} \lambda_i \lambda_j \Tr{\cov_i^\half \cov_j^\half}.
    \]
    This completes the proof.
\end{proof}

\begin{proof}[Proof of Proposition~\ref{prop:reformKis2}]
It is straightforward to compute
\begin{align*}
 \mathrm{Tr}(\mathrm{Cov}_{\mc C_{\lambda}(\{ \pi_k\})}[\theta]) & = \mathrm{Tr}(\widehat\Sigma)\\
& = \mathrm{Tr}\left((\lambda_1I_n+\lambda_2\Phi)\Sigma_1(\lambda_1I_n+\lambda_2\Phi)\right)\\
& = \lambda_1^2\mathrm{Tr}(\Sigma_1)+\lambda_1\lambda_2(\mathrm{Tr}(\Sigma_1\Phi)+\mathrm{Tr}(\Phi\Sigma_1))+\lambda_2^2\mathrm{Tr}(\Phi\Sigma_1\Phi)\\
& = \lambda_1^2\mathrm{Tr}(\Sigma_1)+2\lambda_1\lambda_2\mathrm{Tr}(\Sigma_1\Phi) + \lambda_2^2\mathrm{Tr}(\Sigma_2),
\end{align*}
which completes the proof.
\end{proof}

\begin{proof}[Proof of Theorem~\ref{thm:reversereform}]
By exploiting Proposition~\ref{prop:consensus-B}, the first and second moments of the posterior distribution $\pi\opt = \mc C_{\lambda}^B(\{\pi_k\})$ satisfy
\begin{subequations}
\begin{equation}\label{eq:backproofmean}
\mathbb{E}_{\pi\opt}[\theta] = \sum_{k \in [K]}\lambda_k\mathbb{E}_{\pi_k}[\theta]=\sum_{k \in [K]}\lambda_k\m_k,
\end{equation}
and
\[
\mathbb{E}_{\pi\opt}[\theta\theta^\top]=\sum_{k \in [K]}\lambda_k\mathbb{E}_{\pi_k}[\theta\theta^\top]=\sum_{k \in [K]}\lambda_k(\m_k\m_k^\top +\cov_k).
\]
Thus, the covariance matrix under $\pi\opt$ is
\begin{equation}\label{eq:backproofcov}
\mathrm{Cov}_{\pi\opt}[\theta] = \mathbb{E}_{\pi\opt}[\theta\theta^\top] - \mathbb{E}_{\pi\opt}[\theta]\mathbb{E}_{\pi\opt}[\theta]^\top = \sum_{k \in [K]}\lambda_k(\m_k\m_k^\top+\cov_k)-(\sum_{k \in [K]}\lambda_k\m_k)(\sum_{k \in [K]}\lambda_k\m_k)^\top.
\end{equation}
\end{subequations}
Recall that $\lambda\opt$ solves
\[\begin{array}{cl}
        \min & \mathrm{Tr}(\mathrm{Cov}_{\pi\opt}[\theta]) \\
        \st & \lambda \in \Delta_K, ~\ell_{\mc Z}(\EE_{\pi\opt}[\theta]- \m_1) \le \delta.
    \end{array}
\]
The reformulation~\eqref{eq:reversereforquad} follows by substituting~\eqref{eq:backproofmean} and~\eqref{eq:backproofcov} into the above optimization problem. 
\end{proof}

\section{Simulation Study}
\label{sec:appendix}


\begin{figure}[H]
\tcapfig{Look-ahead-bias and variance with full Wasserstein mechanism on simulated data}
    \centering
    \subfigure[Missing completely at random]{\label{fig:synth-mcar-wbfull}
		\includegraphics[width=0.45\columnwidth]{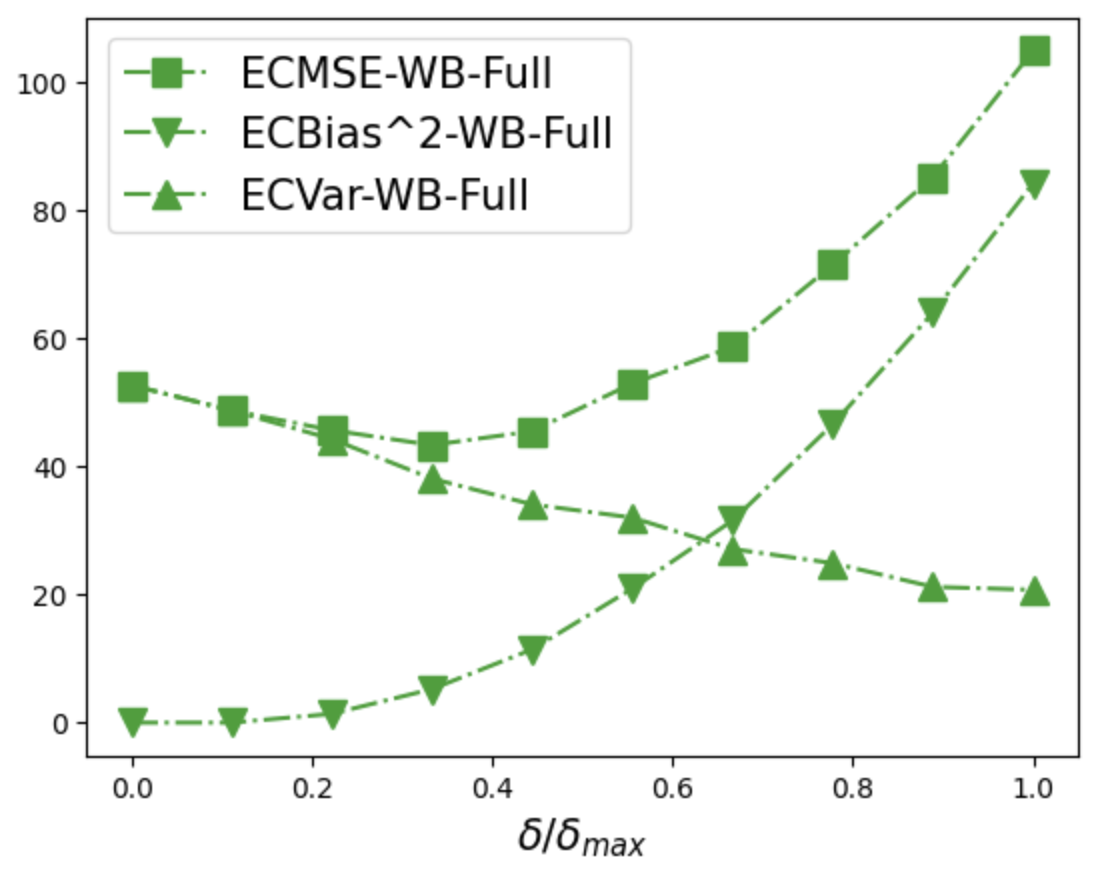}} \hspace{1mm}
	\subfigure[Missing at random]{\label{fig:synth-mar-wbfull}
	\includegraphics[width=0.45\columnwidth]{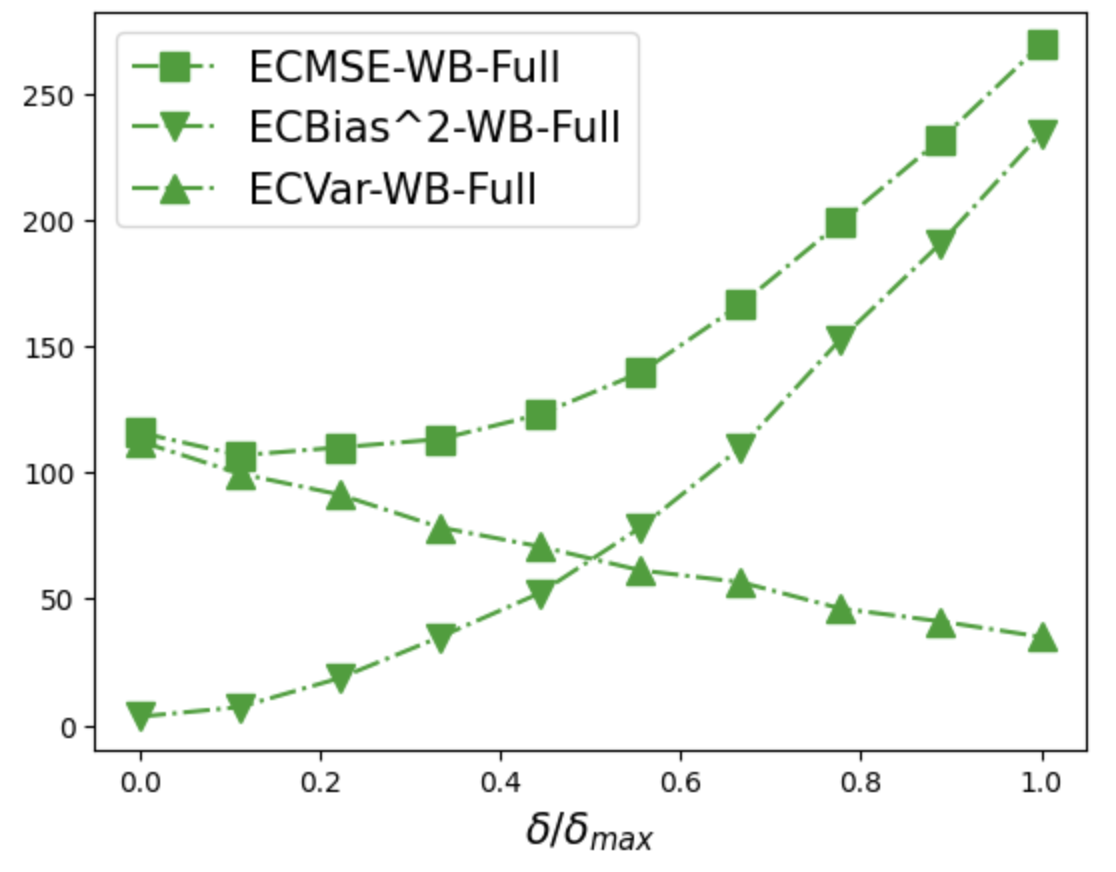}}
	    \hspace{1mm}
	\subfigure[Block missing]{\label{fig:synth-block-wbfull}
		\includegraphics[width=0.45\columnwidth]{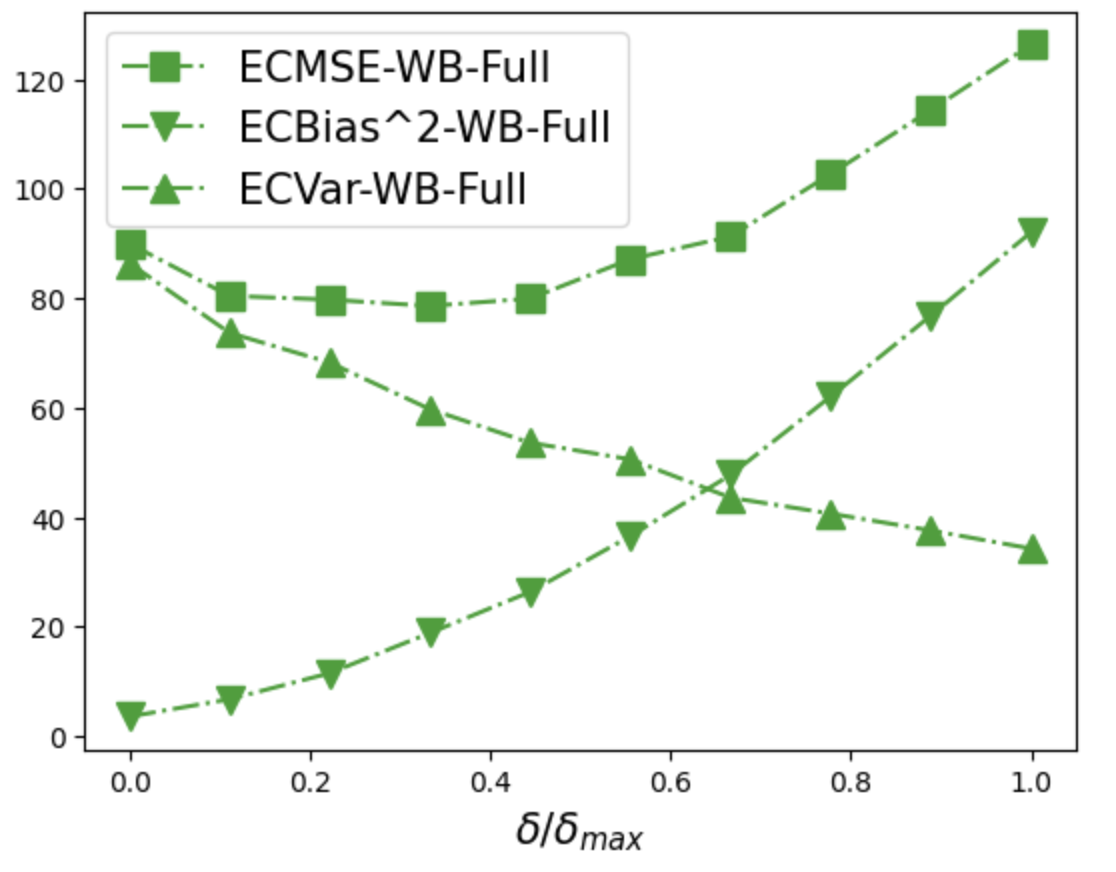}} 
	\subfigure[Missing by value]{\label{fig:synth-nonrandom-wbfull} 
		\includegraphics[width=0.45\columnwidth]{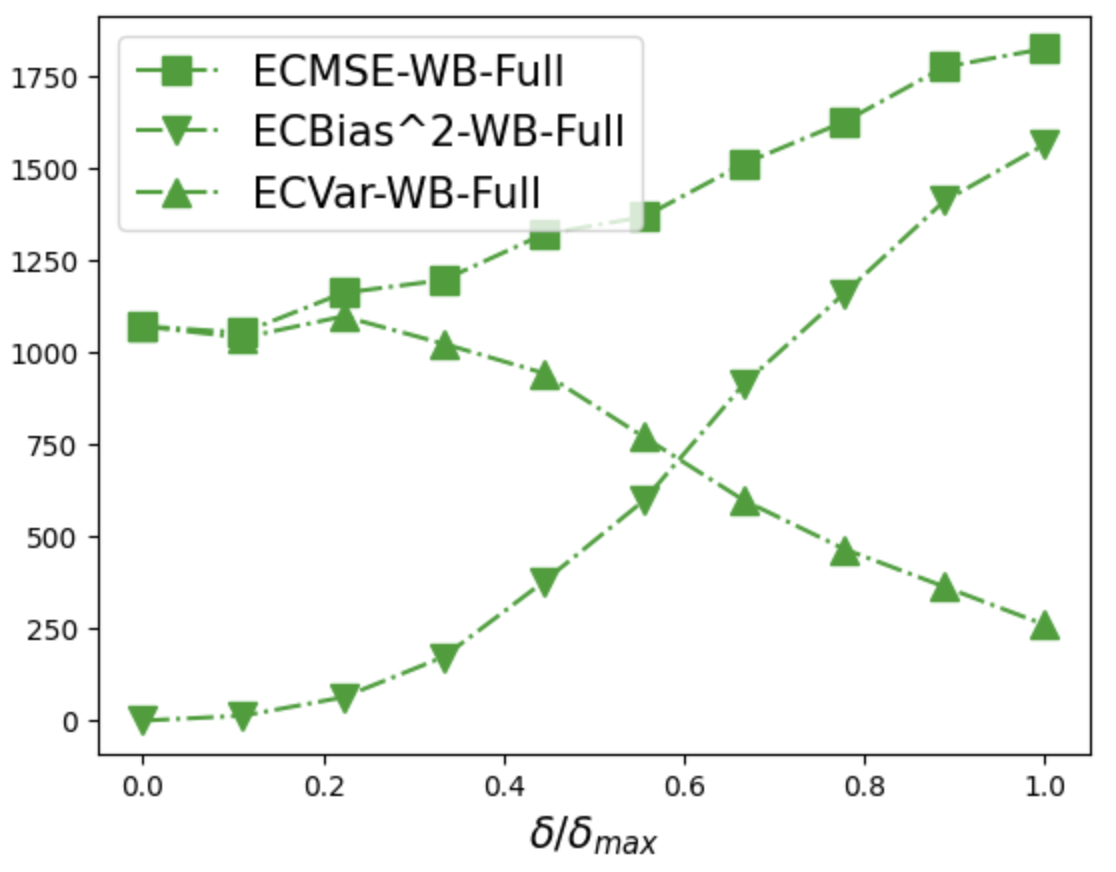}} 
			\bnotefig{These figures show the look-ahead-bias measured by the regret ECBias$^2$, the variance ECVar and the expected conditional mean squared error ECMSE for simulated data with the full Wasserstein mechanism for different weights $\delta$, which control the trade-off between look-ahead-bias and variance.}
    \label{fig:synth-wbfull}
\end{figure}

\begin{figure}[H]
\tcapfig{Look-ahead-bias and variance with restricted Wasserstein mechanism on simulated data}
    \centering
    \subfigure[Missing completely at random]{\label{fig:synth-mcar-wb}
		\includegraphics[width=0.45\columnwidth]{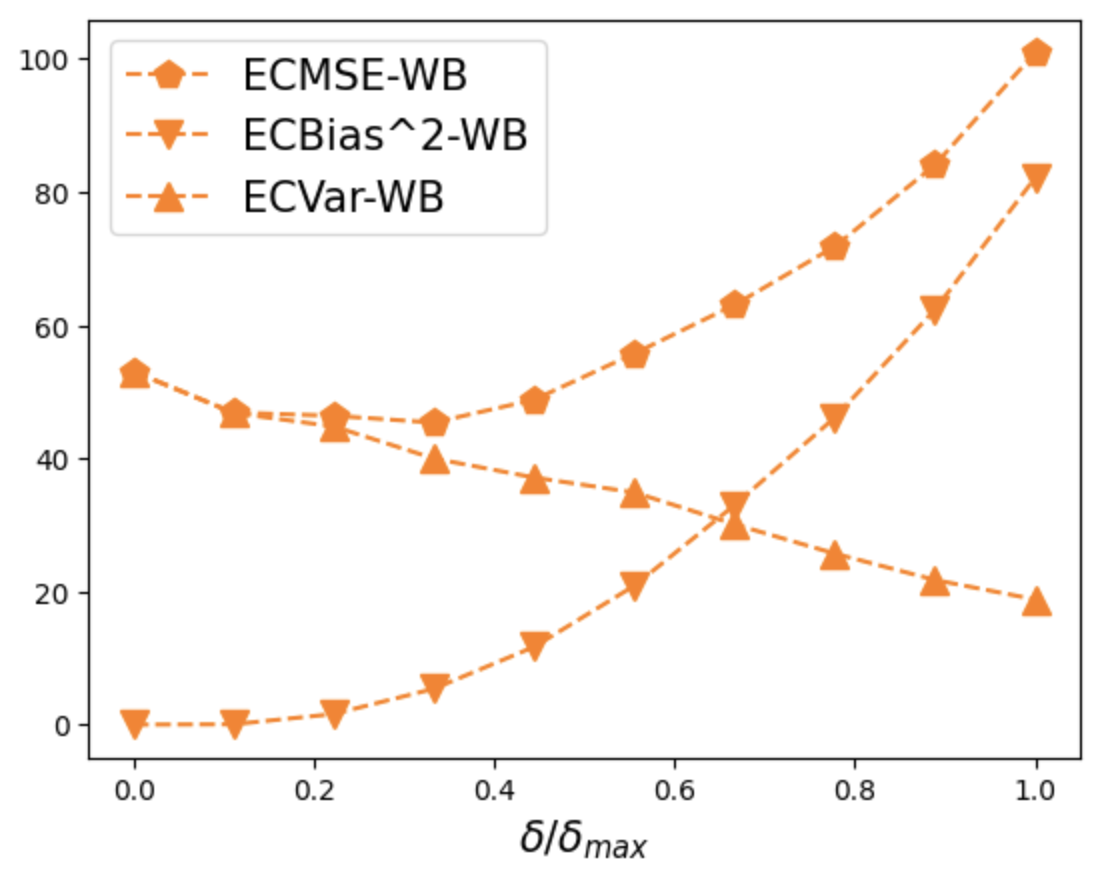}} \hspace{1mm}
	\subfigure[Missing at random]{\label{fig:synth-mar-wb}
	\includegraphics[width=0.45\columnwidth]{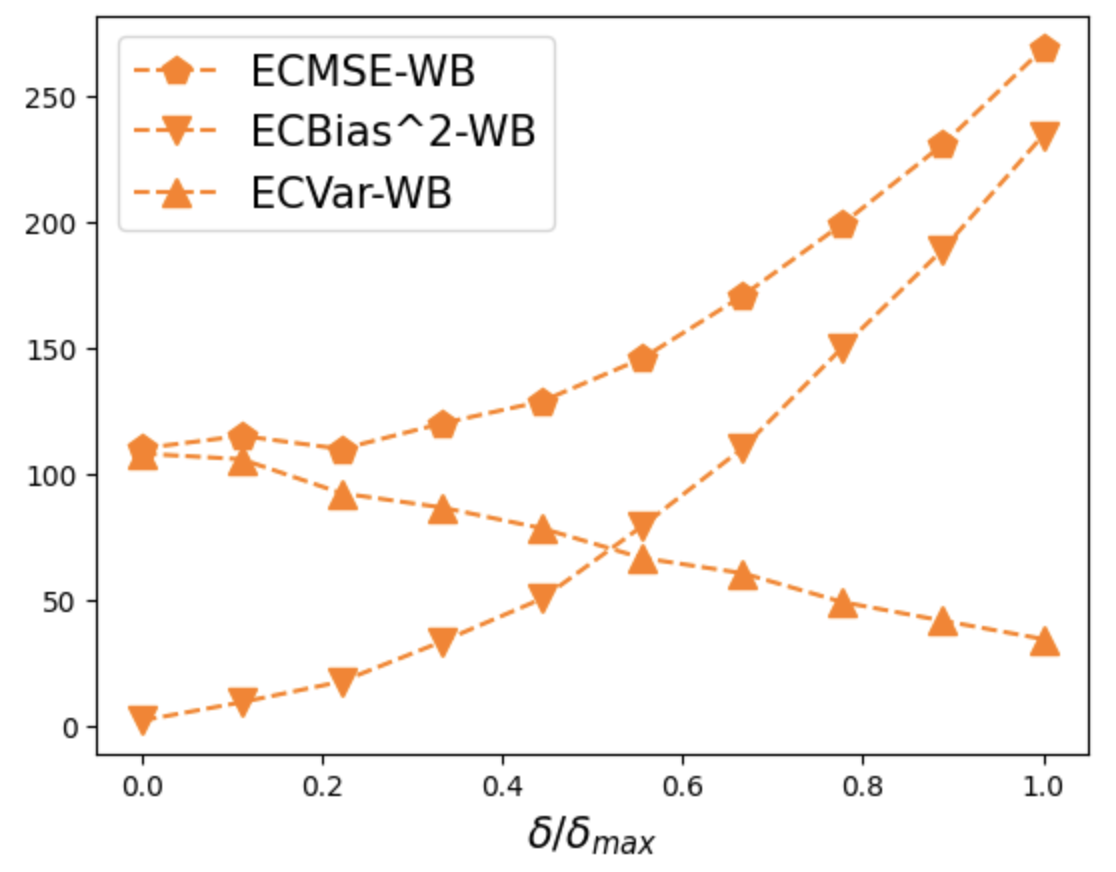}}
	    \hspace{1mm}
	\subfigure[Block missing]{\label{fig:synth-block-wb}
		\includegraphics[width=0.45\columnwidth]{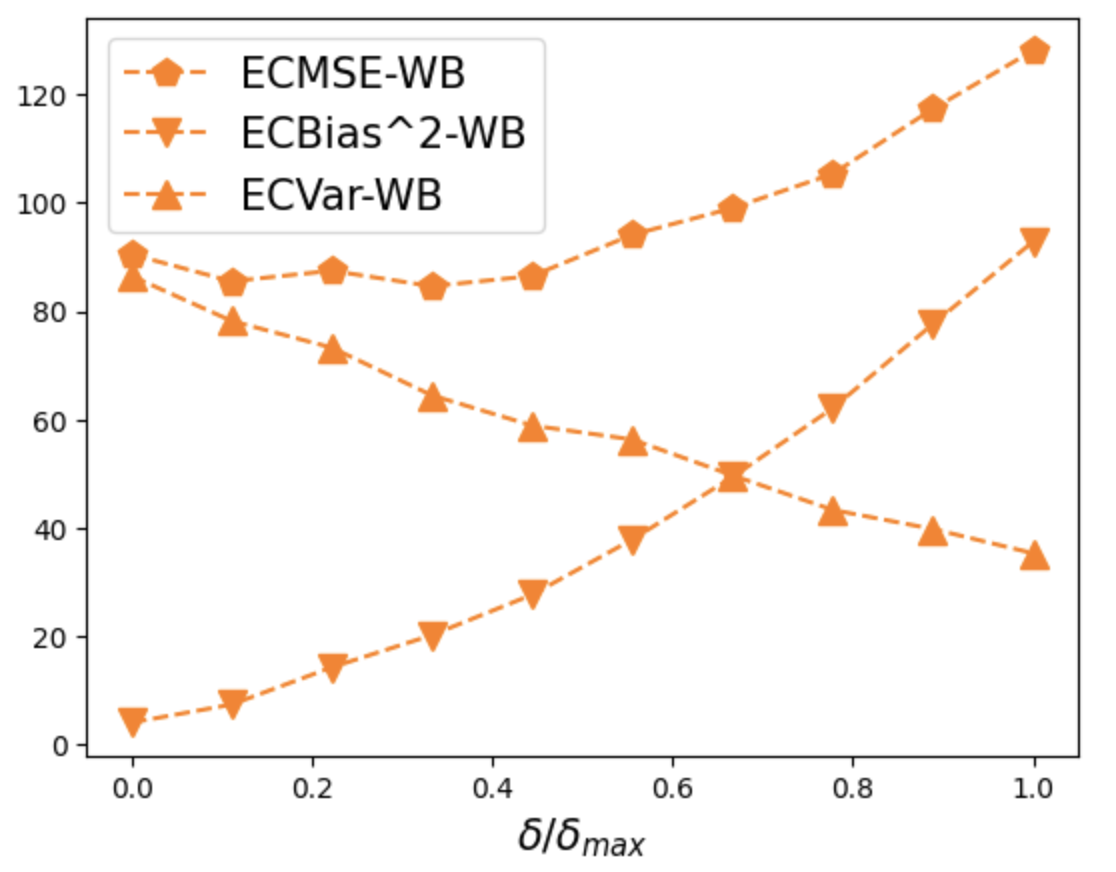}} 
	\subfigure[Missing by value]{\label{fig:synth-nonrandom-wb} 
		\includegraphics[width=0.45\columnwidth]{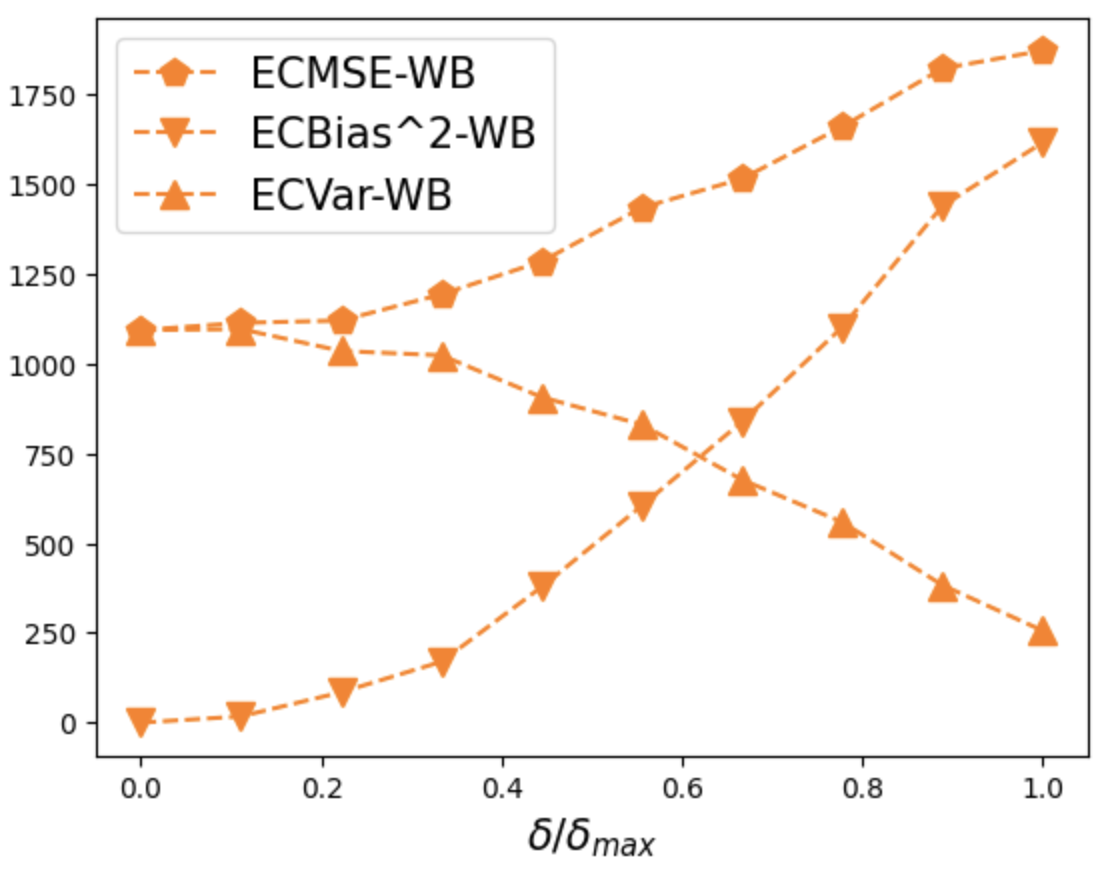}} 
					\bnotefig{These figures show the look-ahead-bias measured by the regret ECBias$^2$, the variance ECVar and the expected conditional mean squared error ECMSE for simulated data with the restricted Wasserstein mechanism for different weights $\delta$, which control the trade-off between look-ahead-bias and variance.}
    \label{fig:synth-wb}
\end{figure}

\section{Empirical Results}

\subsection{Imputation of Individual Stock Returns}
\label{sec:appnumerics-stock} 


\begin{figure}[H]
\tcapfig{Look-ahead-bias and variance with full Wasserstein mechanism for individual stocks}
    \centering
    \subfigure[Missing completely at random]{\label{fig:real-mcar-s-wbfull} 
		\includegraphics[width=0.45\columnwidth]{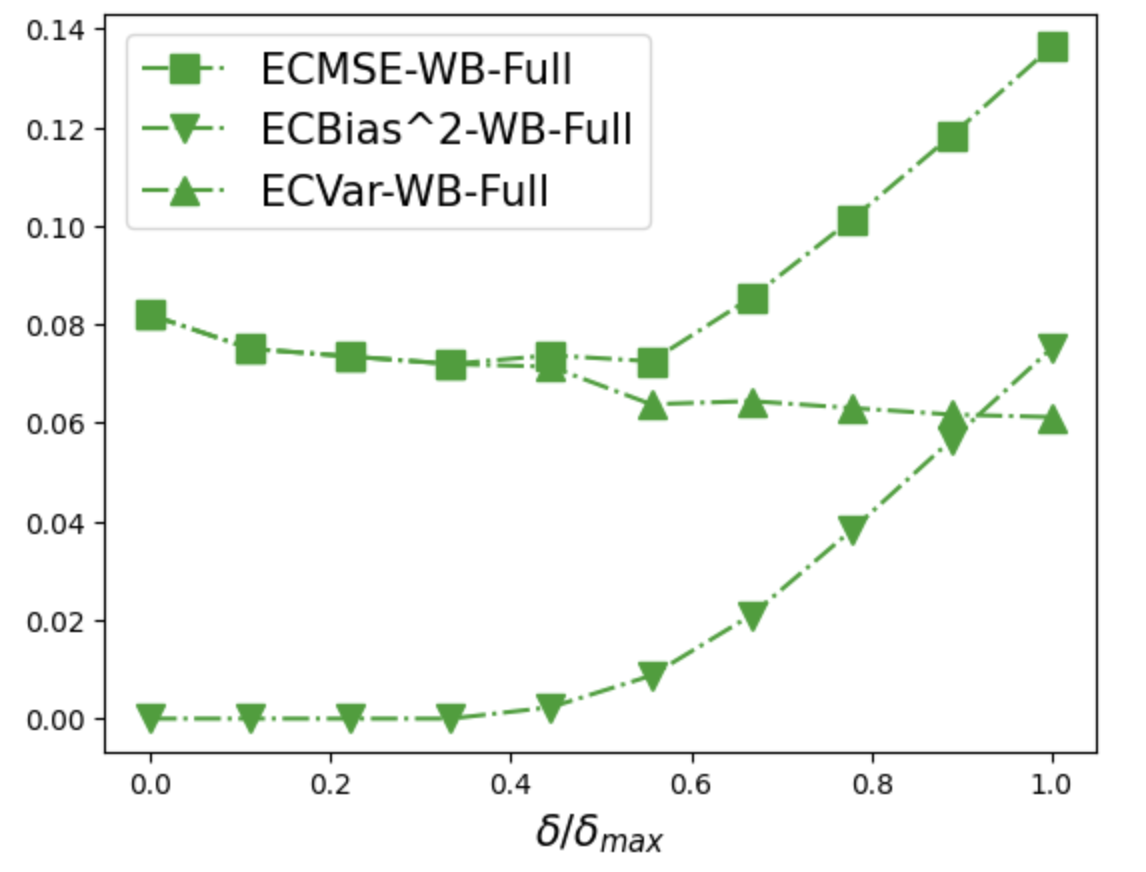}} \hspace{1mm}
	\subfigure[Missing at random]{\label{fig:real-mar-s-wbfull}
	\includegraphics[width=0.45\columnwidth]{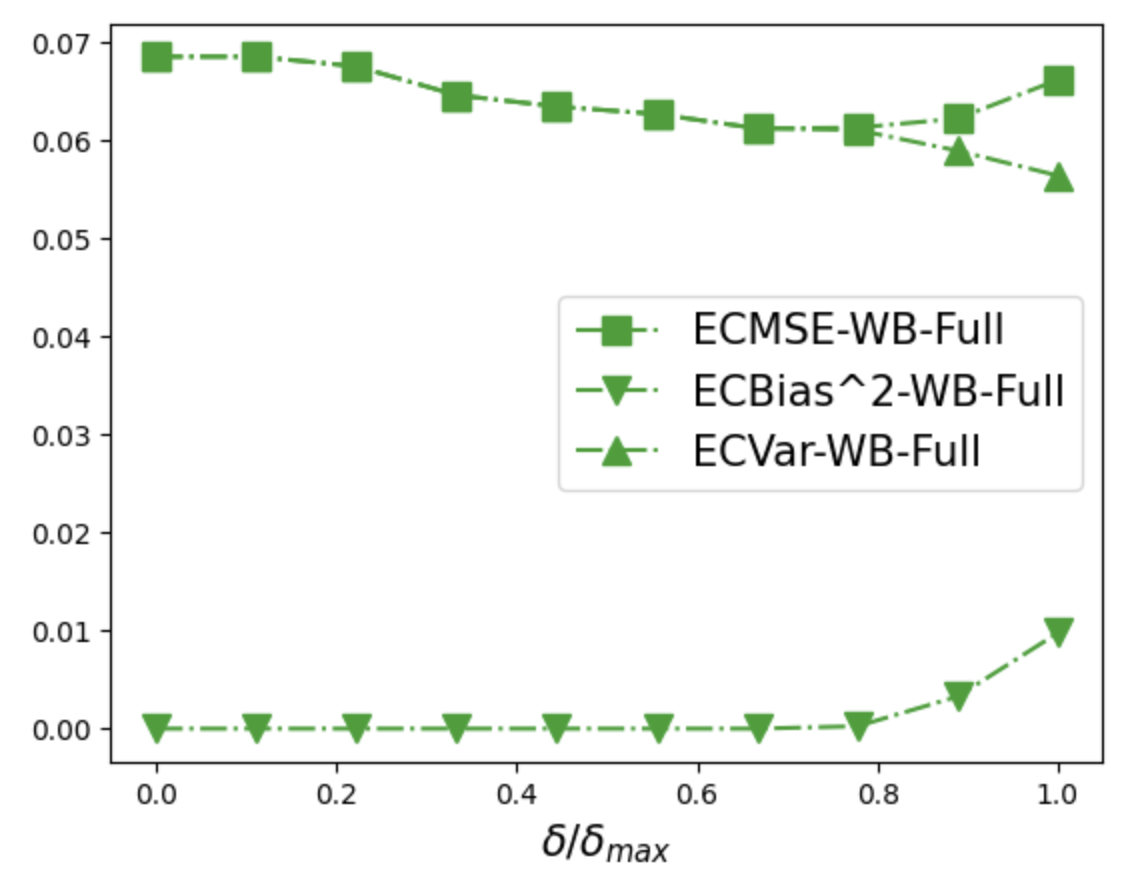}}
	    \hspace{1mm}
	\subfigure[Block missing]{\label{fig:real-block-s-wbfull} 
		\includegraphics[width=0.45\columnwidth]{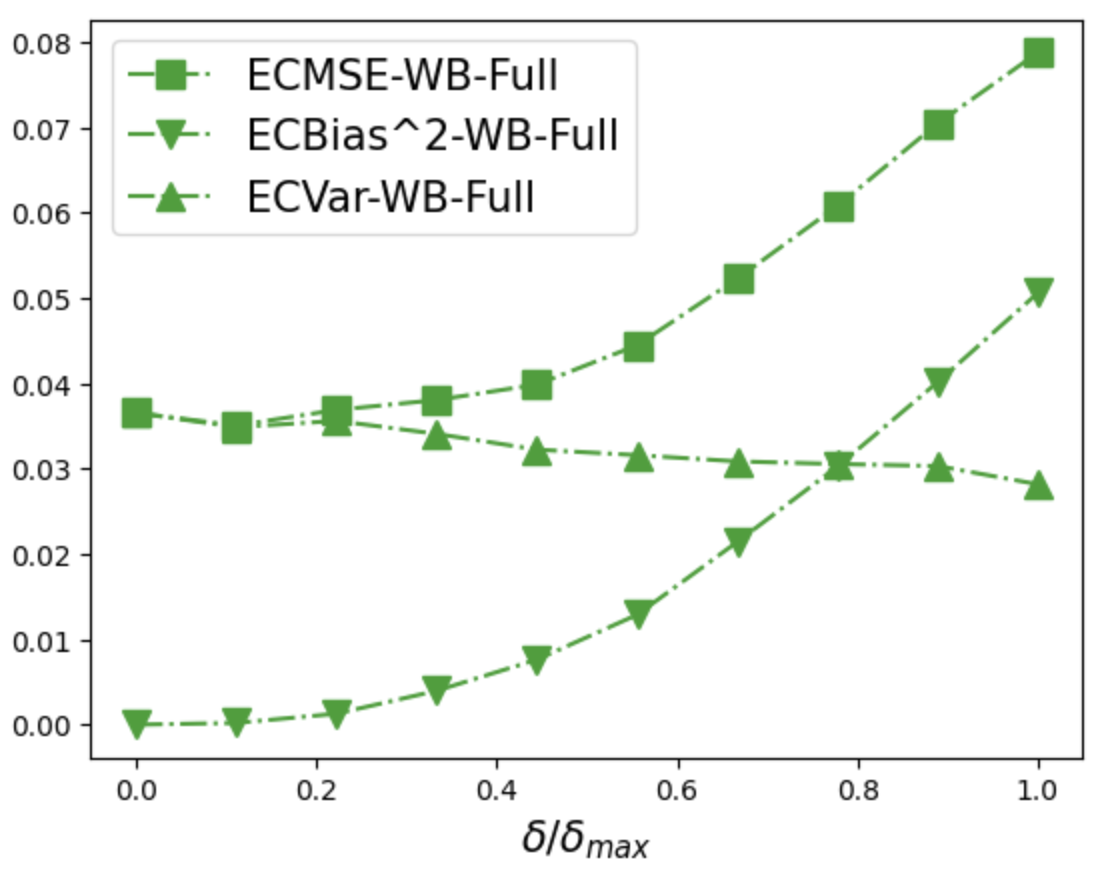}} 
		\subfigure[Missing by value]{\label{fig:real-nonrandom-s-wbfull} 
		\includegraphics[width=0.45\columnwidth]{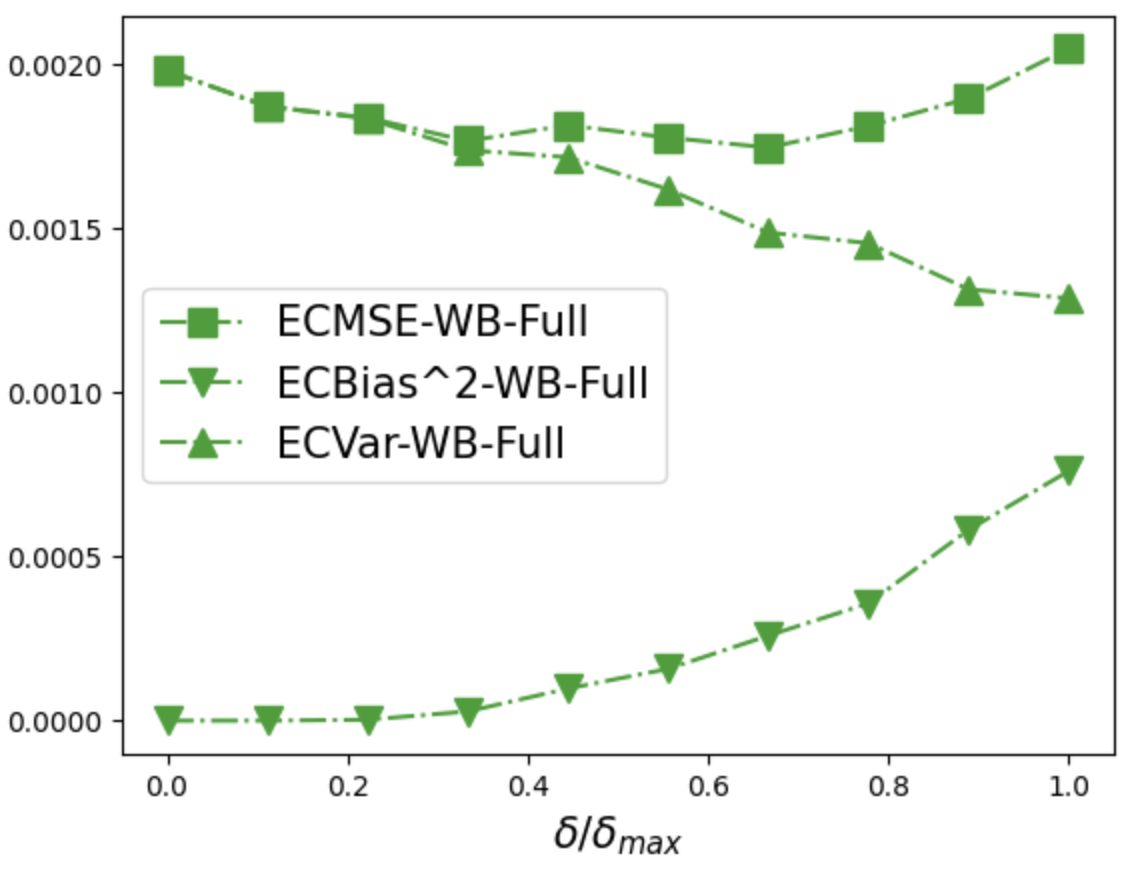}} 
			\bnotefig{These figures show the look-ahead-bias measured by the regret ECBias$^2$, the variance ECVar and the expected conditional mean squared error ECMSE for individual stocks with the full Wasserstein mechanism for different weights $\delta$, which control the trade-off between look-ahead-bias and variance. The samples are daily returns of 10 representative large capitalization stocks from  01/01/2015 to 01/01/2017.}
    \label{fig:reals-wbfull}
\end{figure}

\begin{figure}[H]
\tcapfig{Look-ahead-bias and variance with restricted Wasserstein mechanism for individual stocks}
    \centering
    \subfigure[Missing completely at random]{\label{fig:real-mcar-s-wb} 
		\includegraphics[width=0.45\columnwidth]{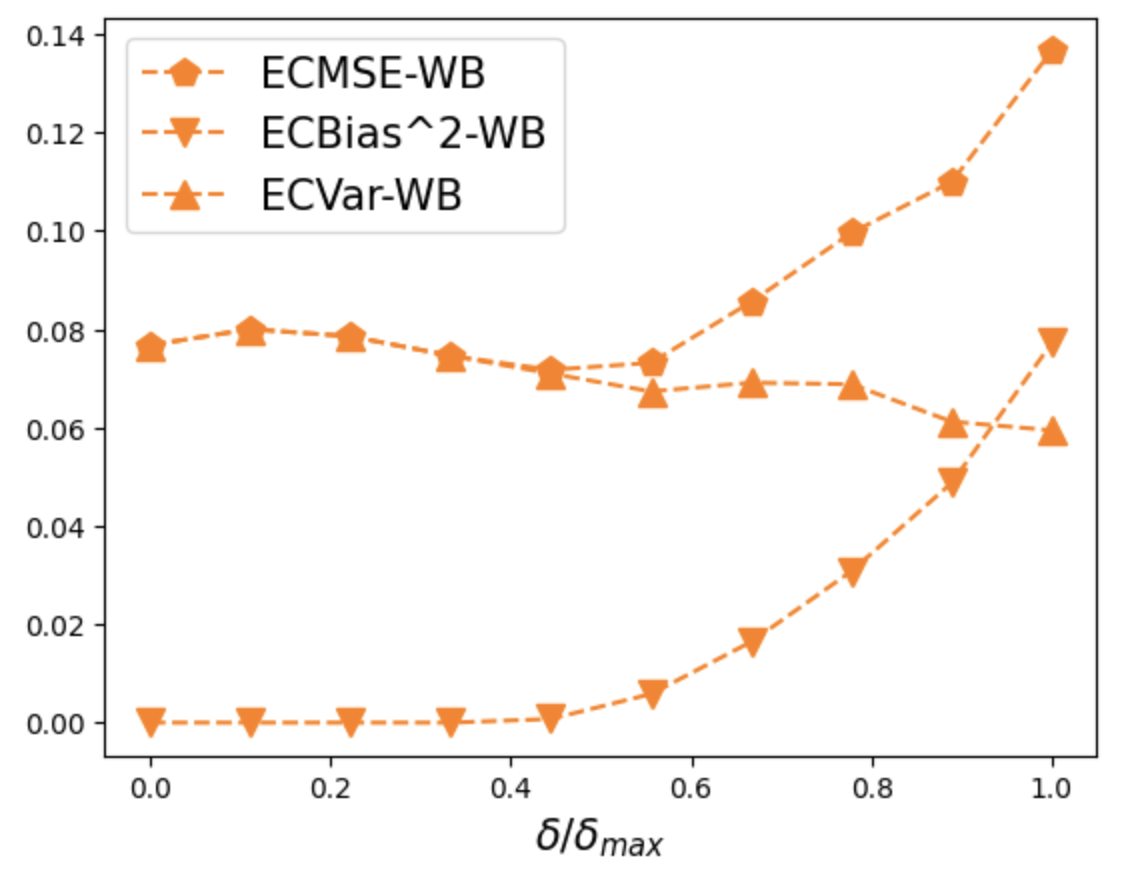}} \hspace{1mm}
	\subfigure[Missing at random]{\label{fig:real-mar-s-wb}
	\includegraphics[width=0.45\columnwidth]{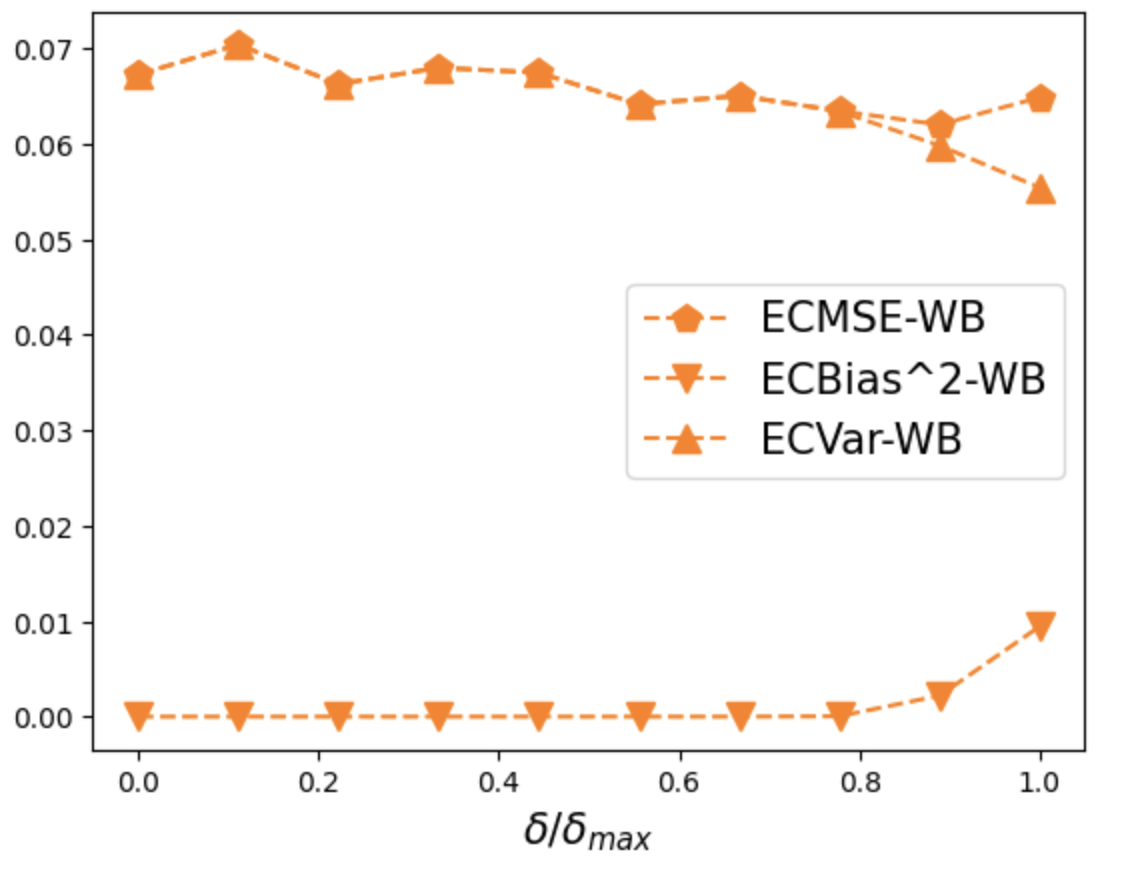}}
	    \hspace{1mm}
	\subfigure[Block missing]{\label{fig:real-block-s-wb} 
		\includegraphics[width=0.45\columnwidth]{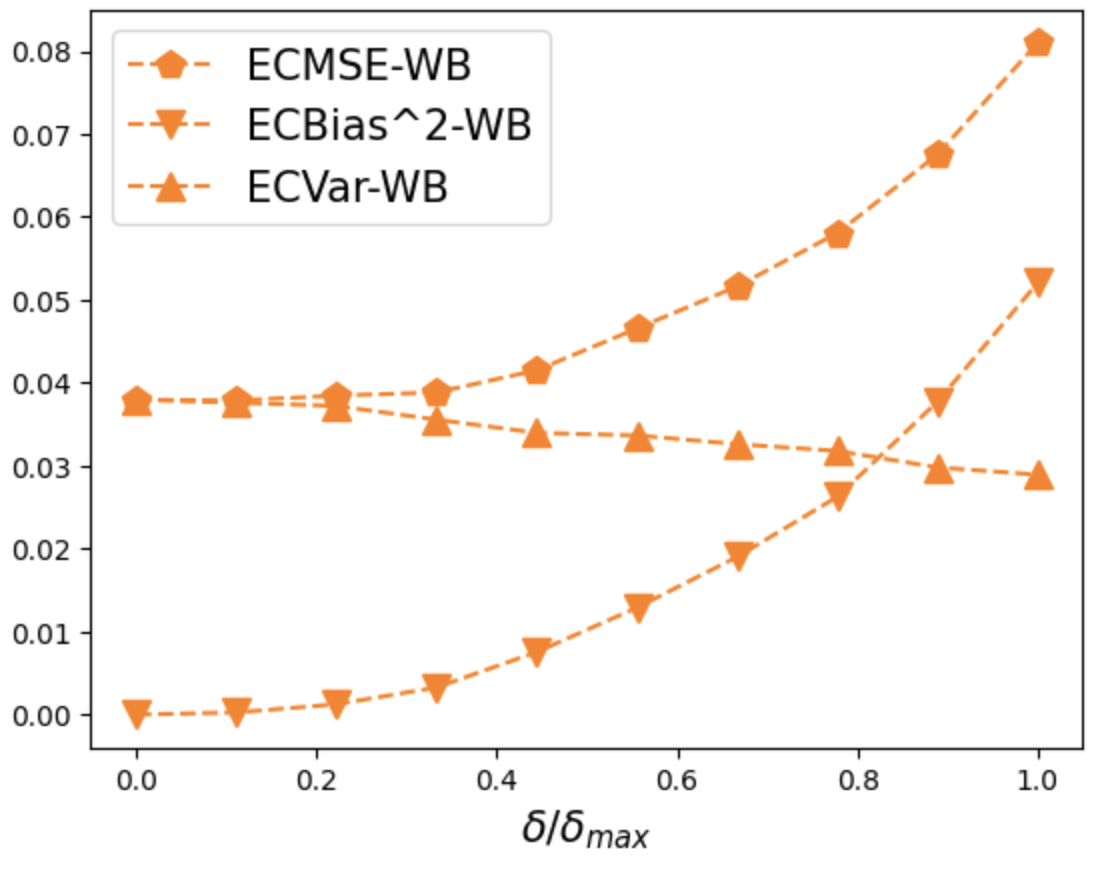}} 
		\subfigure[Missing by value]{\label{fig:real-nonrandom-s-wb} 
		\includegraphics[width=0.45\columnwidth]{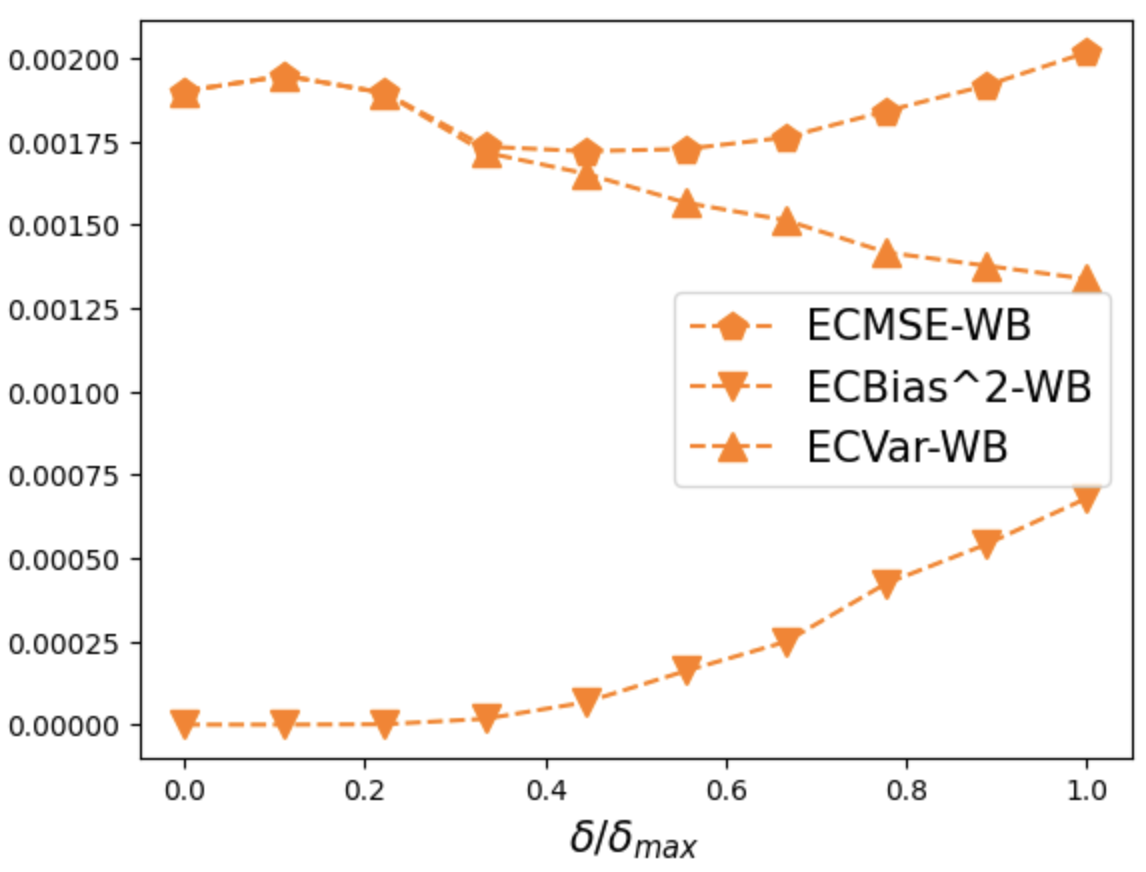}} 
\bnotefig{These figures show the look-ahead-bias measured by the regret ECBias$^2$, the variance ECVar and the expected conditional mean squared error ECMSE for individual stocks with the restricted Wasserstein mechanism for different weights $\delta$, which control the trade-off between look-ahead-bias and variance. The samples are daily returns of 10 representative large capitalization stocks from  01/01/2015 to 01/01/2017.}
    \label{fig:reals-wb}
\end{figure}

\subsection{Imputation of Industry Portfolio Returns}
\label{sec:appnumerics-industry}

\begin{figure}[H]
\tcapfig{Look-ahead-bias and variance with full Wasserstein mechanism for industry portfolios}
    \centering
    \subfigure[Missing completely at random]{\label{fig:real-mcar-17-wbfull} 
		\includegraphics[width=0.45\columnwidth]{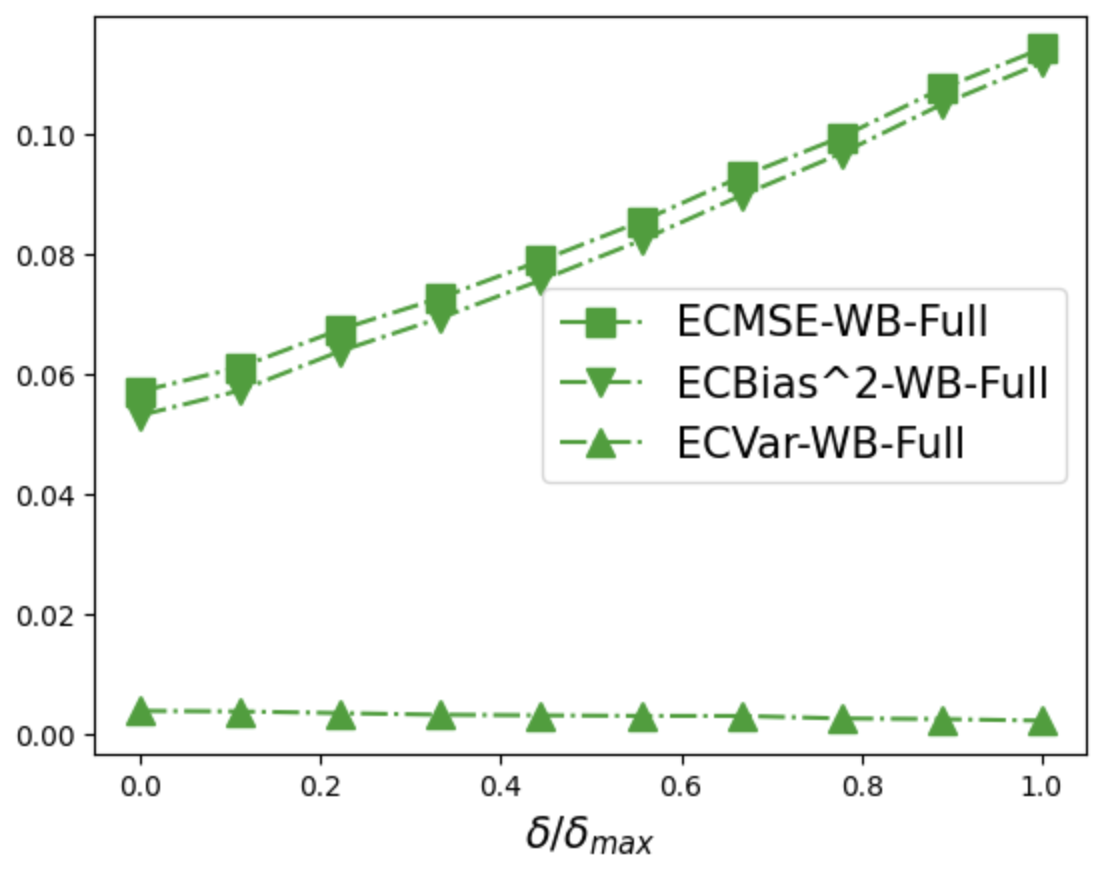}} \hspace{1mm}
	\subfigure[Missing at random]{\label{fig:real-mar-17-wbfull}
	\includegraphics[width=0.45\columnwidth]{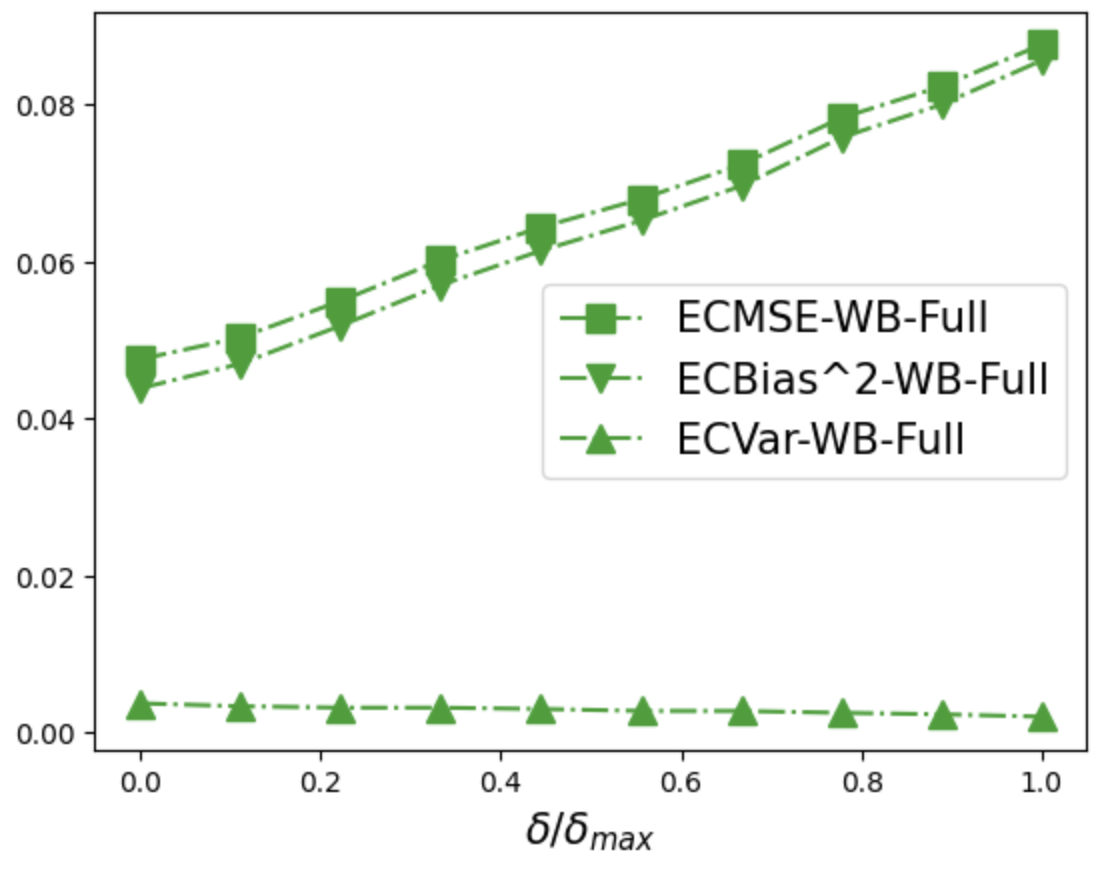}}
	    \hspace{1mm}
	\subfigure[Block missing]{\label{fig:real-block-17-wbfull} 
		\includegraphics[width=0.45\columnwidth]{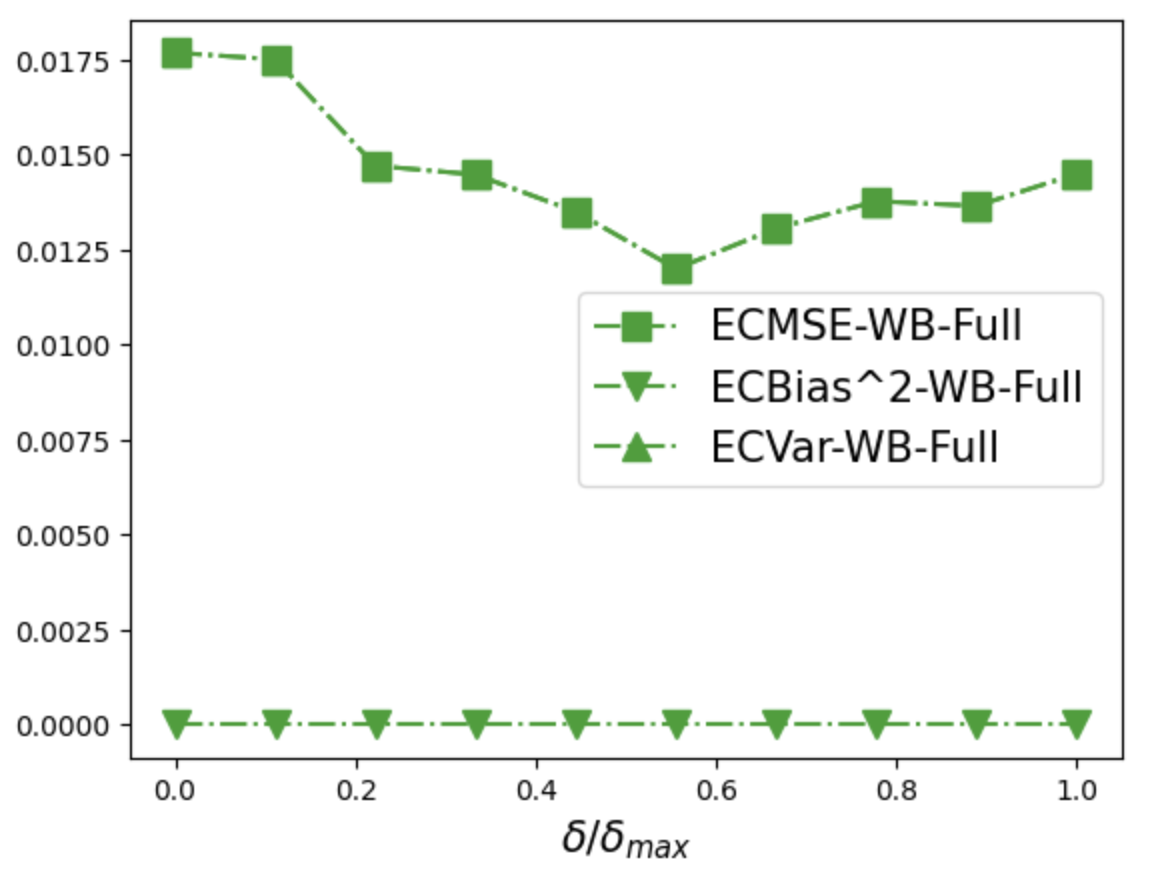}} 
		\subfigure[Missing by value]{\label{fig:real-nonrandom-17-wbfull} 
		\includegraphics[width=0.45\columnwidth]{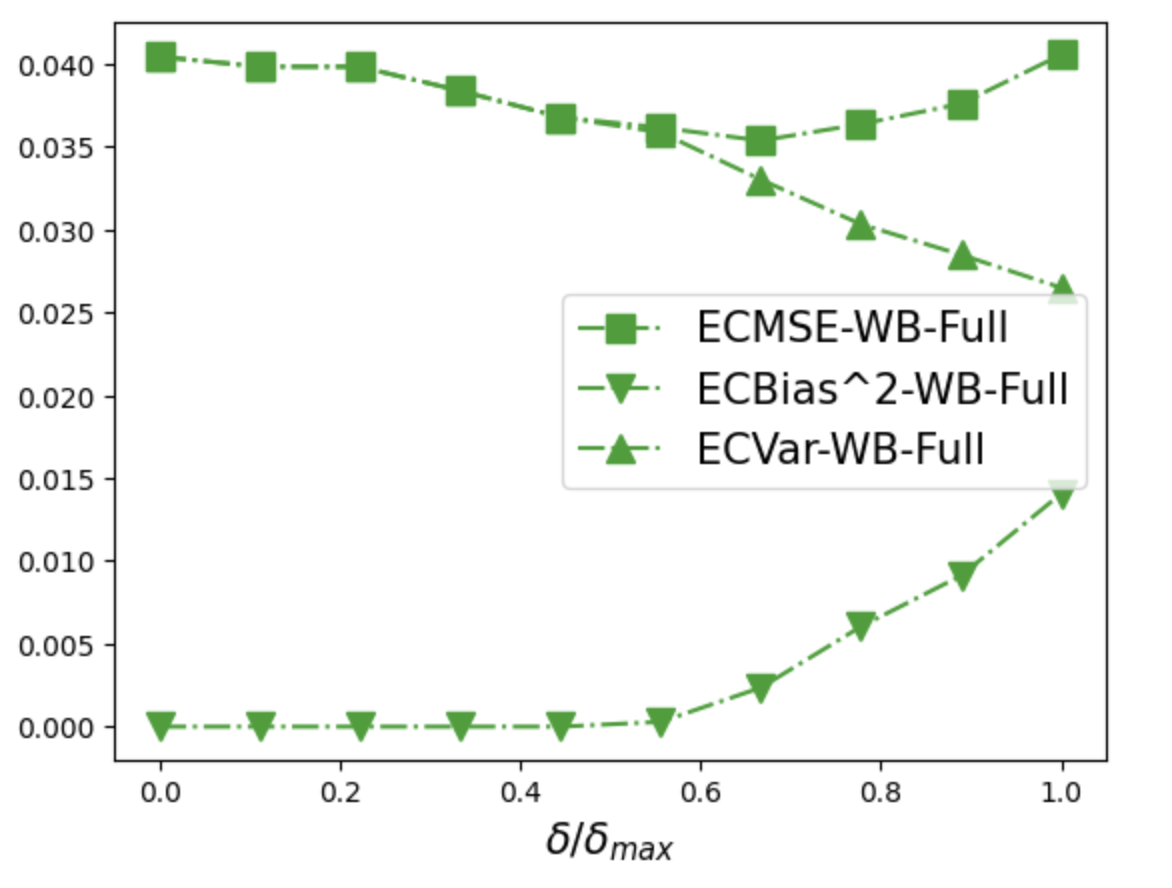}} 
		\bnotefig{These figures show the look-ahead-bias measured by the regret ECBias$^2$, the variance ECVar and the expected conditional mean squared error ECMSE for industry portfolios with the full Wasserstein mechanism for different weights $\delta$, which control the trade-off between look-ahead-bias and variance. The samples are daily returns of 10 industry portfolios from  01/03/2017 to 01/02/2019.}
    \label{fig:real17-wbfull}
\end{figure}

\begin{figure}[H]
\tcapfig{Look-ahead-bias and variance with restricted Wasserstein mechanism for industry portfolios}
    \centering
    \subfigure[Missing completely at random]{\label{fig:real-mcar-17-wb} 
		\includegraphics[width=0.45\columnwidth]{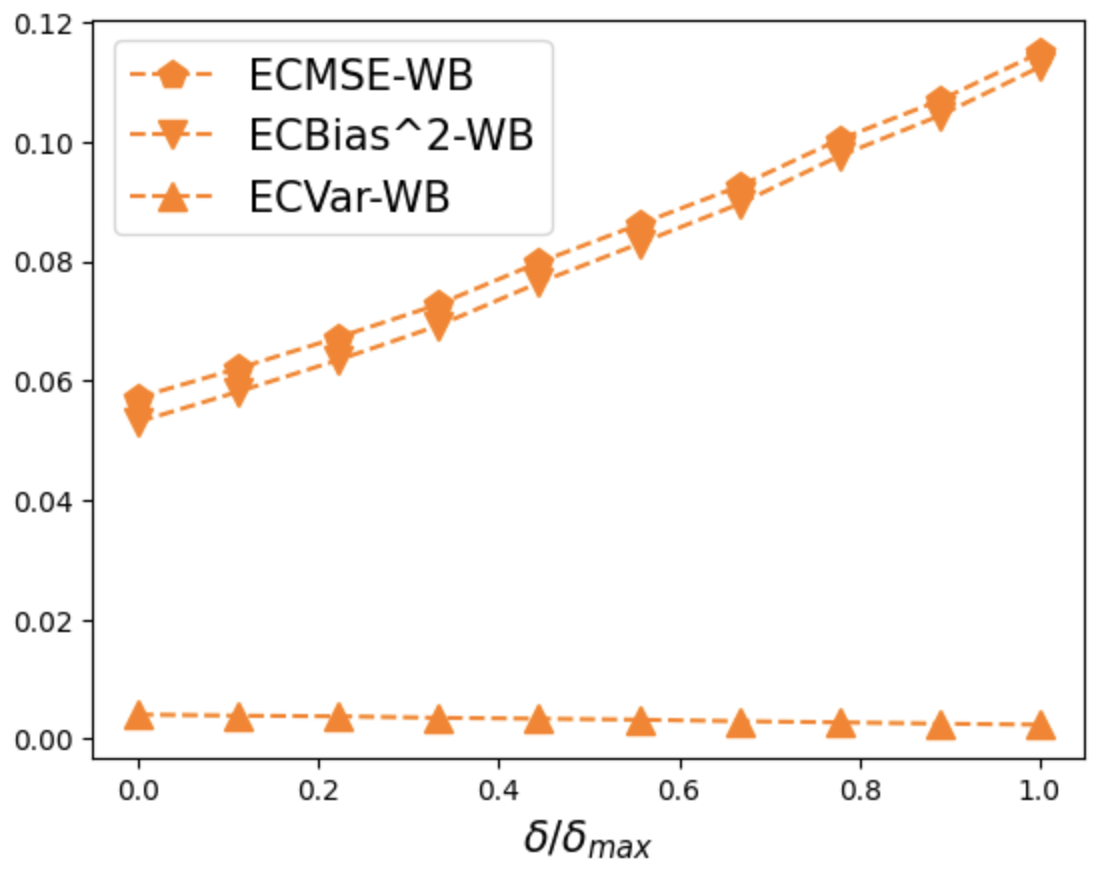}} \hspace{1mm}
	\subfigure[Missing at random]{\label{fig:real-mar-17-wb}
	\includegraphics[width=0.45\columnwidth]{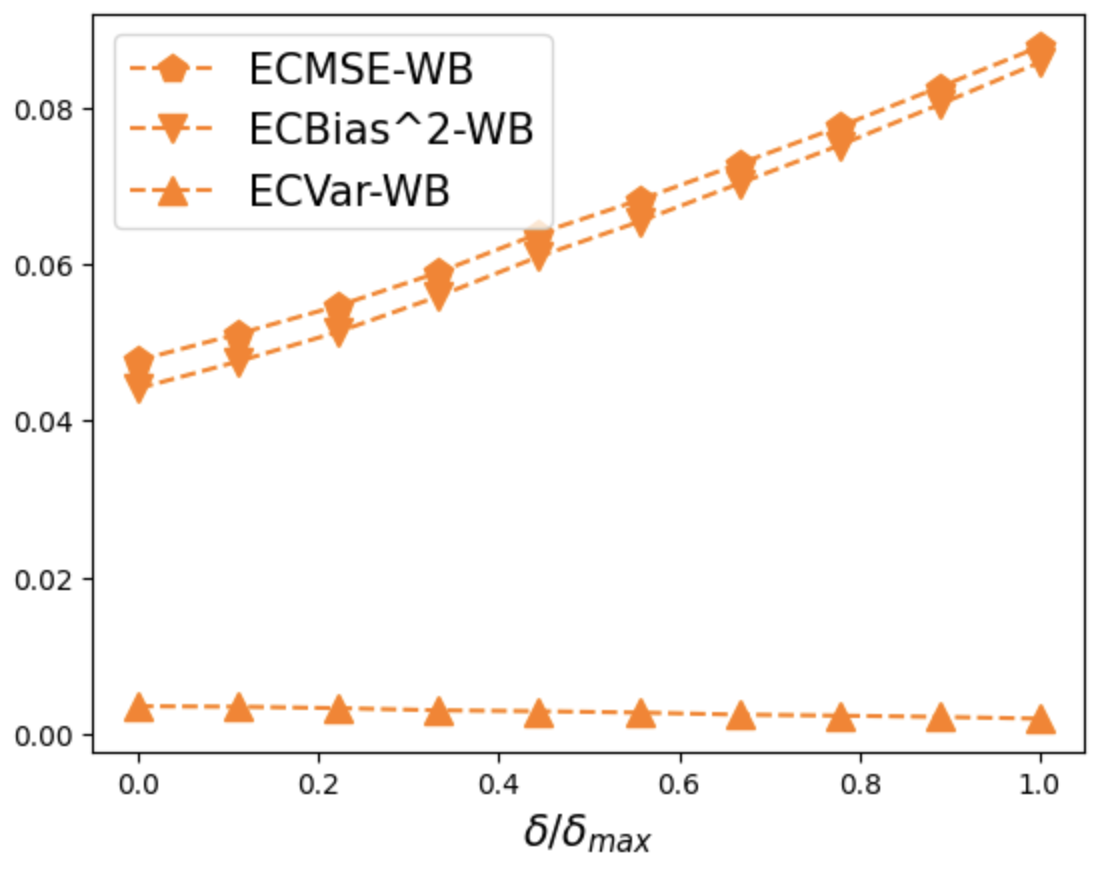}}
	    \hspace{1mm}
	\subfigure[Block missing]{\label{fig:real-block-17-wb} 
		\includegraphics[width=0.45\columnwidth]{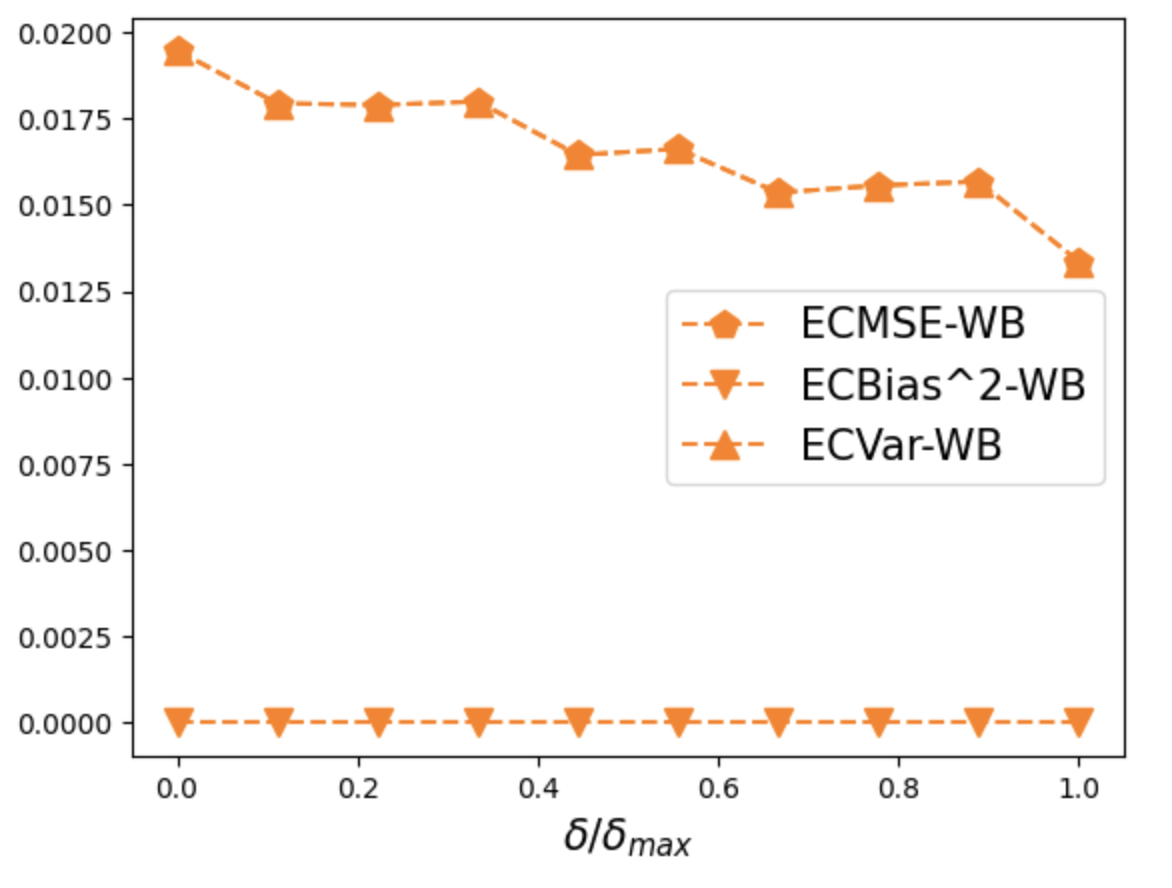}} 
		\subfigure[Missing by value]{\label{fig:real-nonrandom-17-wb} 
		\includegraphics[width=0.45\columnwidth]{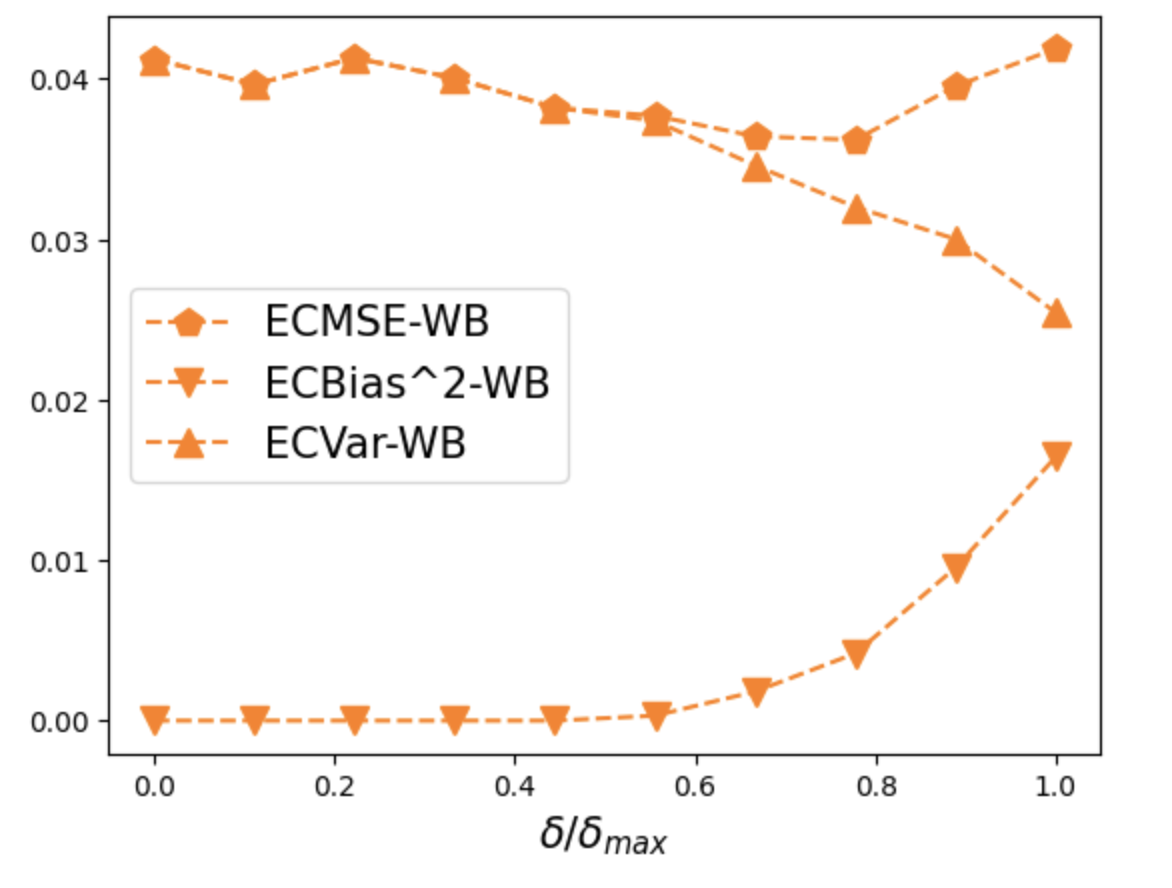}} 
		\bnotefig{These figures show the look-ahead-bias measured by the regret ECBias$^2$, the variance ECVar and the expected conditional mean squared error ECMSE for industry portfolios with the restricted Wasserstein mechanism for different weights $\delta$, which control the trade-off between look-ahead-bias and variance. The samples are daily returns of 10 industry portfolios from  01/03/2017 to 01/02/2019.}
    \label{fig:real17-wb}
\end{figure}

\end{document}